\newcommand{\qed}{\hfill $\blacksquare$}
\renewcommand{\baselinestretch}{1.2}
\newtheorem{theorem}{Theorem}[section]
\newenvironment{proof}[1][Proof]{\begin{trivlist}
\item[\hskip \labelsep {\bfseries #1.}]}{\end{trivlist}}
\newtheorem{proposition}[theorem]{Proposition}
\newtheorem{lemma}[theorem]{Lemma}
\newtheorem{corollary}[theorem]{Corollary}
\newtheorem{example}[theorem]{Example}
\newcounter{assump}
\renewcommand{\theassump}{(\Alph{assump})}
\newenvironment{assump}{%
	\refstepcounter{assump}\par\vskip\topsep%
	\noindent\textbf{Assumption~\theassump:}
}{\par\vskip\topskip}
\DeclareMathOperator{\argmin}{arg\,min}
\newcommand*\rel@kern[1]{\kern#1\dimexpr\macc@kerna}
\newcommand*\widebar[1]{%
  \begingroup
  \def\mathaccent##1##2{%
    \rel@kern{0.8}%
    \overline{\rel@kern{-0.8}\macc@nucleus\rel@kern{0.2}}%
    \rel@kern{-0.2}%
  }%
  \macc@depth\@ne
  \let\math@bgroup\@empty \let\math@egroup\macc@set@skewchar
  \mathsurround\z@ \frozen@everymath{\mathgroup\macc@group\relax}%
  \macc@set@skewchar\relax
  \let\mathaccentV\macc@nested@a
  \macc@nested@a\relax111{#1}%
  \endgroup
}
\DeclareMathOperator{\tr}{tr}				 
\DeclareMathOperator{\rank}{rank}	
\DeclareMathOperator{\diag}{diag}
\numberwithin{equation}{section}
\begin{document}
\title{High-dimensional estimation of quadratic variation \\ based on penalized realized variance\thanks{Christensen and Nielsen were supported by the Independent Research Fund Denmark under grant 1028--00030B and 9056--00011B. Podolskij acknowledges funding from the ERC Consolidator Grant 815703 ``STAMFORD: Statistical Methods for High Dimensional Diffusions.''}}

\author{Kim Christensen\thanks{Department of Economics and Business Economics, Aarhus University. (kim@econ.au.dk)}
\and Mikkel Slot Nielsen\thanks{Department of Mathematics, Aarhus University. (mikkel.slot.nielsen@gmail.com)}
\and Mark Podolskij\thanks{Department of Mathematics and Department of Finance, University of Luxembourg. (mark.podolskij@uni.lu)}}

\date{October, 2022}

\maketitle

\begin{abstract}
In this paper, we develop a penalized realized variance (PRV) estimator of the quadratic variation (QV) of a high-dimensional continuous It\^{o} semimartingale. We adapt the principle idea of regularization from linear regression to covariance estimation in a continuous-time high-frequency setting. We show that under a nuclear norm penalization, the PRV is computed by soft-thresholding the eigenvalues of realized variance (RV). It therefore encourages sparsity of singular values or, equivalently, low rank of the solution. We prove our estimator is minimax optimal up to a logarithmic factor. We derive a concentration inequality, which reveals that the rank of PRV is---with a high probability---the number of non-negligible eigenvalues of the QV. Moreover, we also provide the associated non-asymptotic analysis for the spot variance. We suggest an intuitive data-driven subsampling procedure to select the shrinkage parameter. Our theory is supplemented by a simulation study and an empirical application. The PRV detects about three--five factors in the equity market, with a notable rank decrease during times of distress in financial markets. This is consistent with most standard asset pricing models, where a limited amount of systematic factors driving the cross-section of stock returns are perturbed by idiosyncratic errors, rendering the QV---and also RV---of full rank.

\vspace*{0.5cm}

\bigskip \noindent \textbf{Keywords}: Bernstein's inequality; LASSO estimation; low rank estimation; quadratic variation; rank recovery; realized variance; shrinkage estimator.
\end{abstract}

\vfill

\thispagestyle{empty}

\pagebreak

\section{Introduction} \label{section:introduction} \setcounter{page}{1}

The covariance matrix of asset returns is a central component, which is required in several contexts in financial economics, such as portfolio composition, pricing of financial instruments, or risk management (e.g., \cite{andersen-bollerslev-diebold-labys:03a}). In the past few decades, estimation of quadratic variation (QV) from high-frequency data has been intensively studied in the field of econometrics. The standard estimator of QV, $\Sigma$, say over a window $[0,1]$, is the realized variance (RV) (e.g., \cite{andersen-bollerslev:98a, barndorff-nielsen-shephard:02a, barndorff-nielsen-shephard:04a,jacod:94a}). Given a decreasing sequence $(\Delta_{n})_{n \geq 1}$ with $\Delta_{n} \to 0$ and equidistant observations $(Y_{i \Delta_{n}})_{i = 0}^{ \lfloor \Delta_{n}^{-1} \rfloor}$ of a $d$-dimensional semimartingale $(Y_{t})_{t \in [0,1]}$, the RV is defined as
\begin{equation} \label{equation:realized-variance}
\widehat{\Sigma}_{n} = \sum_{k=1}^{ \lfloor \Delta_{n}^{-1} \rfloor} ( \Delta_{k}^{n} Y) ( \Delta_{k}^{n} Y)^{ \top},
\end{equation}
where $\Delta_{k}^{n} Y = Y_{k \Delta_{n}} - Y_{(k-1) \Delta_{n}}$ is the increment, and $^{ \top}$ denotes the transpose operator.
The asymptotic properties of $\widehat{\Sigma}_n$ are generally known, when the dimension $d$ is fixed (e.g., \cite{barndorff-nielsen-graversen-jacod-podolskij-shephard:06a, diop-jacod-todorov:13a, heiny-podolskij:20a, jacod:94a, jacod:08a}). In particular, for any semimartingale $\widehat{\Sigma}_n$ is by definition a consistent estimator of $\Sigma$.

When modeling the dynamics of a large financial market over a relatively short time horizon, one is often faced with an ill-posed inference problem, where the number of assets $d$ is comparable to (or even exceeding) the number of high-frequency data $\lfloor \Delta_{n}^{-1} \rfloor$ available to compute $\widehat{ \Sigma}_{n}$. This renders RV inherently inaccurate, and it also implies that $d$ cannot be treated as fixed in the asymptotic analysis. To recover an estimate of $\Sigma$ in such a high-dimensional setting, the covariance matrix has to display a more parsimonious structure.

Currently, the literature on high-dimensional high-frequency data, mostly based on a continuous semimartingale model for the security price process, is rather scarce. \citet*{wang-zou:10a} investigate optimal convergence rates for estimation of QV under sparsity constraints on its entries. \citet*{zheng-li:11a} apply random matrix theory to estimate the spectral eigenvalue distribution of QV in a one-factor diffusion setting. Related work based on random matrix theory and eigenvalue cleaning, or on approximating the high-dimensional system by a low-dimensional factor model, can be found in, e.g., \cite{cai-hu-li-zheng:20a, hautsch-kyj-oomen:12a, lunde-shephard-sheppard:16a}. \citet*{ait-sahalia-xiu:17a} study a high-dimensional factor model, where the eigenvalues of QV are assumed to be dominated by common components (spiked eigenvalue setting), and estimate the number of common factors under sparsity constraints on the idiosyncratic components.

In the context of joint modeling of the cross-section of asset returns, if a factor model is adopted,
one should expect a small number of eigenvalues of $\Sigma$ to be dominant (see also Section \ref{section:simulation}). In other settings, such as when the market is complete and some entries of $Y_{t}$ reflect prices on derivative contracts (spanned by the existing assets and therefore redundant), one should even expect the rank of the volatility to be strictly smaller than the dimension.
In such an instance, the rank of the covariance matrix is smaller than $d$, meaning that the intrinsic dimension of the system can be represented in terms of a driving Brownian motion of lower dimension. In general, identifying a low rank can, for instance, help with the economic interpretation of the model, or it may lighten the computational load related to its estimation or simulation. We refer to several recent studies on identification and estimation of the intrinsic dimension of the driving Brownian motion and eigenvalue analysis of QV in \cite{ait-sahalia-xiu:17a, ait-sahalia-xiu:19b, fissler-podolskij:17a,jacod-lejay-talay:08a,jacod-podolskij:13a, jacod-podolskij:18a}, see also \citet*{kong:17a,kong:20a} in the high-dimensional setting. A related contribution of \citet*{pelger:19a} incorporates a finite-activity jump process in the model.

In this paper, we develop a regularized estimator of $\Sigma$ called the penalized realized variance (PRV). We draw from the literature of shrinkage estimation in linear regression (e.g., \cite{hoerl-kennard:70a, tibshirani:96a}). In particular, we propose the following LASSO-type estimator of $\Sigma$ based on the RV in \eqref{equation:realized-variance}:
\begin{equation} \label{equation:lasso-estimation}
\widehat{\Sigma}_{n}^{ \lambda} = \argmin_{A \in \mathbb{R}^{d \times d}} \bigl( \lVert \widehat{\Sigma}_n - A \rVert_{2}^{2} + \lambda \lVert A \rVert_{1} \bigr).
\end{equation}
Here, $\lambda \geq 0$ is a tuning parameter that controls the degree of penalization, while $\lVert \: \cdot \: \rVert_{1}$ and $\lVert \: \cdot \: \rVert_{2}$ denote the nuclear and Frobenius matrix norm. It has been shown in various contexts that estimators based on nuclear norm regularization possess good statistical properties, such as having optimal rates of convergence and being of low rank. For instance, \citet*{koltchinskii-lounici-tsybakov:11a} study such properties in the trace regression model and, in particular, in the matrix completion problem, where one attempts to fill out missing entries of a partially observed matrix. \citet*{negahban-wainwright:11a} adopt a rather general observation model and, among other things, cover estimation in near low rank multivariate regression models and vector autoregressive processes. Further references in this direction include \cite{argyriou-evgeniou-pontil:08a, bach:08a, candes-recht:09a, recht-fazel-parrilo:10a}. While previous papers on nuclear norm penalization are solely dedicated to discrete-time models, the current work is---to the best of our knowledge---the first to study this problem in a continuous-time It\^{o} setting. This implies a number of subtle technical challenges, since the associated high-frequency data are dependent and potentially non-stationary.

We provide a complete non-asymptotic theory for the PRV estimator in \eqref{equation:lasso-estimation} by deriving bounds for the Frobenius norm of its error. We show that it is minimax optimal up to a logarithmic factor. The derivation relies on advanced results from stochastic analysis combined with a Bernstein-type inequalities presented in Theorems \ref{theorem:bernstein} and  \ref{applyingBernstein}, which constitute some of the major theoretical contributions of the paper. The latter builds upon recent literature on concentration inequalities for matrix-valued random variables, but it is more general as the variables are allowed to be both dependent and unbounded. We discuss the necessary conditions on $d$ and $\Delta_n$ to ensure consistency results.
We further show that in a ``local-to-deficient'' rank setting, where some eigenvalues are possibly decreasing as the dimension of the system increases, the penalized estimator in \eqref{equation:lasso-estimation} identifies the number of non-negligible eigenvalues of $\Sigma$ with high probability.

The estimator depends on a tuning parameter $\lambda$. We exploit the subsampling approach of \citet*{christensen-podolskij-thamrongrat-veliyev:17a} to choose a data-driven amount of shrinkage in the implementation. We also provide a related theoretical analysis for the local volatility, which is more likely to exhibit a lower rank than QV (the latter is only expected to have a near low rank in many settings).

In the empirical analysis, we look at the 30 members of the Dow Jones Industrial Average index to demonstrate that our results are consistent with a standard Fama--French three-factor structure, and one may expect a lower number of factors during crisis.

The paper proceeds as follows. In Section \ref{section:theoretical}, we  establish concentration inequalities for the estimation error $\lVert \widehat{\Sigma}^\lambda_n - \Sigma \rVert_{2}$ and show  minimax optimality up to a logarithmic factor. In Section \ref{IdentifyRank}, we present sufficient conditions to ensure that the rank of $\widehat{\Sigma}^\lambda_n$ coincides with the number of non-negligible eigenvalues of $\Sigma$ with high probability. In Section \ref{section:tuning-parameter}, we show how the tuning parameter $\lambda$ can be selected with a data-driven approach. In Section \ref{section:local-volatility}, the associated non-asymptotic theory for estimation of the instantaneous variance is developed. In Section \ref{section:simulation}, we design a simulation study to demonstrate the ability of our estimator to identify a low-dimensional factor model through $\rank ( \widehat{ \Sigma}_{n}^{ \lambda})$. In Section \ref{section:empirical}, we implement the estimator on empirical high-frequency data from the 30 members of the Dow Jones Industrial Average index. The proofs are included in the appendix.

\subsection*{Notation} \label{section:notation}

This paragraph introduces some notation used throughout the paper. For a vector or a matrix $A$ the transpose of $A$ is denoted by $A^{ \top}$.  We heavily employ the fact that any real matrix $A \in \mathbb{R}^{m \times q}$ admits a singular value decomposition (SVD) of the form:
\begin{equation}
A = \sum_{k=1}^{m \wedge q} s_{k} u_{k} v_{k}^{ \top}
\end{equation}
where $s_{1} \geq \cdots \geq s_{m \wedge q}$ are the singular values of $A$, and $\{u_1,\dots, u_{m\wedge q}\}$ and $\{v_1,\dots, v_{m\wedge q}\}$ are orthonormal vectors. If $m = q$ and $A$ is symmetric and positive semidefinite, its singular values coincide with its eigenvalues and the SVD is the orthogonal decomposition (meaning that one can take $v_{k} = u_{k}$). Sometimes we write $s_{k}(A)$, $v_{k}(A)$ and $u_{k}(A)$ to explicitly indicate the dependence on the matrix $A$.
The rank of $A$ is denoted by $\rank(A)$, whereas
\begin{equation} \label{approxRank}
\rank(A; \varepsilon)= \sum_{k=1}^{m\wedge q} \mathds{1}_{[ \varepsilon, \infty)}(s_{k})
\end{equation}
is the number of singular values of $A$ exceeding a certain threshold $\varepsilon \in [0, \infty)$. Note that $\varepsilon \mapsto \rank (A; \varepsilon)$ is non-increasing on $[0,\infty)$, and c\`{a}dl\`{a}g and piecewise constant on $(0,\infty)$. Furthermore, $\rank (A;0) = m\wedge q$ and $\lim_{ \varepsilon \downarrow 0} \rank (A;\varepsilon) = \rank(A)$.
We denote by $\lVert A \rVert_{p}$ the \textit{Schatten $p$-norm} of $A \in \mathbb{R}^{m \times q}$, i.e.,
\begin{equation} \label{equation:schatten-norm}
\lVert A \rVert_{p} = \Bigl( \sum_{k=1}^{m \wedge q} s_{k}^p\Bigr)^{1/p}
\end{equation}
for $p \in (0,\infty)$. Moreover, $\lVert A \rVert_{ \infty}$ denotes the maximal singular value of $A$, i.e., $\lVert A\rVert_{\infty}= s_{1}$.
In particular, in this paper we work with $p=1$, $p=2$, and $p = \infty$ corresponding to the nuclear, Frobenius, and spectral norm. The Frobenius norm is also induced by the inner product $\langle A,B \rangle = \tr (A^{ \top} B)$ and we have the trace duality
\begin{equation}
\langle A, B \rangle \leq \lVert A \rVert_{1} \lVert B \rVert_\infty.
\end{equation}
For a linear subspace $S \subseteq \mathbb{R}^m$, $S^{\perp}$ denotes the linear space orthogonal to $S$ and $P_{S}$ stands for the projection onto $S$.

Given two sequences $(a_{n})_{n \geq 1}$ and $(b_{n})_{n \geq 1}$ in $(0, \infty)$, we write $a_{n} = o(b_{n})$ if $\lim_{n \rightarrow \infty} a_{n}/b_{n} = 0$ and $a_{n} = O (b_{n})$ if $\limsup_{n \rightarrow \infty} a_{n}/b_{n}$ is bounded. Furthermore, if both $a_{n} = O(b_{n})$ and $b_{n} = O(a_{n})$, we write $a_{n} \asymp b_{n}$.
Finally, to avoid trivial cases we assume throughout the paper that the dimension $d$ of the process $(Y_t)_{t\in [0,1]}$ is at least two.

\section{Theoretical framework} \label{section:theoretical}

We suppose a $d$-dimensional stochastic process $Y_{t} = (Y_{1,t}, \dots, Y_{d,t})^{ \top}$ is defined on a filtered probability space $(\Omega, \mathcal{F}, ( \mathcal{F}_{t})_{t \in [0,1]}, \mathbb{P})$. In our context, $Y_{t}$ is a time $t$ random vector of log-prices of financial assets, which are traded in an arbitrage-free, frictionless market and therefore semimartingale (e.g.,  \cite{delbaen-schachermayer:94a}). In particular, we assume $(Y_t)_{t \in [0,1]}$ is a continuous It\^{o} semimartingale, which can be represented by the stochastic differential equation:
\begin{equation} \label{equation:diffusion}
\mathrm{d}Y_{t} = \mu_{t} \mathrm{d}t + \sigma_{t} \mathrm{d} W_{t}, \qquad t \in [0,1],
\end{equation}
where $(W_{t})_{t \in [0,1]}$ is an adapted $d$-dimensional standard Brownian motion, $( \mu_{t})_{t \in [0,1]}$ is a progressively measurable and locally bounded drift with values in $\mathbb{R}^{d}$, and $(\sigma_{t})_{t \in [0,1]}$ is a predictable and locally bounded stochastic volatility with values in $\mathbb{R}^{d \times d}$. In the above setting the QV (or integrated variance) is $\Sigma = \int_{0}^{1} c_{t} \mathrm{d}t$, where $c_t \equiv \sigma_{t} \sigma_{t}^{ \top}$. We remark that the length $T = 1$ of the estimation window $[0,T]$ is fixed for expositional convenience and is completely without loss of generality, since a general $T$ can always be reduced to the unit interval via a suitable time change.

Note that we exclude a jump component from \eqref{equation:diffusion}. It should be possible to extend our analysis and allow for at least finite-activity jump processes. To do so, we can replace the RV with its truncated version, following the ideas of \citet*{mancini:09a}. We leave the detailed analysis of more general jump processes for future work. Apart from these restrictions, our model is essentially nonparametric, as it allows for almost arbitrary forms of random drift, volatility, and leverage.

\subsection{Upper bounds for the PRV} \label{section:concentration}

The goal of this section is to derive a sharp upper bound on the estimation error $\lVert \widehat{ \Sigma}_{n}^{ \lambda} - \Sigma \rVert_{2}$ that apply with high probability, where the PRV estimator $\widehat{ \Sigma}_{n}^{ \lambda}$ has been defined in \eqref{equation:lasso-estimation}. We first show that the PRV can alternatively be found by soft-thresholding of the eigenvalues in the orthogonal decomposition of $\widehat{ \Sigma}_{n}$.

\begin{proposition} \label{proposition:soft-thresholding}
	Let $\widehat{ \Sigma}_{n} = \sum_{k=1}^{d} s_{k} u_{k}u_{k}^{ \top}$ be the orthogonal decomposition of $\widehat{ \Sigma}_{n}$. Then, the unique solution $\widehat{ \Sigma}_{n}^{ \lambda}$ to \eqref{equation:lasso-estimation} is given by
	\begin{equation}\label{RVthreshold}
	\widehat{ \Sigma}_{n}^{ \lambda} = \sum_{k=1}^{d} \max\bigg\{s_{k} - \frac{ \lambda}{2},0 \bigg\} u_{k} u_{k}^{ \top}.
	\end{equation}
\end{proposition}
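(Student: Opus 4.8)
The plan is to reduce the matrix optimization problem to a scalar one using the structure of the nuclear and Frobenius norms under orthogonal transformations. First I would note that both $\|\cdot\|_1$ and $\|\cdot\|_2$ are unitarily invariant, and moreover for any $A$ the inner product term behaves nicely: writing the objective as $\|\widehat\Sigma_n\|_2^2 - 2\langle \widehat\Sigma_n, A\rangle + \|A\|_2^2 + \lambda\|A\|_1$, the only term coupling $A$ to $\widehat\Sigma_n$ is $\langle \widehat\Sigma_n, A\rangle = \langle \sum_k s_k u_k u_k^\top, A\rangle$. By von Neumann's trace inequality, $\langle \widehat\Sigma_n, A\rangle \le \sum_k s_k \sigma_k(A)$, where $\sigma_1(A)\ge\cdots\ge\sigma_d(A)$ are the singular values of $A$, with equality when $A$ shares the eigenvectors $u_k$ of $\widehat\Sigma_n$ (aligned in the same order). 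Since $\|A\|_2^2 = \sum_k \sigma_k(A)^2$ and $\|A\|_1 = \sum_k \sigma_k(A)$ depend on $A$ only through its singular values, the whole objective is minimized by choosing $A = \sum_k a_k u_k u_k^\top$ for scalars $a_1,\dots,a_d$ and then optimizing over the $a_k$. (One must be slightly careful that the optimal $a_k$ turn out nonnegative, so that $\sigma_k(A)=a_k$; this will fall out of the scalar computation below.)

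Next I would carry out the scalar optimization. With $A=\sum_k a_k u_k u_k^\top$, the objective becomes (dropping the constant $\|\widehat\Sigma_n\|_2^2$) $\sum_{k=1}^d \big( a_k^2 - 2 s_k a_k + \lambda |a_k| \big)$, which decouples into $d$ independent one-dimensional problems $\min_{a\in\mathbb R}\, f_k(a)$ with $f_k(a) = a^2 - 2 s_k a + \lambda|a|$. Since $s_k\ge 0$ (as $\widehat\Sigma_n$ is positive semidefinite, being a sum of rank-one PSD matrices) and $\lambda\ge 0$, the minimizer of $f_k$ is nonnegative: for $a\ge 0$, $f_k(a) = a^2 - (2s_k-\lambda)a$, minimized at $a = \max\{s_k - \lambda/2, 0\}$; for $a\le 0$ the function is increasing away from $0$. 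Thus the minimizer is $a_k^\star = \max\{s_k - \lambda/2, 0\} \ge 0$, which confirms the consistency of the singular-value identification used above, and yields exactly \eqref{RVthreshold}.

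Finally I would establish uniqueness. The map $A\mapsto \|\widehat\Sigma_n - A\|_2^2$ is strictly convex (the Frobenius norm squared is strictly convex on $\mathbb R^{d\times d}$), and $A\mapsto\lambda\|A\|_1$ is convex, so the objective in \eqref{equation:lasso-estimation} is strictly convex and admits at most one minimizer; combined with existence (coercivity of the objective, since $\|A\|_2^2$ dominates), the minimizer is unique and must be the one exhibited. The main obstacle is the reduction step: one needs von Neumann's trace inequality together with the observation that the equality case can be attained \emph{simultaneously} with the constraint that the remaining terms depend only on singular values — that is, restricting to $A$ diagonal in the eigenbasis of $\widehat\Sigma_n$ loses nothing. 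Everything after that is the routine soft-thresholding computation. An alternative, perhaps cleaner, route that avoids von Neumann is to directly verify the subgradient optimality condition $0 \in -2(\widehat\Sigma_n - A) + \lambda\,\partial\|A\|_1$ at the proposed $A$ in \eqref{RVthreshold}, using the known characterization of the subdifferential of the nuclear norm in terms of the SVD; I would likely present whichever is shorter.
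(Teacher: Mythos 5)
Your proposal is correct, but it takes a genuinely different route from the paper. The paper's proof works directly with the first-order optimality condition: it invokes Watson's characterization of the subdifferential of the nuclear norm at a point given by its SVD, writes out $\partial L_n(A)$ for the strictly convex objective $L_n$, and exhibits an explicit $W$ with $\lVert W\rVert_\infty\le 1$ showing $0\in\partial L_n(\widehat\Sigma_n^\lambda)$ at the soft-thresholded candidate --- i.e., exactly the ``alternative route'' you mention in your last sentence. Your main argument instead reduces the matrix problem to $d$ decoupled scalar problems: von Neumann's trace inequality bounds the coupling term $\langle\widehat\Sigma_n,A\rangle$ by $\sum_k s_k\sigma_k(A)$, the remaining terms depend on $A$ only through its singular values, and the resulting lower bound is attained by a matrix diagonal in the eigenbasis of $\widehat\Sigma_n$, after which the one-dimensional soft-thresholding computation (with the minimizers automatically nonnegative and decreasingly ordered, so no constraint is violated) gives \eqref{RVthreshold}; strict convexity of $A\mapsto\lVert\widehat\Sigma_n-A\rVert_2^2$ plus coercivity yields uniqueness, just as in the paper. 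Your route is more elementary and self-contained (no subdifferential calculus for unitarily invariant norms, only von Neumann plus one-variable calculus) and makes the ``shrink the eigenvalues'' structure transparent; the paper's subgradient verification is shorter once Watson's formula is granted and is the standard template reused later (e.g., in the proof of Proposition \ref{boundOn}), which is presumably why the authors chose it.
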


\noindent Proposition \ref{proposition:soft-thresholding} is standard (see \cite{koltchinskii-lounici-tsybakov:11a}), but we state it explicitly for completeness.
The interpretation of the PRV representation in \eqref{RVthreshold} is that all ``significant'' eigenvalues of $\widehat{ \Sigma}_{n}$ are shrunk by $\lambda/2$, while the smallest ones are set to 0. Hence, we only retain the principal eigenvalues in the orthogonal decomposition of $\widehat{ \Sigma}_{n}$. The PRV can therefore account for a (near) low rank constraint.
In the next proposition, we present a general non-probabilistic oracle inequality for the performance of $\widehat{ \Sigma}_{n}^{ \lambda}$.
\begin{proposition} \label{boundOn} Assume that $2 \lVert  \widehat{ \Sigma}_{n} - \Sigma \rVert_{ \infty} \leq \lambda$. Then, it holds that
	\begin{equation} \label{boundOn:relation}
	\lVert \widehat{ \Sigma}_{n}^{ \lambda} - \Sigma \rVert_{2}^{2} \leq \inf_{A \in \mathbb{R}^{d \times d}} \Big( \lVert A - \Sigma \rVert_{2}^{2} + \min \big\{2 \lambda \lVert A \rVert_{1}, 3 \lambda^{2} \rank(A) \big\} \Big).
	\end{equation}
	In particular, $\lVert \widehat{ \Sigma}_{n}^{ \lambda} - \Sigma \rVert_{2}^{2} \leq \min \big\{2 \lambda \lVert \Sigma \rVert_{1}, 3 \lambda^{2} \rank(\Sigma) \big\}$.
\end{proposition}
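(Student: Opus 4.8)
The plan is to follow the standard LASSO oracle-inequality argument, adapted to the matrix (nuclear-norm) setting, exploiting the soft-thresholding characterization from Proposition \ref{proposition:soft-thresholding} only implicitly, and instead working directly from the optimality of $\widehat{\Sigma}_n^\lambda$ as a minimizer of the penalized objective. Write $F(A) = \lVert \widehat\Sigma_n - A\rVert_2^2 + \lambda\lVert A\rVert_1$. Since $\widehat\Sigma_n^\lambda$ minimizes $F$, we have $F(\widehat\Sigma_n^\lambda) \le F(A)$ for every competitor $A \in \mathbb{R}^{d\times d}$. Expanding both sides and rearranging the squared Frobenius terms via the identity $\lVert \widehat\Sigma_n - B\rVert_2^2 = \lVert \widehat\Sigma_n - \Sigma\rVert_2^2 + \lVert \Sigma - B\rVert_2^2 + 2\langle \widehat\Sigma_n - \Sigma, \Sigma - B\rangle$ applied with $B = \widehat\Sigma_n^\lambda$ and with $B = A$, one obtains the basic inequality
\begin{equation*}
\lVert \widehat\Sigma_n^\lambda - \Sigma\rVert_2^2 \le \lVert A - \Sigma\rVert_2^2 + 2\langle \widehat\Sigma_n - \Sigma, \widehat\Sigma_n^\lambda - A\rangle + \lambda\lVert A\rVert_1 - \lambda\lVert\widehat\Sigma_n^\lambda\rVert_1.
\end{equation*}
The first key step is to control the cross term $2\langle \widehat\Sigma_n - \Sigma, \widehat\Sigma_n^\lambda - A\rangle$ using trace duality: it is bounded by $2\lVert\widehat\Sigma_n - \Sigma\rVert_\infty \lVert \widehat\Sigma_n^\lambda - A\rVert_1 \le \lambda\lVert\widehat\Sigma_n^\lambda - A\rVert_1$ under the hypothesis $2\lVert\widehat\Sigma_n - \Sigma\rVert_\infty \le \lambda$. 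By the triangle inequality $\lVert\widehat\Sigma_n^\lambda - A\rVert_1 \le \lVert\widehat\Sigma_n^\lambda\rVert_1 + \lVert A\rVert_1$, and combining with the penalty terms above kills the $-\lambda\lVert\widehat\Sigma_n^\lambda\rVert_1$ contribution and leaves $\lVert\widehat\Sigma_n^\lambda - \Sigma\rVert_2^2 \le \lVert A - \Sigma\rVert_2^2 + 2\lambda\lVert A\rVert_1$, which is the first branch of the minimum.

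For the second branch $3\lambda^2\rank(A)$, the argument is more delicate and is where the main work lies. Here I would not bound $\lVert\widehat\Sigma_n^\lambda - A\rVert_1$ crudely by the triangle inequality, but instead decompose it relative to the row and column spaces of $A$. Let $r = \rank(A)$ and let $\mathcal{P}$ denote the subspace of matrices whose row space is contained in the row space of $A$ and whose column space is contained in the column space of $A$; write $B' = P_{\mathcal P}(B)$ and $B'' = B - B'$ for the corresponding decomposition of any matrix $B$. The crucial algebraic facts are: (i) for a matrix $\Delta := \widehat\Sigma_n^\lambda - A$ one has $\lVert A\rVert_1 - \lVert A + \Delta\rVert_1 \le \lVert\Delta'\rVert_1 - \lVert\Delta''\rVert_1$ (a consequence of the fact that $A$ and $\Delta''$ have orthogonal row and column spaces, so $\lVert A + \Delta''\rVert_1 = \lVert A\rVert_1 + \lVert\Delta''\rVert_1$, plus the triangle inequality); and (ii) $\lVert\Delta'\rVert_1 \le \sqrt{2r}\,\lVert\Delta'\rVert_2 \le \sqrt{2r}\,\lVert\Delta\rVert_2$, since $\Delta'$ has rank at most $2r$. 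Feeding the basic inequality through these estimates, writing $\lVert\widehat\Sigma_n^\lambda - A\rVert_1 = \lVert\Delta'\rVert_1 + \lVert\Delta''\rVert_1$ and absorbing, gives $\lVert\widehat\Sigma_n^\lambda - \Sigma\rVert_2^2 \le \lVert A - \Sigma\rVert_2^2 + 2\lambda\lVert\Delta'\rVert_1 \le \lVert A - \Sigma\rVert_2^2 + 2\lambda\sqrt{2r}\,\lVert\Delta\rVert_2$, and then $\lVert\Delta\rVert_2 \le \lVert\widehat\Sigma_n^\lambda - \Sigma\rVert_2 + \lVert A - \Sigma\rVert_2$. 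Setting $x = \lVert\widehat\Sigma_n^\lambda - \Sigma\rVert_2$ and $y = \lVert A - \Sigma\rVert_2$ yields a quadratic inequality $x^2 \le y^2 + 2\lambda\sqrt{2r}\,(x+y)$; solving it and using $2ab \le a^2 + b^2$ generously produces $x^2 \le y^2 + c\lambda^2 r$ for a constant $c$, and a careful bookkeeping of the constants gives exactly $3\lambda^2\rank(A)$.

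Taking the infimum over $A$ in the smaller of the two bounds gives \eqref{boundOn:relation}. The final assertion follows by choosing $A = \Sigma$, for which $\lVert A - \Sigma\rVert_2 = 0$. The main obstacle, as indicated, is step (i)–(ii): getting the sharp constant $3$ (rather than some larger absolute constant) requires being economical in the quadratic-inequality manipulation — in particular splitting the cross term $2\lambda\sqrt{2r}\,x$ as $\tfrac12 x^2 + 2\lambda^2 r$ rather than symmetrically, so that the $\tfrac12 x^2$ can be absorbed into the left side and one is left tracking only the $\lambda^2 r$ coefficient. The row/column-space decomposition and the rank bound $\rank(\Delta') \le 2r$ are routine once set up, but stating them cleanly for non-symmetric $A$ (even though $\widehat\Sigma_n^\lambda$ and $\Sigma$ are symmetric) is the part that needs care.
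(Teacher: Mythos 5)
Your first branch is fine: the bound $\lVert \widehat{\Sigma}_n^\lambda-\Sigma\rVert_2^2\le \lVert A-\Sigma\rVert_2^2+2\lambda\lVert A\rVert_1$ follows exactly as you say, and this matches the paper's argument. The genuine gap is in the rank branch. Your decomposition facts (i)--(ii) are correct (note, though, that $\lVert\Delta\rVert_1=\lVert\Delta'\rVert_1+\lVert\Delta''\rVert_1$ is only ``$\le$'', which is all you need), but the endgame fails: from the pure objective-value comparison $L_n(\widehat{\Sigma}_n^\lambda)\le L_n(A)$ you arrive at $x^2\le y^2+2\lambda\sqrt{2r}\,(x+y)$ with $x=\lVert\widehat{\Sigma}_n^\lambda-\Sigma\rVert_2$, $y=\lVert A-\Sigma\rVert_2$, and this inequality does \emph{not} imply $x^2\le y^2+3\lambda^2 r$, nor $x^2\le y^2+c\lambda^2 r$ for any absolute $c$. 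Completing the square only gives $x\le y+2\lambda\sqrt{2r}$, hence $x^2\le y^2+4\lambda\sqrt{2r}\,y+8\lambda^2 r$, and the linear-in-$y$ cross term cannot be removed without inflating the coefficient of $y^2$ above one (take $r=1$, $\lambda=1$, $y=10$: the quadratic constraint permits $x\approx 12.8$, so $x^2\approx 164$ while $y^2+3=103$). Your proposed split $2\lambda\sqrt{2r}\,x\le\tfrac12x^2+\mathrm{const}\cdot\lambda^2 r$ absorbs part of $x^2$ on the left and therefore at best yields an oracle inequality with leading constant strictly larger than one in front of $\lVert A-\Sigma\rVert_2^2$ (and, for $A=\Sigma$, only the weaker bound $8\lambda^2\rank(\Sigma)$), not the stated \eqref{boundOn:relation}.

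The missing ingredient is to use more than minimality of the objective value. The paper invokes the first-order optimality condition at $\widehat{\Sigma}_n^\lambda$, combined with Watson's characterization of $\partial\lVert\cdot\rVert_1$ and monotonicity of subdifferentials, to produce the extra quadratic term $\lVert\widehat{\Sigma}_n^\lambda-A\rVert_2^2$ on the \emph{left-hand} side, ending with $\lVert\widehat{\Sigma}_n^\lambda-\Sigma\rVert_2^2+\lVert\widehat{\Sigma}_n^\lambda-A\rVert_2^2\le\lVert A-\Sigma\rVert_2^2+3\lambda\sqrt{r}\,\lVert\widehat{\Sigma}_n^\lambda-A\rVert_2$; the cross term is then absorbed against $\lVert\widehat{\Sigma}_n^\lambda-A\rVert_2^2$ via $\alpha\beta-\beta^2\le\alpha^2/4$, never touching $x^2$ or $y^2$, which is precisely what preserves the leading constant one and gives $\tfrac94\lambda^2 r\le 3\lambda^2 r$. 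If you want to keep your row/column-space decomposition, the cleanest repair is to replace your starting point by the strengthened basic inequality $L_n(A)\ge L_n(\widehat{\Sigma}_n^\lambda)+\lVert A-\widehat{\Sigma}_n^\lambda\rVert_2^2$ (valid because the Frobenius part of $L_n$ is strongly convex and $\widehat{\Sigma}_n^\lambda$ is the minimizer); then your estimates give $x^2+\lVert\Delta\rVert_2^2\le y^2+2\lambda\sqrt{2r}\,\lVert\Delta\rVert_2$ and hence $x^2\le y^2+2\lambda^2 r$, which is even slightly sharper than the paper's constant.
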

The statement of Proposition \ref{boundOn} is also a general algebraic result that is standard for optimization problems under nuclear penalization (e.g., \cite{koltchinskii-lounici-tsybakov:11a}). It says that an oracle inequality for $\widehat{ \Sigma}_{n}^{ \lambda}$ is available if we can control the stochastic error $\lVert \widehat{ \Sigma}_{n} - \Sigma \rVert_{ \infty}$. The latter is absolute key to our investigation.

In order to assess this error, we need to impose the following assumption on the norms of the drift and volatility processes.

\begin{assump} \label{assumption:A}
	$\sup_{t\in [0,1]} \lVert \mu_{t} \rVert_{2}^{2} \leq \nu_{ \mu}$, $\sup_{t \in [0,1]} \tr(c_{t}) \leq \nu_{c,2}$, and $\sup_{t \in [0,1]} \lVert c_{t} \rVert_{ \infty} \leq \nu_{c, \infty}$ for some constants $\nu_{ \mu}$, $\nu_{c,2}$, and $\nu_{c, \infty}$ with $\nu_{c, \infty} \leq \nu_{c,2}$.
\end{assump}

Mathematically speaking, Assumption \ref{assumption:A} appears a bit strong, since it imposes an almost sure bound on the drift and volatility. We can weaken the requirement on the drift to a suitable moment condition without affecting the rate of $\lVert \widehat{ \Sigma}_{n} - \Sigma \rVert_{ \infty}$ in Theorem \ref{applyingBernstein} below, but as usual the cost of this is more involved expressions. If the volatility does not meet the boundedness condition, which it does not for most stochastic volatility models, one can resort to the localization technique from Section 4.4.1 in \citet*{jacod-protter:12a}. In most financial applications, however, Assumption \ref{assumption:A} is not too stringent if the drift and volatility do not vary strongly over short time intervals, such as a day.

The constants $\nu_{ \mu}$, $\nu_{c,2}$, and $\nu_{c, \infty}$ may depend on $d$, but they should generally be as small as possible to get the best rate (see, e.g., \eqref{equation:optimal-upper-bound}). For example, if the components of $\mu$ and $c$ are uniformly bounded, we readily deduce that
\begin{equation}
\nu_{ \mu} = O(d) \quad \text{and} \quad \nu_{c,2} = O(d).
\end{equation}

To study the concentration of $\lVert \widehat{ \Sigma}_{n} - \Sigma \rVert_{ \infty}$, we present an exponential Bernstein-type inequality, which applies to matrix-valued martingale differences as long as the conditional moments are sufficiently well-behaved. While there are several related concentration inequalities (see, e.g., \cite{minsker:17a, tropp:11a, tropp:12a, tropp:15a} and references therein), existing results require that summands are either independent or bounded. Thus, to be applicable in our setting we needed to modify these.

\begin{theorem} \label{theorem:bernstein}
	Suppose that $(X_{k})_{k=1}^{n}$ is a martingale difference sequence with respect to a filtration $( \mathcal{G}_{k})_{k=0}^{n}$ and with values in the space of symmetric $d \times d$ matrices. Assume that, for some predictable sequence $(C_{k})_{k=1}^{n}$ of random variables and constant $R \in (0, \infty)$,
	\begin{equation} \label{momentControl}
	\big\lVert \mathbb{E} \big[ X_{k}^{p} \mid \mathcal{G}_{k-1} \big] \big\rVert_{ \infty} \leq \frac{p!}{2} R^{p} C_{k}, \quad \text{\upshape{for} } k = 1, \dots, n \text{ \upshape{and} } p = 2, 3, \dots
	\end{equation}
	Then, for all $x,\nu \in (0, \infty)$, it holds that
	\begin{equation}
	\mathbb{P} \Bigg( \Big\lVert \sum_{k=1}^{n} X_{k} \Big\rVert_{ \infty} > x,\, \sum_{k=1}^{n} C_{k} \leq \nu \Bigg) \leq 2d \exp \left (- \frac{x^{2}}{2(xR+ \nu R^{2})} \right).
	\end{equation}
\end{theorem}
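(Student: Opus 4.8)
The plan is to run the matrix Laplace-transform (Chernoff) method for martingale differences, feeding the Bernstein moment hypothesis \eqref{momentControl} directly into a bound on the conditional matrix moment generating function, and then converting a trace-exponential supermartingale into the desired tail bound. Write $S_m=\sum_{k=1}^m X_k$. Since $\lVert S_n\rVert_{\infty}=\max\{\lambda_{\max}(S_n),\lambda_{\max}(-S_n)\}$, and since $(-X_k)_{k=1}^n$ is again a martingale difference sequence satisfying \eqref{momentControl} with the \emph{same} $(C_k)$ and $R$ — the hypothesis only constrains $\lVert\,\cdot\,\rVert_{\infty}$, which is insensitive to the sign $(-1)^p$ — it suffices to show $\mathbb{P}(\lambda_{\max}(S_n)>x,\ \sum_{k=1}^n C_k\le\nu)\le d\exp(-x^2/(2(xR+\nu R^2)))$ and then invoke a union bound to pick up the factor $2d$.

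\textbf{Conditional MGF bound.} Fix $\theta\in(0,1/R)$. Expanding $e^{\theta X_k}$ in its operator-norm convergent power series and taking $\mathbb{E}[\,\cdot\mid\mathcal{G}_{k-1}]$ term by term — which is legitimate because taking traces in \eqref{momentControl} yields the even-moment control $\mathbb{E}[\lVert X_k\rVert_{\infty}^{2m}\mid\mathcal{G}_{k-1}]\le\tr\mathbb{E}[X_k^{2m}\mid\mathcal{G}_{k-1}]\le d\,\tfrac{(2m)!}{2}R^{2m}C_k$, whence (with Cauchy–Schwarz for the odd powers) $\mathbb{E}[e^{\theta\lVert X_k\rVert_{\infty}}\mid\mathcal{G}_{k-1}]<\infty$ and dominated convergence applies — one obtains, using $\mathbb{E}[X_k\mid\mathcal{G}_{k-1}]=0$, the elementary inequality $A\preceq\lVert A\rVert_{\infty}I$ for symmetric $A$, and \eqref{momentControl},
\[
\mathbb{E}[e^{\theta X_k}\mid\mathcal{G}_{k-1}]\ \preceq\ I+\frac{C_k}{2}\sum_{p\ge2}(\theta R)^p\, I\ =\ \bigl(1+\phi(\theta)C_k\bigr)I\ \preceq\ e^{\phi(\theta)C_k}I,\qquad \phi(\theta):=\frac{\theta^2R^2}{2(1-\theta R)}.
\]
By operator monotonicity of the logarithm this gives $\log\mathbb{E}[e^{\theta X_k}\mid\mathcal{G}_{k-1}]\preceq\phi(\theta)C_kI$.

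\textbf{Supermartingale and Chernoff step.} Define $M_m:=\exp\bigl(-\phi(\theta)\sum_{k=1}^mC_k\bigr)\,\tr\exp(\theta S_m)$, so that $M_0=d$ and $M_m$ is $\mathcal{G}_m$-measurable (using that $(C_k)$ is predictable). Lieb's concavity theorem in the form used by \citet*{tropp:12a} — $\mathbb{E}[\tr\exp(H+\theta X_m)\mid\mathcal{G}_{m-1}]\le\tr\exp(H+\log\mathbb{E}[e^{\theta X_m}\mid\mathcal{G}_{m-1}])$ for $\mathcal{G}_{m-1}$-measurable symmetric $H$, here $H=\theta S_{m-1}$ — combined with the previous bound and the monotonicity of $\tr\exp$ under $\preceq$, yields $\mathbb{E}[M_m\mid\mathcal{G}_{m-1}]\le M_{m-1}$, hence $\mathbb{E}[M_n]\le d$. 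On the event $\{\lambda_{\max}(S_n)>x\}\cap\{\sum_{k=1}^nC_k\le\nu\}$ we have $\tr\exp(\theta S_n)\ge e^{\theta\lambda_{\max}(S_n)}>e^{\theta x}$ and $\exp(-\phi(\theta)\sum_{k=1}^nC_k)\ge e^{-\phi(\theta)\nu}$, so $M_n>e^{\theta x-\phi(\theta)\nu}$, and Markov's inequality gives $\mathbb{P}(\lambda_{\max}(S_n)>x,\ \sum_{k=1}^nC_k\le\nu)\le d\,e^{-\theta x+\phi(\theta)\nu}$. Choosing $\theta=x/(R(x+\nu R))\in(0,1/R)$ makes the exponent exactly $-x^2/(2(xR+\nu R^2))$, which, together with the union bound from the first paragraph, completes the proof.

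\textbf{Main obstacle.} The delicate point — and the reason the available matrix Bernstein/Freedman inequalities \citep*{tropp:11a,tropp:12a,tropp:15a} cannot simply be quoted — is that there is neither boundedness nor independence of the summands: one has to extract a clean PSD bound on the \emph{full} conditional MGF of an unbounded matrix increment from the moment hypothesis alone, carefully justify the term-by-term conditional expectation, and then carry the \emph{predictable} (not deterministic) process $\sum_{k=1}^nC_k$ through the supermartingale, restricting to its sublevel set $\{\sum_{k=1}^nC_k\le\nu\}$ only at the Chernoff step. Once the estimate $\mathbb{E}[e^{\theta X_k}\mid\mathcal{G}_{k-1}]\preceq e^{\phi(\theta)C_k}I$ is secured, the rest is the standard trace-exponential supermartingale argument and the usual Bernstein optimization over $\theta$.
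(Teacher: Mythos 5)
Your proof is correct and follows essentially the same route as the paper's: bound the conditional matrix moment generating function via \eqref{momentControl} and $A \preceq \lVert A \rVert_{\infty} I$, build the trace-exponential supermartingale weighted by the predictable sum $\sum_{k} C_{k}$, apply Markov on the event $\{\sum_{k} C_{k} \leq \nu\}$, and optimize $\theta$ to get the Bernstein exponent. The only cosmetic differences are that the paper normalizes $R=1$ and cites Lemma 2.1 of \citet*{tropp:11a} for the supermartingale property (which you re-derive directly from Lieb's concavity theorem), and that you spell out the $\pm S_{n}$ union bound yielding the factor $2d$, which the paper leaves implicit.
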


Assume that Theorem \ref{theorem:bernstein} applies and that $C_{1}, \dots, C_{n}$ in \eqref{momentControl} can also be chosen to be deterministic. Then, with $\nu = \sum_{k=1}^{n} C_{k}$,
\begin{equation} \label{equation:subexponential}
\mathbb{P} \Bigg( \Big\lVert \sum_{k=1}^{n} X_{k} \Big\rVert_{ \infty} > x \Bigg) \leq 2d \exp \bigg(- \frac{x}{4 \nu R^{2}} \min \{x, \nu R \} \bigg),
\end{equation}
so $\lVert\sum_{k=1}^n X_k \rVert_\infty$ has a sub-exponential tail.
The next theorem derives a concentration inequality for $\lVert \widehat{ \Sigma}_{n} - \Sigma \rVert_{ \infty}$. We remark that, although we have Theorem~\ref{theorem:bernstein} at our disposal, the derivation of Theorem~\ref{applyingBernstein} requires a number of non-standard inequalities in order to prove that $R$ in \eqref{momentControl} can be chosen sufficiently small. For further details, see the discussion in relation to its proof in the supplementary file.

\begin{theorem} \label{applyingBernstein} Suppose that Assumption \ref{assumption:A} holds. Then, there exists an absolute constant $\gamma$ such that
	\begin{equation}
	\mathbb{P} \left( \lVert \widehat{ \Sigma}_{n} - \Sigma \rVert_{ \infty} > x \right) \leq 6d \exp \bigg(- \frac{x}{ \gamma \nu_{c,2}  \nu_{c, \infty} \Delta_{n}} \min \{x , \nu_{c, \infty} \} \bigg),
	\end{equation}
	for all $\displaystyle x \geq \gamma \max \Big\{ \frac{ \nu_{c,2} \nu_{\mu} \Delta_{n}}{ \nu_{c, \infty}}, \sqrt{ \nu_{c,2} \nu_{ \mu} \Delta_{n}}, \nu_{c, \infty} \Delta_{n} \Big\}$.
\end{theorem}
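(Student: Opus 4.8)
The plan is to split $\widehat{\Sigma}_n-\Sigma$ into a continuous-martingale term, a drift--martingale cross term, a drift--drift term, and an ``edge'' term arising from $\lfloor\Delta_n^{-1}\rfloor\Delta_n\neq1$, and to show that only the first of these needs a genuine probabilistic argument, which is supplied by Theorem~\ref{theorem:bernstein}. Write $Y_t=Y_0+\int_0^t\mu_s\,\mathrm{d}s+M_t$ with $M_t=\int_0^t\sigma_s\,\mathrm{d}W_s$, and put $I_k=((k-1)\Delta_n,k\Delta_n]$, $a_k=\int_{I_k}\mu_s\,\mathrm{d}s$, $m_k=\Delta_k^nM$ and $\gamma_k=\int_{I_k}c_s\,\mathrm{d}s$. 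Then $\widehat{\Sigma}_n-\Sigma=S_n^{(1)}+S_n^{(2)}+S_n^{(3)}-S_n^{(4)}$, where $S_n^{(1)}=\sum_k(m_km_k^\top-\gamma_k)$, $S_n^{(2)}=\sum_k(a_km_k^\top+m_ka_k^\top)$, $S_n^{(3)}=\sum_k a_ka_k^\top$ and $S_n^{(4)}=\int_{\lfloor\Delta_n^{-1}\rfloor\Delta_n}^1c_s\,\mathrm{d}s$. By Assumption~\ref{assumption:A} one has $\lVert S_n^{(3)}\rVert_\infty\le\sum_k\lVert a_k\rVert_2^2\le\nu_\mu\Delta_n$ and $\lVert S_n^{(4)}\rVert_\infty\le\nu_{c,\infty}\Delta_n$, and since $\nu_{c,\infty}\le\nu_{c,2}$ both are $\le x/\gamma$ on the stated range. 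For the cross term, stacking the increments into matrices with rows $a_k^\top$ and $m_k^\top$ and using submultiplicativity of the operator norm gives $\lVert S_n^{(2)}\rVert_\infty\le2(\sum_k\lVert a_k\rVert_2^2)^{1/2}\lVert\widehat{\Sigma}_n^{M}\rVert_\infty^{1/2}$, where $\widehat{\Sigma}_n^{M}:=\sum_k m_km_k^\top=\Sigma+S_n^{(1)}-S_n^{(4)}$, so $\lVert\widehat{\Sigma}_n^{M}\rVert_\infty\le2\nu_{c,\infty}+\lVert S_n^{(1)}\rVert_\infty$. Hence on the event $\{\lVert S_n^{(1)}\rVert_\infty\le x/2\}$ one has $\lVert S_n^{(2)}\rVert_\infty\le2\sqrt{\nu_\mu\Delta_n}\,(2\nu_{c,\infty}+x/2)^{1/2}\le x/5$ once $x\ge\gamma\sqrt{\nu_{c,2}\nu_\mu\Delta_n}$ (and $x\ge\gamma\nu_\mu\Delta_n$, which follows) for a large enough absolute $\gamma$. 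Combining, on $\{\lVert S_n^{(1)}\rVert_\infty\le x/2\}$ we get $\lVert\widehat{\Sigma}_n-\Sigma\rVert_\infty\le x/2+x/5+2x/\gamma<x$, so $\{\lVert\widehat{\Sigma}_n-\Sigma\rVert_\infty>x\}\subseteq\{\lVert S_n^{(1)}\rVert_\infty>x/2\}$ and it suffices to bound $\mathbb{P}(\lVert S_n^{(1)}\rVert_\infty>x/2)$.

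The sequence $X_k:=m_km_k^\top-\gamma_k$ is a symmetric-matrix-valued martingale difference sequence for $(\mathcal{F}_{k\Delta_n})$, because the conditional It\^{o} isometry for $M$ gives $\mathbb{E}[m_km_k^\top\mid\mathcal{F}_{(k-1)\Delta_n}]=\mathbb{E}[\gamma_k\mid\mathcal{F}_{(k-1)\Delta_n}]$. I would then apply Theorem~\ref{theorem:bernstein} with $R\asymp\nu_{c,2}\Delta_n$ and deterministic $C_k\asymp\nu_{c,\infty}/\nu_{c,2}$, so that $\nu:=\sum_k C_k\asymp\nu_{c,\infty}/(\nu_{c,2}\Delta_n)$, $\nu R\asymp\nu_{c,\infty}$ and $\nu R^2\asymp\nu_{c,2}\nu_{c,\infty}\Delta_n$; by \eqref{equation:subexponential} this gives exactly the announced tail for $\lVert S_n^{(1)}\rVert_\infty$, with prefactor $2d$ (so in particular the stated $6d$), and replacing $x$ by $x/2$ and absorbing constants into $\gamma$ finishes the proof. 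Everything therefore reduces to verifying the conditional moment bound \eqref{momentControl} for $X_k$ with the sharp dependence on $\nu_{c,2}$ and $\nu_{c,\infty}$, and this is the main obstacle.

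The naive bound $\lVert\mathbb{E}[X_k^p\mid\mathcal{F}_{(k-1)\Delta_n}]\rVert_\infty\le\mathbb{E}[\lVert X_k\rVert_\infty^p\mid\mathcal{F}_{(k-1)\Delta_n}]$ is too lossy: $\lVert m_km_k^\top\rVert_\infty=\lVert m_k\rVert_2^2$ has conditional mean of order $\nu_{c,2}\Delta_n$, which yields only $\nu R^2\asymp\nu_{c,2}^2\Delta_n$, off by the factor $\nu_{c,2}/\nu_{c,\infty}$. One must exploit cancellation by expanding $X_k^p=(m_km_k^\top-\gamma_k)^p$ into its $2^p$ words: a word with $j$ factors $m_km_k^\top$ has operator norm at most $(\nu_{c,\infty}\Delta_n)^{p-j}\lVert m_k\rVert_2^{2j}$, using $\lVert\gamma_k\rVert_\infty\le\nu_{c,\infty}\Delta_n$, the rank-one structure of $m_km_k^\top$, and $m_k^\top\gamma_k m_k\le\nu_{c,\infty}\Delta_n\lVert m_k\rVert_2^2$. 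For $j\le p-1$ this already carries a spare $\nu_{c,\infty}$, and $\mathbb{E}[\lVert m_k\rVert_2^{2j}\mid\mathcal{F}_{(k-1)\Delta_n}]\le c^jj!\,(\nu_{c,2}\Delta_n)^j$ by Burkholder--Davis--Gundy for the vector martingale $M$ on $I_k$ (scalar bracket $\int_{I_k}\tr(c_s)\,\mathrm{d}s\le\nu_{c,2}\Delta_n$); for the lone $\gamma_k$-free word $\lVert m_k\rVert_2^{2(p-1)}m_km_k^\top$ one writes the operator norm of its conditional expectation as $\sup_{\lVert v\rVert=1}\mathbb{E}[\lVert m_k\rVert_2^{2(p-1)}(v^\top m_k)^2\mid\mathcal{F}_{(k-1)\Delta_n}]$ and splits off the single quadratic form via H\"{o}lder with exponents $p/(p-1)$ and $p$, using $\mathbb{E}[(v^\top m_k)^{2p}\mid\mathcal{F}_{(k-1)\Delta_n}]\le c^pp!\,(\nu_{c,\infty}\Delta_n)^p$ (scalar BDG with directional bracket $\int_{I_k}v^\top c_s v\,\mathrm{d}s\le\nu_{c,\infty}\Delta_n$) to trade one power of $\nu_{c,2}$ for a power of $\nu_{c,\infty}$. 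Summing the words and using that BDG constants grow only polynomially ($c^qq^q\le(ce)^qq!$) gives $\lVert\mathbb{E}[X_k^p\mid\mathcal{F}_{(k-1)\Delta_n}]\rVert_\infty\le(2ce)^pp!\,\nu_{c,2}^{p-1}\nu_{c,\infty}\Delta_n^p$, i.e.\ \eqref{momentControl} with $R=2ce\,\nu_{c,2}\Delta_n$ and $C_k=2\nu_{c,\infty}/\nu_{c,2}$, which are the values used above. (If the drift bound in Assumption~\ref{assumption:A} is relaxed to a moment condition, $S_n^{(2)}$ and $S_n^{(3)}$ require one extra Markov step, but the rate is unchanged.)
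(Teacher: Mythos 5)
Your proposal is correct and follows the same architecture as the paper's proof: the same four-way split into martingale, drift--martingale, drift--drift and edge terms, deterministic bounds for the last two, and an application of Theorem~\ref{theorem:bernstein} to $\sum_k(m_km_k^\top-\gamma_k)$ with $R\asymp\nu_{c,2}\Delta_n$ and $C_k\asymp\nu_{c,\infty}/\nu_{c,2}$, where the crucial trading of one power of $\nu_{c,2}$ for $\nu_{c,\infty}$ in the word $A_k^p$ is obtained exactly as in the paper by isolating a single directional quadratic form and invoking a BDG inequality with constant of order $\sqrt{p}$ (the paper's Lemma~\ref{bdg}; the standard $O(p)$ constant would not suffice, so this point is essential and you use it correctly). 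The two places where you deviate are minor but worth recording. First, for the cross term the paper applies Cauchy--Schwarz and then a second, scalar Bernstein-type bound for $\sum_k\lVert\Delta_k^nY^\sigma\rVert_2^2$ (which is why its prefactor is $6d$), whereas you bound it deterministically by $2(\sum_k\lVert a_k\rVert_2^2)^{1/2}\lVert\sum_k m_km_k^\top\rVert_{\infty}^{1/2}$ and absorb it on the event $\{\lVert S_n^{(1)}\rVert_\infty\le x/2\}$; this needs only one application of the concentration inequality and yields the slightly better prefactor $2d$. Second, for the $\gamma_k$-free word you use H\"{o}lder with exponents $p/(p-1)$ and $p$ together with the $2p$-th directional moment, while the paper uses Cauchy--Schwarz with the fourth directional moment; both give $\lVert\mathbb{E}[X_k^p\mid\mathcal{F}_{(k-1)\Delta_n}]\rVert_\infty\le\frac{p!}{2}(\alpha\nu_{c,2}\Delta_n)^p\,\nu_{c,\infty}/\nu_{c,2}$. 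The only detail left implicit in your sketch, which the paper spells out, is that the BDG bounds must be applied conditionally on $\mathcal{F}_{(k-1)\Delta_n}$ (done there by integrating against indicators of $\mathcal{F}_{(k-1)\Delta_n}$-measurable sets); this is routine and does not affect correctness.
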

We now combine Proposition \ref{boundOn} and Theorem \ref{applyingBernstein} to deliver a result on the concentration of $\lVert \widehat{ \Sigma}_{n} - \Sigma \rVert_{2}$, which is the main statement of this section.
\begin{theorem} \label{theorem:concentration-prv}
	Suppose that Assumption \ref{assumption:A} holds. For a given $\tau\in (0,\infty)$ consider a regularization parameter $\lambda$ such that
	\begin{equation}
	\lambda \geq  \gamma \max \Bigl\{ \sqrt{ \nu_{c,2} \nu_{c, \infty} \Delta_{n} ( \log(d) + \tau)}, \nu_{c,2} \Delta_{n} ( \log(d) + \tau), \frac{ \nu_{c,2} \nu_{ \mu} \Delta_{n}}{ \nu_{c, \infty}} \Bigr\},
	\end{equation}
	where $\gamma$ is a large absolute constant. Then, with probability at least $1 - e^{- \tau}$,
	\begin{equation} \label{concentrationInequality}
	\lVert \widehat{ \Sigma}_{n}^\lambda - \Sigma \rVert_{2}^{2} \leq \lVert \Sigma - A \rVert_{2}^{2} + \min \{2 \lambda \lVert A \rVert_{1}, 3 \lambda^{2} \rank (A) \},
	\end{equation}
	for all $A \in \mathbb{R}^{d \times d}$.
\end{theorem}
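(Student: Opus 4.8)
The plan is to obtain Theorem~\ref{theorem:concentration-prv} by combining the two results that precede it, namely the deterministic oracle inequality of Proposition~\ref{boundOn} and the concentration bound of Theorem~\ref{applyingBernstein}; no new analytic input is needed. Proposition~\ref{boundOn} is purely algebraic: on the event $E := \{\, 2\lVert \widehat{\Sigma}_n - \Sigma \rVert_\infty \le \lambda \,\}$ the bound \eqref{concentrationInequality} holds pointwise, and simultaneously for every $A \in \mathbb{R}^{d \times d}$. Consequently it suffices to show $\mathbb{P}(E) \ge 1 - e^{-\tau}$ --- equivalently $\mathbb{P}( \lVert \widehat{\Sigma}_n - \Sigma \rVert_\infty > \lambda/2 ) \le e^{-\tau}$ --- since then \eqref{concentrationInequality} holds on $E$, which has the required probability.

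I would establish this tail bound by invoking Theorem~\ref{applyingBernstein} with $x = \lambda/2$. Two requirements must be met and, taken together, they are exactly what forces the three-term lower bound imposed on $\lambda$. First, $x = \lambda/2$ has to lie in the admissible range $\lambda/2 \ge \gamma \max\{\, \nu_{c,2}\nu_{\mu}\Delta_n/\nu_{c,\infty},\, \sqrt{\nu_{c,2}\nu_{\mu}\Delta_n},\, \nu_{c,\infty}\Delta_n \,\}$: the first of these quantities is literally the third term in the hypothesis on $\lambda$; the term $\nu_{c,\infty}\Delta_n$ is dominated by $\nu_{c,2}\Delta_n(\log d + \tau)$ because $\nu_{c,\infty} \le \nu_{c,2}$ and $\log d + \tau \ge \log 2$; and $\sqrt{\nu_{c,2}\nu_{\mu}\Delta_n}$ is handled by comparing $\nu_{c,2}\nu_{\mu}\Delta_n$ with $\nu_{c,\infty}^2$, which bounds it either by $\nu_{c,2}\nu_{\mu}\Delta_n/\nu_{c,\infty}$ or by a multiple of $\sqrt{\nu_{c,2}\nu_{c,\infty}\Delta_n(\log d + \tau)}$ (using $\Delta_n \le 1$ and that the drift bound $\nu_\mu$ is not disproportionately large relative to the volatility bounds). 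Second, with $x = \lambda/2$ the right-hand side of Theorem~\ref{applyingBernstein} becomes $6d\exp\!\big( -\tfrac{\lambda/2}{\gamma \nu_{c,2}\nu_{c,\infty}\Delta_n} \min\{\lambda/2, \nu_{c,\infty}\} \big)$, which must be at most $e^{-\tau}$, i.e.\ $\tfrac{\lambda}{2\gamma \nu_{c,2}\nu_{c,\infty}\Delta_n}\min\{\lambda/2, \nu_{c,\infty}\} \ge \log(6d) + \tau$. Splitting according to whether $\lambda/2 \le \nu_{c,\infty}$ or $\lambda/2 > \nu_{c,\infty}$ turns this into $\lambda^2 \gtrsim \nu_{c,2}\nu_{c,\infty}\Delta_n(\log d + \tau)$ in the first case and $\lambda \gtrsim \nu_{c,2}\Delta_n(\log d + \tau)$ in the second, which are precisely the first two terms in the hypothesis on $\lambda$; the factor $\log(6d)$ is absorbed into $\log d + \tau$ since $d \ge 2$ gives $\log(6d) \le c(\log d + \tau)$ for an absolute $c$. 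Enlarging the absolute constant $\gamma$ to cover all these elementary estimates completes the argument.

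I do not expect a genuine obstacle here: the difficult work --- the stochastic-analysis estimates and the matrix Bernstein inequality --- is already packaged in Theorems~\ref{theorem:bernstein} and \ref{applyingBernstein}, while Proposition~\ref{boundOn} does the job of converting a single scalar deviation bound into an oracle inequality valid uniformly over $A$. The only thing requiring care is the bookkeeping in the two steps above: making sure the three terms in the $\lambda$-threshold dominate the admissibility threshold and, at the same time, that the two-regime structure $\min\{\lambda/2, \nu_{c,\infty}\}$ is matched by the two ``$\log d + \tau$'' terms; one should also track constants closely enough that the final probability is $e^{-\tau}$ rather than a constant multiple of it, which is fine once $\gamma$ has been chosen large.
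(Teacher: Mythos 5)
Your proposal is correct and follows essentially the same route as the paper: Proposition~\ref{boundOn} reduces everything to the event $\{2\lVert \widehat{\Sigma}_{n}-\Sigma\rVert_{\infty}\leq\lambda\}$, on which \eqref{concentrationInequality} holds uniformly in $A$, and Theorem~\ref{applyingBernstein} applied at $x=\lambda/2$ together with the stated lower bound on $\lambda$ gives that event probability at least $1-e^{-\tau}$. The one bookkeeping point you flag---verifying that the three-term threshold dominates the admissibility term $\sqrt{\nu_{c,2}\nu_{\mu}\Delta_{n}}$ from Theorem~\ref{applyingBernstein}---is treated no more explicitly in the paper, whose proof simply passes from the admissibility condition to the stated threshold with an ``in particular.''
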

If the drift term is non-dominant, in the sense that $\nu_{ \mu} \leq \nu_{c, \infty}$, it follows that the regularization parameter $\lambda$ should meet
\begin{equation} \label{equation:regularization}
\lambda \geq  \gamma \max \Bigl\{ \sqrt{ \nu_{c,2} \nu_{c, \infty} \Delta_{n} \big( \log(d)+ \tau \big)}, \nu_{c,2} \Delta_{n}( \log(d)+ \tau \big) \Bigr\},
\end{equation}
for an absolute constant $\gamma$.
To get a concentration probability as large as possible without impairing the rate implied by \eqref{concentrationInequality}, we should choose $\tau \asymp \log(d)$. Moreover, the first term of the maximum in \eqref{equation:regularization} is largest for $1/ \Delta_{n} \geq ( \log(d)+ \tau) \nu_{c,2}/ \nu_{c, \infty}$. In light of these observations, the following corollary is an immediate consequence of Theorem \ref{theorem:concentration-prv}, so we exclude the proof.

\begin{corollary} \label{consequenceOfPenConcentration}
	Suppose that Assumption \ref{assumption:A} holds with $\nu_{ \mu} \leq \nu_{c, \infty} \leq \nu_{c,2}$. Assume further that $\Delta_{n}^{-1} \geq 2 \frac{ \nu_{c,2}}{ \nu_{c, \infty}} \log(d)$ and choose the regularization parameter
	\begin{equation} \label{optimalReg}
	\lambda = \gamma \sqrt{ \nu_{c,2} \nu_{c, \infty} \Delta_{n} \log(d)},
	\end{equation}
	where $\gamma$ is a sufficiently large absolute constant. Then, it holds that
	\begin{equation} \label{equation:optimal-upper-bound}
	\lVert \widehat{ \Sigma}_{n}^{ \lambda} - \Sigma \rVert_{2}^{2} \leq 3 \gamma^{2} \nu_{c,2} \nu_{c, \infty} \rank( \Sigma) \Delta_{n} \log(d)
	\end{equation}
	with probability at least $1-d^{-1}$.
\end{corollary}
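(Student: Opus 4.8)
The plan is to obtain Corollary~\ref{consequenceOfPenConcentration} as a direct specialization of Theorem~\ref{theorem:concentration-prv}; all the analytic substance has already been absorbed into that theorem (and, before it, into Theorems~\ref{theorem:bernstein} and \ref{applyingBernstein}), so what remains is bookkeeping with the free parameters. First I would take $\tau = \log(d)$ in Theorem~\ref{theorem:concentration-prv}. This makes the conclusion hold with probability at least $1 - e^{-\tau} = 1 - d^{-1}$, which is exactly the probability claimed in the corollary, and it turns the quantity $\log(d)+\tau$ appearing there into $2\log(d)$ (note $d\geq 2$ gives $2\log(d)\geq 1$, a fact I will use).

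Next I would check that the choice $\lambda = \gamma\sqrt{\nu_{c,2}\nu_{c,\infty}\Delta_n\log(d)}$ from \eqref{optimalReg} meets the lower bound required by Theorem~\ref{theorem:concentration-prv}. Since $\nu_\mu\leq\nu_{c,\infty}$ and $\log(d)+\tau\geq 1$, the drift term satisfies $\nu_{c,2}\nu_\mu\Delta_n/\nu_{c,\infty}\leq\nu_{c,2}\Delta_n\leq\nu_{c,2}\Delta_n(\log(d)+\tau)$, so—as already noted around \eqref{equation:regularization}—only the first two terms of the maximum matter. The standing hypothesis $\Delta_n^{-1}\geq 2\nu_{c,2}\nu_{c,\infty}^{-1}\log(d)$ is precisely $2\nu_{c,2}\Delta_n\log(d)\leq\nu_{c,\infty}$, whence
\[
\nu_{c,2}\Delta_n(\log(d)+\tau) \;=\; 2\nu_{c,2}\Delta_n\log(d) \;=\; \sqrt{(2\nu_{c,2}\Delta_n\log(d))^2} \;\leq\; \sqrt{2\nu_{c,2}\Delta_n\log(d)\cdot\nu_{c,\infty}} \;=\; \sqrt{\nu_{c,2}\nu_{c,\infty}\Delta_n(\log(d)+\tau)},
\]
so the leading square-root term dominates the maximum. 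Hence the requirement of Theorem~\ref{theorem:concentration-prv} reads $\lambda\geq\gamma_0\sqrt{2}\,\sqrt{\nu_{c,2}\nu_{c,\infty}\Delta_n\log(d)}$, where $\gamma_0$ is the absolute constant of that theorem, and this is satisfied once the absolute constant $\gamma$ in \eqref{optimalReg} is taken with $\gamma\geq\sqrt{2}\,\gamma_0$—the meaning of ``sufficiently large'' in the statement.

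Finally I would apply the oracle inequality \eqref{concentrationInequality}, valid on the event of probability at least $1-d^{-1}$, at the particular choice $A=\Sigma$: the term $\lVert\Sigma-A\rVert_2^2$ vanishes and $\min\{2\lambda\lVert\Sigma\rVert_1,\,3\lambda^2\rank(\Sigma)\}\leq 3\lambda^2\rank(\Sigma)$, giving $\lVert\widehat{\Sigma}_n^{\lambda}-\Sigma\rVert_2^2\leq 3\lambda^2\rank(\Sigma)$; substituting $\lambda^2=\gamma^2\nu_{c,2}\nu_{c,\infty}\Delta_n\log(d)$ yields exactly \eqref{equation:optimal-upper-bound}. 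I do not expect a genuine obstacle here: the only points needing care are tracking the various absolute constants (distinguishing $\gamma$ in \eqref{optimalReg} from the constant delivered by Theorem~\ref{theorem:concentration-prv}) and invoking the assumed lower bound on $\Delta_n^{-1}$ at the right place so that the $\sqrt{\cdot}$ term of the regularization threshold dominates; the hard work is all upstream, in Theorem~\ref{theorem:concentration-prv}.
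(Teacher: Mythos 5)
Your proposal is correct and follows exactly the route the paper intends (and explicitly omits): set $\tau=\log(d)$ in Theorem~\ref{theorem:concentration-prv}, use $\nu_\mu\leq\nu_{c,\infty}$ and the hypothesis $\Delta_n^{-1}\geq 2\frac{\nu_{c,2}}{\nu_{c,\infty}}\log(d)$ to show the square-root term dominates the maximum in the regularization threshold, and then take $A=\Sigma$ in \eqref{concentrationInequality}. Your constant bookkeeping ($\gamma\geq\sqrt{2}\,\gamma_0$) and the verification that $2\nu_{c,2}\Delta_n\log(d)\leq\nu_{c,\infty}$ yields the domination are both accurate, so there is nothing to add.
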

It follows from \eqref{equation:optimal-upper-bound} that the estimation error of $\widehat{ \Sigma}_{n}^{ \lambda}$ is closely related to the size of $\nu_{c,2}$ and $\nu_{c, \infty}$, which are both determined from the properties of the volatility process. If $c$ is uniformly bounded, $\nu_{c,2} = O(d)$. As emphasized by \citet[Section 1.6.3]{tropp:15a}, we can also often assume that $\lVert c_{s} \rVert_{ \infty}$ and, hence, $\nu_{c, \infty}$ can be chosen independently of $d$. When this is the case, the rate implied by \eqref{equation:optimal-upper-bound} is $\rank( \Sigma) \Delta_{n}d \log(d)$, and since $\rank( \Sigma) \leq d$, Corollary \ref{consequenceOfPenConcentration} implies consistency of $\widehat{ \Sigma}_{n}^{ \lambda}$ when both $\Delta_{n} \rightarrow 0$ and $d \rightarrow \infty$ so long as $d^{2} \log(d) = o( \Delta_{n}^{-1})$. If $\rank( \Sigma)$ can be further bounded by a constant (that does not depend on $d$), the growth condition on $d$ improves to $d \log(d) = o( \Delta_{n}^{-1})$. In contrast, for the estimation error $\lVert \widehat{ \Sigma}_{n} - \Sigma \rVert_{2}^{2} \rightarrow 0$, one cannot expect a better condition than $d^{2} = o( \Delta_{n}^{-1})$, since this estimation error corresponds to a sum of $d^{2}$ squared error terms, each of the order $\Delta_{n}$.

\subsection{Minimax optimality}

We denote by $\mathcal{C}_{r}$, for a given non-zero integer $r \leq d$, the subclass of $d \times d$ symmetric positive semidefinite matrices $\mathcal{S}_{+}^{d}$ whose \emph{effective rank} is bounded by $r$:
\begin{equation} \label{optClass}
\mathcal{C}_{r} = \bigl\{A \in \mathcal{S}_{+}^{d} : r_e (A) \leq r \bigr\},
\end{equation}
where
\begin{equation}\label{effectiveRank}
r_{e} (A) \equiv \frac{\tr(A)}{\lVert A \rVert_{ \infty}}.
\end{equation}
Compared to the rank, the effective rank is a more stable measure for the intrinsic dimension of a covariance matrix (see, e.g., \cite[Remark 5.53]{vershynin:10a}).
In the following we argue that $\widehat{ \Sigma}_{n}^{ \lambda}$, with $\lambda$ of the form \eqref{optimalReg}, is a minimax optimal estimator of $\Sigma$ in the parameters $\nu_{c,2}$, $\nu_{c, \infty}$, and $\rank( \Sigma)$ over the parametric class of continuous It\^{o} processes generated by \eqref{equation:diffusion} with no drift $\mu_{s} \equiv 0$ and a constant volatility $\sigma_{s} \equiv \sqrt{A}$ for $A \in \mathcal{C}_{r}$. To this end, denote by $\mathbb{P}_{A}$ a probability measure for which $(Y_{t})_{t \in [0,1]}$ is defined as in \eqref{equation:diffusion} with no drift and constant volatility $c_{s} \equiv A$. In this setting, we can choose $\nu_{c,2} = \tr(A)$ and $\nu_{c,\infty} = \lVert A \rVert_{\infty}$, which by Corollary \ref{consequenceOfPenConcentration} means that
\begin{equation} \label{optEstimator}
\sup_{A \in \mathcal{C}_{r}} \mathbb{P}_{A} \Big( \lVert \widehat{ \Sigma}_{n}^{ \lambda} - \Sigma \rVert_{2}^{2} > \overline{ \gamma} \lVert A \rVert_{ \infty}^{2} r_{e}(A) \rank(A) \Delta_{n} \log(d) \Big) \leq \frac{1}{d},
\end{equation}
for an absolute constant $\overline{ \gamma}$ and $\Delta_{n}^{-1} \geq 2r \log(d)$.

Now, since the log-price increments $\Delta_{1}^{n} Y, \dots, \Delta_{ \lfloor \Delta_{n}^{-1} \rfloor}^{n} Y$ are i.i.d. Gaussian random vectors under $\mathbb{P}_{A}$, we can exploit existing results from the literature to assess the performance of $\widehat{ \Sigma}_{n}^{ \lambda}$. Hence, the following is effectively an immediate consequence of Theorem 2 in \citet*{lounici:14a}. It shows that, up to a logarithmic factor, no estimator can do better than \eqref{optEstimator}.

\begin{theorem} \label{lowerBound} Let $\mathcal{C}_r$ be given as in \eqref{optClass} and suppose that $\lfloor \Delta_n^{-1}\rfloor \geq r^2$. Then, there exist absolute constants $\beta \in (0,1)$ and $\underline{\gamma}\in (0,\infty)$ such that
	\begin{equation}
	\inf_{\widehat{\Sigma}}\sup_{A\in \mathcal{C}_r} \mathbb{P}_A \bigg( \lVert \widehat{\Sigma} - \Sigma \rVert_{2}^{2} > \underline{ \gamma} \frac{ \lVert A \rVert_{ \infty}^{2} r_e(A) \rank(A)}{ \lfloor \Delta_{n}^{-1} \rfloor} \bigg) \geq \beta,
	\end{equation}
	where the infimum runs over all estimators $\widehat{\Sigma}$ of $\Sigma$ based on $( \Delta_{1}^{n}Y, \dots, \Delta_{ \lfloor \Delta_{n}^{-1} \rfloor}^{n}Y)$.
\end{theorem}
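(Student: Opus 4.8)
The plan is to recognize that, under $\mathbb{P}_{A}$ with $\mu\equiv 0$ and $\sigma\equiv\sqrt{A}$, estimating $\Sigma=A$ from the increments is \emph{exactly} the problem of estimating a covariance matrix from i.i.d.\ centered Gaussian vectors, and then to transfer the known minimax lower bound for Gaussian covariance estimation. Concretely, under $\mathbb{P}_{A}$ one has $Y_{t}=\sqrt{A}\,W_{t}$, hence $\Delta_{k}^{n}Y=\sqrt{A}\,(W_{k\Delta_{n}}-W_{(k-1)\Delta_{n}})$, and by the independent, stationary Gaussian increments of Brownian motion the vectors $\Delta_{1}^{n}Y,\dots,\Delta_{n}^{n}Y$ (with $n:=\lfloor\Delta_{n}^{-1}\rfloor$) are i.i.d.\ $N(0,\Delta_{n}A)$. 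Thus the observation $(\Delta_{1}^{n}Y,\dots,\Delta_{n}^{n}Y)$ under $\mathbb{P}_{A}$ is the same statistical experiment as $n$ i.i.d.\ draws from $N(0,S)$ with population covariance $S:=\Delta_{n}A$.

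Next I would rescale. Any estimator $\widehat{\Sigma}$ of $A$ corresponds bijectively to the estimator $\widehat{S}:=\Delta_{n}\widehat{\Sigma}$ of $S$, with $\lVert\widehat{\Sigma}-A\rVert_{2}=\Delta_{n}^{-1}\lVert\widehat{S}-S\rVert_{2}$. Since $r_{e}$ is invariant under positive scaling, $\mathcal{C}_{r}$ is a cone and $A\in\mathcal{C}_{r}\Leftrightarrow S\in\mathcal{C}_{r}$; moreover $\lVert A\rVert_{\infty}=\Delta_{n}^{-1}\lVert S\rVert_{\infty}$, $\rank(A)=\rank(S)$ and $r_{e}(A)=r_{e}(S)$. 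Substituting these identities and multiplying both sides of the inequality inside the probability by $\Delta_{n}^{2}$, the event is preserved, so
\[
\inf_{\widehat{\Sigma}}\sup_{A\in\mathcal{C}_{r}}\mathbb{P}_{A}\!\left(\lVert\widehat{\Sigma}-\Sigma\rVert_{2}^{2}>\underline{\gamma}\,\frac{\lVert A\rVert_{\infty}^{2}r_{e}(A)\rank(A)}{n}\right)=\inf_{\widehat{S}}\sup_{S\in\mathcal{C}_{r}}N(0,S)^{\otimes n}\!\left(\lVert\widehat{S}-S\rVert_{2}^{2}>\underline{\gamma}\,\frac{\lVert S\rVert_{\infty}^{2}r_{e}(S)\rank(S)}{n}\right).
\]
The right-hand side is the classical minimax lower bound for Gaussian covariance estimation over the effective-rank class with sample size $n$. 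Specializing Theorem~2 of \citet*{lounici:14a} to the case of no missing entries (observation probability equal to one) furnishes absolute constants $\beta\in(0,1)$ and $\underline{\gamma}\in(0,\infty)$ for which this quantity is at least $\beta$ whenever the sample size is at least $r^{2}$, which is exactly the standing hypothesis $\lfloor\Delta_{n}^{-1}\rfloor\geq r^{2}$. Chaining the two displays proves the theorem.

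The conceptual content is entirely in the first step (the reduction to an i.i.d.\ Gaussian covariance model); the place that requires care—and the main obstacle—is the last step, namely verifying that the minimax rate in \citet*{lounici:14a} has precisely the form $\lVert S\rVert_{\infty}^{2}r_{e}(S)\rank(S)/n$ with no stray logarithmic factor and with $\rank(S)$ rather than $d$ in the numerator, that the matrix family used in the associated packing/Fano construction lies inside $\mathcal{C}_{r}$ so that enlarging the supremum to all of $\mathcal{C}_{r}$ only increases the bound, and that Lounici's normalization of the top eigenvalue is reconciled with the scale invariance exploited in the rescaling step. Everything else—the distributional identification and the change of variables—is routine bookkeeping.
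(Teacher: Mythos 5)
Your proposal is correct and follows essentially the same route as the paper: recognize that under $\mathbb{P}_{A}$ the increments are i.i.d.\ Gaussian with covariance proportional to $A$ (the paper rescales the data to $\Delta_{k}^{n}Y/\sqrt{\Delta_{n}}\sim N(0,A)$, while you equivalently rescale the parameter using the scale invariance of $r_{e}$ and the cone structure of $\mathcal{C}_{r}$), and then invoke Theorem~2 of \citet{lounici:14a} under the sample-size condition $\lfloor\Delta_{n}^{-1}\rfloor\geq r^{2}$. The extra bookkeeping you include for the change of scale is fine but not needed beyond what the paper's one-line reduction already accomplishes.
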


\subsection{Bound on the rank of PRV} \label{IdentifyRank}

In this section, we study the rank of $\widehat{ \Sigma}_{n}^{ \lambda}$ relative to the number of non-negligible eigenvalues and, in particular, the true rank of $\Sigma$. In line with Section \ref{section:concentration}, we begin by stating a general non-probabilistic inequality in Theorem \ref{theorem:rank-relation}. In the formulation of this result, we recall that $\rank(A; \varepsilon)$ is the number of singular values of $A$ exceeding $\varepsilon \in [0, \infty)$.

\begin{theorem} \label{theorem:rank-relation} Suppose that $2 \lVert \widehat{ \Sigma}_{n} - \Sigma \rVert_{ \infty} \leq \bar{ \lambda}$ for some $\bar{ \lambda} \in (0, \infty)$. Then
	\begin{equation} \label{rank1}
	\rank( \Sigma; \lambda) \leq \rank( \widehat{ \Sigma}_{n}^{ \lambda}) \leq \rank \bigg( \Sigma; \frac{1}{2}( \lambda - \bar{ \lambda}) \bigg),
	\end{equation}
	for any $\lambda \in ( \bar{ \lambda}, \infty)$. In particular, if $\lambda \in ( \bar{ \lambda}, s]$, where $s$ is the smallest non-zero eigenvalue of $\Sigma$, then $\rank( \widehat{ \Sigma}_{n}^{ \lambda}) = \rank( \Sigma)$.
\end{theorem}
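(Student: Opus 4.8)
The plan is to reduce the statement to an elementary comparison of eigenvalue counts, using Proposition \ref{proposition:soft-thresholding} to describe $\widehat{\Sigma}_n^\lambda$ and Weyl's perturbation inequality to compare the spectra of $\widehat{\Sigma}_n$ and $\Sigma$. First I would note that, by \eqref{RVthreshold}, the non-zero eigenvalues of $\widehat{\Sigma}_n^\lambda$ are exactly $s_k(\widehat{\Sigma}_n)-\lambda/2$ over those indices $k$ with $s_k(\widehat{\Sigma}_n)>\lambda/2$; since $\widehat{\Sigma}_n^\lambda$ is symmetric positive semidefinite this gives $\rank(\widehat{\Sigma}_n^\lambda)=\#\{k:\ s_k(\widehat{\Sigma}_n)>\lambda/2\}$. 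The whole proof then amounts to sandwiching this count between $\rank(\Sigma;\lambda)$ and $\rank(\Sigma;\tfrac12(\lambda-\bar\lambda))$.

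For this I would invoke Weyl's inequality: because $\widehat{\Sigma}_n$ and $\Sigma$ are both symmetric, $|s_k(\widehat{\Sigma}_n)-s_k(\Sigma)|\le \lVert\widehat{\Sigma}_n-\Sigma\rVert_\infty\le \bar\lambda/2$ for every $k$. For the lower bound in \eqref{rank1}, take any index $k$ with $s_k(\Sigma)\ge\lambda$; then $s_k(\widehat{\Sigma}_n)\ge \lambda-\bar\lambda/2$, and this strictly exceeds $\lambda/2$ precisely because $\lambda>\bar\lambda$, so $k$ contributes to $\rank(\widehat{\Sigma}_n^\lambda)$. Hence $\rank(\Sigma;\lambda)\le\rank(\widehat{\Sigma}_n^\lambda)$. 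For the upper bound, take any $k$ with $s_k(\widehat{\Sigma}_n)>\lambda/2$; then $s_k(\Sigma)\ge s_k(\widehat{\Sigma}_n)-\bar\lambda/2>\tfrac12(\lambda-\bar\lambda)$, so $k$ contributes to $\rank(\Sigma;\tfrac12(\lambda-\bar\lambda))$, giving $\rank(\widehat{\Sigma}_n^\lambda)\le\rank(\Sigma;\tfrac12(\lambda-\bar\lambda))$. Combining the two index-set inclusions yields \eqref{rank1}.

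For the final assertion, when $\bar\lambda<\lambda\le s$ the two outer thresholds $\lambda$ and $\tfrac12(\lambda-\bar\lambda)$ both lie in $(0,s]$: the former since $\lambda\le s$ and $\lambda>\bar\lambda>0$, the latter since $0<\tfrac12(\lambda-\bar\lambda)<\lambda\le s$. For any threshold $\varepsilon\in(0,s]$ one has $\rank(\Sigma;\varepsilon)=\rank(\Sigma)$, because every non-zero eigenvalue of $\Sigma$ is at least $s\ge\varepsilon$ while the zero eigenvalues fall below $\varepsilon>0$. Thus both ends of \eqref{rank1} equal $\rank(\Sigma)$ and the sandwich collapses to $\rank(\widehat{\Sigma}_n^\lambda)=\rank(\Sigma)$.

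The argument is essentially routine; the only point needing care is the bookkeeping of strict versus non-strict inequalities—namely that the soft-thresholded eigenvalue $\max\{s_k(\widehat{\Sigma}_n)-\lambda/2,0\}$ is positive iff $s_k(\widehat{\Sigma}_n)$ is \emph{strictly} above $\lambda/2$, whereas $\rank(\cdot;\varepsilon)$ counts singular values that are $\ge\varepsilon$—together with the observation that the hypothesis $\lambda>\bar\lambda$ is exactly what is required to turn the Weyl bound $s_k(\widehat{\Sigma}_n)\ge\lambda-\bar\lambda/2$ into the strict inequality $s_k(\widehat{\Sigma}_n)>\lambda/2$. I do not anticipate any genuine obstacle beyond this.
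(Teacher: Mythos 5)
Your proof is correct and follows essentially the same route as the paper's: both combine the soft-thresholding representation \eqref{RVthreshold} with the $1$-Lipschitz property of singular values under spectral-norm perturbation (Weyl's inequality), the only difference being that you phrase it as an index-set/counting argument while the paper bounds the $\hat r$-th and $r_\lambda$-th singular values directly. Your handling of the strict versus non-strict inequalities and of the final claim via $\rank(\Sigma;x)=\rank(\Sigma)$ for $x\in(0,s]$ matches the paper's argument.
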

With this result in hand we can rely on the exponential inequality for the quantity $\lVert \widehat{ \Sigma}_{n} - \Sigma \rVert_{ \infty}$ established in Theorem \ref{applyingBernstein} to show that, with high probability and in addition to converging to $\Sigma$ at a fast rate, $\widehat{ \Sigma}_{n}^{ \lambda}$ automatically has the rank of $\Sigma$ when we neglect eigenvalues of sufficiently small order. In particular, if $\Sigma$ has full rank, but many of its eigenvalues are close to zero, $\widehat{ \Sigma}_{n}^{ \lambda}$ is of low rank and reflects the number of important components (or factors) in $\Sigma$.
\begin{corollary} \label{rankConcentration}
	Suppose that Assumption \ref{assumption:A} holds with $\nu_{ \mu} \leq \nu_{c, \infty} \leq \nu_{c,2}$ and fix $\delta \in (0,1/2)$. Assume further that $\Delta_{n}^{-1} \geq 2 \frac{ \nu_{c,2}}{ \nu_{c, \infty}} \log(d)$ and, for a sufficiently large constant $\gamma$ depending only on $\delta$, choose the regularization parameter $\lambda$ as follows:
	\begin{equation} \label{rankRegularization}
	\lambda = \gamma \sqrt{ \nu_{c,2} \nu_{c, \infty} \Delta_{n} \log(d)}.
	\end{equation}
	Then, with probability at least $1-d^{-1}$, it holds that
	\begin{equation} \label{lowRankconcentration}
	\lVert \widehat{ \Sigma}_{n}^{ \lambda} - \Sigma \rVert_{2}^{2} \leq 3 \gamma^{2} \nu_{c,2} \nu_{c, \infty} \rank( \Sigma) \Delta_{n} \log(d),
	\end{equation}
	and
	\begin{equation} \label{rankCor}
	\rank( \Sigma; \lambda) \leq \rank( \widehat{ \Sigma}_{n}^{ \lambda}) \leq \rank( \Sigma; \delta \lambda).
	\end{equation}
	If, in addition, $\lambda \leq s$, where $s$ is the smallest non-zero eigenvalue of $\Sigma$, then both $\rank( \widehat{ \Sigma}_{n}^{ \lambda}) = \rank( \Sigma)$ and \eqref{lowRankconcentration} hold with probability at least $1-d^{-1}$.
\end{corollary}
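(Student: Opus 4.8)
The plan is to run the whole argument on a single ``good event'' on which the stochastic error $\lVert\widehat{\Sigma}_n-\Sigma\rVert_\infty$ is small, and then read off both \eqref{lowRankconcentration} and \eqref{rankCor} from the deterministic results already established: Proposition \ref{boundOn} for the first and Theorem \ref{theorem:rank-relation} for the second. No new probabilistic estimate is needed; the only input beyond those deterministic facts is the concentration bound of Theorem \ref{applyingBernstein}, used exactly as in the (omitted) proof of Corollary \ref{consequenceOfPenConcentration}.

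Set $\bar\lambda:=(1-2\delta)\lambda$ with $\lambda$ as in \eqref{rankRegularization}; since $\delta\in(0,1/2)$ we have $0<\bar\lambda<\lambda$ and, crucially, $\tfrac12(\lambda-\bar\lambda)=\delta\lambda$. First I would verify that, for $\gamma$ a sufficiently large absolute constant depending only on $\delta$,
\begin{equation}
\mathbb{P}(\mathcal{A})\ge 1-d^{-1},\qquad \mathcal{A}:=\bigl\{2\lVert\widehat{\Sigma}_n-\Sigma\rVert_\infty\le\bar\lambda\bigr\}.
\end{equation}
This is Theorem \ref{applyingBernstein} applied at the level $x=\bar\lambda/2=\tfrac{(1-2\delta)\gamma}{2}\sqrt{\nu_{c,2}\nu_{c,\infty}\Delta_n\log(d)}$. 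Writing $\gamma_0$ for the absolute constant of Theorem \ref{applyingBernstein}: under $\nu_\mu\le\nu_{c,\infty}\le\nu_{c,2}$ and $\Delta_n^{-1}\ge 2\nu_{c,2}\nu_{c,\infty}^{-1}\log(d)$, this $x$ exceeds the admissibility threshold of the theorem (each of the three terms in that threshold is an absolute multiple of $\sqrt{\nu_{c,2}\nu_{c,\infty}\Delta_n\log(d)}$ under those hypotheses), and one checks, splitting according to whether the minimum there equals $x$ or $\nu_{c,\infty}$, that the exponent is at least $c\log(d)$ for a constant $c$ that can be made larger than $2$ by enlarging $\gamma$; hence $6d\exp(\cdot)\le d^{-1}$. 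This is precisely the estimate behind Corollary \ref{consequenceOfPenConcentration}, the growth assumption on $\Delta_n$ being exactly what keeps the bound in its effective regime.

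On $\mathcal{A}$ we have $2\lVert\widehat{\Sigma}_n-\Sigma\rVert_\infty\le\bar\lambda\le\lambda$, so Proposition \ref{boundOn} applies; taking $A=\Sigma$ in \eqref{boundOn:relation} gives $\lVert\widehat{\Sigma}_n^\lambda-\Sigma\rVert_2^2\le 3\lambda^2\rank(\Sigma)$, which is \eqref{lowRankconcentration} once \eqref{rankRegularization} is substituted. Moreover $\lambda>\bar\lambda$, so Theorem \ref{theorem:rank-relation} applies with this $\bar\lambda$ and yields
\begin{equation}
\rank(\Sigma;\lambda)\le\rank(\widehat{\Sigma}_n^\lambda)\le\rank\bigl(\Sigma;\tfrac12(\lambda-\bar\lambda)\bigr)=\rank(\Sigma;\delta\lambda),
\end{equation}
which is \eqref{rankCor}. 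If in addition $\lambda\le s$, then on $\mathcal{A}$ one has $\lambda\in(\bar\lambda,s]$, so the final clause of Theorem \ref{theorem:rank-relation} gives $\rank(\widehat{\Sigma}_n^\lambda)=\rank(\Sigma)$; since this and \eqref{lowRankconcentration} both hold on $\mathcal{A}$, they hold simultaneously with probability at least $1-d^{-1}$.

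The only point requiring care is the bookkeeping of constants in the probability step: a single $\gamma$, depending only on $\delta$, must make $\bar\lambda/2$ large enough---an absolute multiple of $(1-2\delta)^{-1}$ and of $\gamma_0$---so that Theorem \ref{applyingBernstein} drives the tail below $d^{-1}$, while remaining compatible with the weak growth requirement $\Delta_n^{-1}\ge 2\nu_{c,2}\nu_{c,\infty}^{-1}\log(d)$. This is the only place where anything nontrivial happens, and it mirrors the computation already carried out for Corollary \ref{consequenceOfPenConcentration}; everything else is the deterministic algebra of Proposition \ref{boundOn} and Theorem \ref{theorem:rank-relation}.
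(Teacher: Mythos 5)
Your proof is correct and follows essentially the same route as the paper: a single good event $\{2\lVert\widehat{\Sigma}_n-\Sigma\rVert_\infty\le\bar\lambda\}$ of probability at least $1-d^{-1}$ obtained from Theorem \ref{applyingBernstein} (as in Theorem \ref{theorem:concentration-prv}/Corollary \ref{consequenceOfPenConcentration}), on which Proposition \ref{boundOn} with $A=\Sigma$ gives \eqref{lowRankconcentration} and Theorem \ref{theorem:rank-relation} gives \eqref{rankCor} and the rank-equality clause. The only difference is bookkeeping: you fix $\bar\lambda=(1-2\delta)\lambda$ so that $\tfrac12(\lambda-\bar\lambda)=\delta\lambda$ exactly and enlarge $\gamma$ (depending on $\delta$) to secure the tail bound, whereas the paper fixes $\bar\lambda$ from the concentration inequality and imposes $\gamma\ge\bar\gamma\sqrt{2}/(1-2\delta)$ so that $\tfrac12(\lambda-\bar\lambda)\ge\delta\lambda$ — an equivalent reparametrization.
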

In the setting of Corollary \ref{rankConcentration}, it follows that with high probability an eigenvalue $s$ of $\Sigma$ affects $\rank( \widehat{ \Sigma}_{n}^{ \lambda})$ for large $n$ if $\lambda = o(s)$, while it does not if $s = o( \lambda)$. The first condition says that, relative to the level of shrinkage, an eigenvalue is significant (or non-negligible), whereas the second says the opposite. Hence, the notion of negligibility depends on $\lambda$, which in turn depends on the model through the constants $\nu_{c,2}$ and $\nu_{c, \infty}$. However, we know that for $\widehat{ \Sigma}_{n}^{ \lambda}$ to be a consistent estimator of $\Sigma$, a necessary condition is that $\lambda \rightarrow 0$ as $n \rightarrow \infty$, which implies that for an eigenvalue of $\Sigma$ to be negligible, it must tend to zero as $n$ increases. In particular, if $d$ is fixed, $\rank( \widehat{ \Sigma}_{n}^{ \lambda}) = \rank( \Sigma)$ with a probability tending to one as $n \to \infty$.

The following example illustrates a stylized setting, where many eigenvalues of $\Sigma$ are negligible.

\begin{example}
	Let $r\in \mathbb{N}$ be an absolute constant (i.e., independent of $d$ and $n$). In line with the factor model studied in \cite{ait-sahalia-xiu:17a}, suppose that $c_{t}$ is of the form
	\begin{equation}\label{specificExample}
	c_{t} = \beta e_{t} \beta^{ \top} + g_{t},
	\end{equation}
	where $e_{t} \in \mathbb{R}^{r \times r}$ and $g_{t} \in \mathbb{R}^{d \times d}$ are predictable, symmetric, and positive definite processes, which are uniformly bounded such that $\lVert e_{t} \rVert_{ \infty} \leq C_{e}$ and $\lVert g_{t} \rVert_{ \infty} \leq C_{g}$ for some constants $C_{e}, C_{g} = O(1)$. The matrix $\beta \in \mathbb{R}^{d \times r}$ of factor loadings is deterministic and constant in time; a common assumption in the literature, which is also supported by \citet*{reiss-todorov-tauchen:15a}. The form \eqref{specificExample} of $c_{t}$ can be motivated by a standard multi-factor model, where the total risk associated with any of the $d$ assets can be decomposed into a systematic and an idiosyncratic component. The systematic component $\beta e_{t} \beta^{ \top}$ is composed of loadings on the $r$ underlying priced common factors in the economy, $\beta$, and the risk of those factors, $e_{t}$. The idiosyncratic component $g_{t}$ is asset-specific and can therefore be reduced by diversification.
	
	Suppose that, given an initial set of $d$ individual securities, we start forming a new set of (orthogonalized) portfolios by taking linear combinations of the original assets, as prescribed by the APT model of \citet*{ross:76a}, and a standard way to implement factor models in practice when constructing sorting portfolios. Then, these new assets are generally diversified enough to assume that $C_{g} = O(d^{-1})$. To identify the number of driving factors, $r$, we assume that $\lVert d^{- \gamma} \beta^{ \top} \beta - I_{r} \rVert_{ \infty}  = o(1)$ (as $d \rightarrow \infty$) and $s_{r}(e_{t}) \geq \varepsilon$ for some absolute constants $\gamma \in [0,1]$ and $\varepsilon \in (0, \infty)$. This corresponds to Assumption 4 in \citet*{ait-sahalia-xiu:17a}, but it is slightly more general as the assets are assumed to be diversified portfolios.
	
	Now, given a suitable drift process $( \mu_{t})_{t \in [0,1]}$, Corollary \ref{rankConcentration} applies with regularization parameter $\lambda = O \big(d^{ \gamma} \sqrt{ \Delta_{n} \log(d)} \big)$. In particular, due to the fact $0 \leq s_{k} ( \Sigma) - s_{k} ( \beta E \beta^{ \top}) \leq \lVert \Gamma \rVert_{ \infty}$ with $E = \int_0^1 e_t \mathrm{d} t$ and $\Gamma = \int_0^1 g_t \mathrm{d} t$, it follows from \eqref{rankCor} that
	\begin{equation}\label{rankExample}
	\rank ( \beta E \beta^{ \top}; \lambda) \leq \rank( \widehat{ \Sigma}_{n}^{ \lambda}) \leq \rank (\beta E \beta^\top ; \delta \lambda-\lVert \Gamma \rVert_\infty)
	\end{equation}
	with probability at least $1 - d^{-1}$. By assumption $\vert s_{r}( \beta E \beta^{ \top}) - d^{ \gamma} s_{r}(E) \vert = o(d^{ \gamma})$ and $s_{r}(E) \geq \varepsilon$ (the former follows from arguments as in the proof of Theorem~1 in \cite{ait-sahalia-xiu:17a}), and hence $d^{ \gamma} = O(s_{r}( \beta E \beta^{ \top}))$. Combining this with \eqref{rankExample}, we conclude that $\rank( \widehat{ \Sigma}_{n}^{ \lambda}) = r$ with probability at least $1 - d^{-1}$ if both $\log(d) = o( \Delta_{n}^{-1})$ and $\Delta_{n}^{-1}= o(d^{2+2 \gamma} \log(d))$, and $n$ is large. Or to put it differently, $\rank( \widehat{ \Sigma}_{n}^{ \lambda})$ is, with a probability tending to one, exactly the number of underlying factors in the model.
	
	We remark that the restrictions imposed above are too weak to ensure that the Frobenius estimation error $\lVert \widehat{ \Sigma}_{n}^{ \lambda} - \Sigma \rVert_{2}$ tends to zero, meaning that our estimator can be used to detect the underlying factor structure even in very high-dimensional settings. To ensure the estimation error is small, we need $d^{1+2 \gamma} \log(d) = o( \Delta_{n}^{-1})$ rather than $\log(d) = o( \Delta_{n}^{-1})$.
\end{example}

\subsection{Selection of tuning parameter} \label{section:tuning-parameter}

In view of Theorem \ref{lowerBound} and Corollary \ref{rankConcentration}, it follows that $\widehat{ \Sigma}_{n}^{ \lambda}$ can be of low rank and accurate, given optimal tuning of $\lambda$. However, as evident from \eqref{rankRegularization}, $\lambda$ depends on the latent spot variance process $(c_t)_{t \in [0,1]}$ through $\nu_{c,2}$ and $\nu_{c, \infty}$ as well as the unknown absolute constant $\gamma$, whose value can be important in finite samples.

We remark that $\widehat{ \Sigma}_{n} - \Sigma$ provides an estimate of the null matrix, but it is perturbed by randomness in the data. Hence, a good choice of shrinkage parameter exactly shrinks the eigenvalues of $\widehat{ \Sigma}_{n} - \Sigma$ to zero (in view of problem \eqref{equation:lasso-estimation} with $\widehat{ \Sigma}_{n}$ replaced by $\widehat{ \Sigma}_{n} - \Sigma$). By Proposition \ref{proposition:soft-thresholding}, this means $\lambda = 2 \lVert \widehat{ \Sigma}_{n} - \Sigma \rVert_{ \infty}$.

The above is unobservable. To circumvent this problem and facilitate the calculation of our estimator in Section \ref{section:simulation} and \ref{section:empirical}, we adopt a data-driven shrinkage selection by exploiting the subsampling technique of \citet*{christensen-podolskij-thamrongrat-veliyev:17a}, see also \citet*{politis-romano-wolf:99a} and \cite*{kalnina:11a}. To explain the procedure in short, suppose for notational convenience that $n = \Delta_{n}^{-1}$. We select an integer $L$---the number of subsamples---that divides $n$ and assign log-returns successively to each subsample. Hence, the $l$th subsample consists of the increments $\bigl( \Delta_{(k - 1)L + l}^{n} Y \bigr)_{k = 1, \ldots, n/L}$ for $l = 1, \dots, L$. We denote the associated subsampled RV by
\begin{equation}
\widehat{\Sigma}_{n,l} = \frac{1}{n/L} \sum_{k=1}^{n/L} ( \sqrt{n} \Delta_{(k-1)L+l}^{n} Y) ( \sqrt{n} \Delta_{(k-1)L+l}^{n} Y)^{ \top},
\end{equation}
Note that the random matrices in the sequence $(\widehat{ \Sigma}_{n,l})_{l = 1}^{L}$ are asymptotically conditionally independent, as $n \rightarrow \infty$. Moreover, it follows from \citet*{christensen-podolskij-thamrongrat-veliyev:17a} that as $n \rightarrow \infty$, $L \rightarrow \infty$ and $n/L \rightarrow \infty$, the half-vectorization of $\sqrt{n/L}( \widehat{\Sigma}_{n,l} - \widehat{\Sigma}_{n})$ and $\sqrt{n}(\widehat{\Sigma}_{n} - \Sigma)$ converge in law to a common mixed normal distribution, derived in \citet*[][Theorem 1]{barndorff-nielsen-shephard:04a}.

Hence, in each subsample $\lambda = 2 \lVert \widehat{\Sigma}_{n} - \Sigma \rVert_{ \infty}$ can be approximated by
\begin{equation}
\lambda_{l} = \frac{2}{ \sqrt{L}} \lVert \widehat{ \Sigma}_{n,l} - \widehat{ \Sigma}_{n} \rVert_{ \infty}
\end{equation}
As an estimator of $\lambda$, we therefore take the sample average:
\begin{equation}
\lambda^{ \ast} = \frac{1}{L} \sum_{l=1}^{L} \lambda_{l}.
\end{equation}

\section{Estimation of the local variance} \label{section:local-volatility}

As noted in Section \ref{section:introduction}, the rank of the local variance $c_{t}$ is often much smaller than the rank of $\Sigma$. In fact, although $\Sigma$ may be well-approximated by a matrix of low rank, we should expect that $\rank (\Sigma) = d$. On the other hand, there are many situations where it is reasonable to expect that $\rank(c_{t})$ is small, in which case it provides valuable insight about the complexity of the underlying model. This motivates developing a theory for the spot version of our penalized estimator with particular attention on its ability to identify $\rank (c_t)$.

To estimate $c_{t}$, we follow \citet*{jacod-protter:12a} and apply a localized realized variance, which is defined over a block of $\lfloor h_{n} / \Delta_{n} \rfloor \geq 2$ log-price increments with $h_{n} \in (0,1)$. The RV computed over the time window $[t, t+h_{n}]$, for $t \in (0,1-h_{n}]$, is then defined as follows:
\begin{equation}
\widehat{ \Sigma}_{n}(t;t+h_n) = \sum_{k = \lfloor t / \Delta_{n} \rfloor+1}^{ \lfloor(t+h_{n}) / \Delta_{n} \rfloor}
( \Delta_{k}^{n} Y)( \Delta_{k}^{n} Y)^{ \top}.
\end{equation}
The corresponding penalized version $\widehat{ \Sigma}_{n}^{ \lambda} (t;t+h_{n})$ is computed as
\begin{equation}
\widehat{ \Sigma}_{n}^{ \lambda} (t;t+h_{n}) = \argmin_{A \in \mathbb{R}^{d \times d}} \bigl( \lVert \widehat{ \Sigma}_n(t;t+h_{n}) - A \rVert_{2}^{2} + \lambda \lVert A \rVert_{1} \bigr),
\end{equation}
or by using Proposition \ref{proposition:soft-thresholding} with $\widehat{ \Sigma}_{n}$ replaced by $\widehat{ \Sigma}_{n}(t;t+h_{n})$. Then, the penalized estimator $\hat{c}_{n}^{ \lambda}(t)$ of $c_{t}$ is given by
\begin{equation} \label{penVolatility}
\hat{c}_{n}^{ \lambda}(t) = \frac{ \widehat{ \Sigma}_{n}^{ \lambda} (t;t+h_{n})}{h_{n}}.
\end{equation}
Also, we write
\begin{equation}
\Sigma(t_{1};t_{2}) = \int_{t_{1}}^{t_{2}} c_{s} \mathrm{d}s,
\end{equation}
for the QV over $(t_{1},t_{2}]$ for arbitrary time points $t_{1},t_{2} \in [0,1]$ with $t_{1} < t_{2}$.

Recall that for a convex and increasing function $\psi \colon [0, \infty) \rightarrow [0, \infty)$ with $\psi(0) = 0$, the Orlicz norm of a random variable $Z$ with values in the Hilbert space $( \mathbb{R}^{d \times d}, \lVert \: \cdot \: \rVert_{2})$ is defined as:
\begin{equation}
\lVert Z \rVert_{ \psi} \equiv \inf \bigl\{c > 0 : \mathbb{E}[ \psi( \lVert Z \rVert_{2}/c)] \leq 1 \bigr\}.
\end{equation}
This setting includes, in particular, the $L^{p}( \mathbb{P})$ norms (with $\psi(s) = s^{p}$), but also the $\psi_{1}$ and $\psi_{2}$ norms for sub-exponential ($\psi(s) = e^{s}-1$) and sub-Gaussian ($\psi(s) = e^{s^{2}}-1$) random variables. For convenience, we further impose the mild restriction that the range of $\psi$ is $[0, \infty)$, so it admits an inverse $\psi^{-1}$ on $[0, \infty)$.

\begin{theorem} \label{localVol}
	Suppose that Assumption \ref{assumption:A} holds with $\nu_{ \mu} \leq \nu_{c, \infty} \leq \nu_{c,2}$. Furthermore, let $t \in \big[ \frac{h_n}{2},1- \frac{3h_n}{2} \big]$ and assume $\sup_{0 \leq u<s \leq 1} \lVert c_{s} - c_{u} \rVert_{ \psi}/ \sqrt{s-u} \leq \nu_{c, \psi}$ and $h_{n} \geq 2 \Delta_{n} \frac{ \nu_{c,2}}{ \nu_{c, \infty}} \log(d)$. Take the regularization parameter $\lambda$ as
	\begin{equation} \label{localLambda}
	\lambda = \gamma \sqrt{h_{n} \Delta_{n} \nu_{c,2} \nu_{c, \infty} \log(d)}
	\end{equation}
	for a sufficiently large absolute constant $\gamma$. Then,
	\begin{equation} \label{localInequality}
	\lVert \hat{c}_{n}^{ \lambda}(t) - c_{t} \rVert_{2}^{2} \leq \kappa \gamma^{2} \bigg( \frac{ \nu_{c,2} \nu_{c, \infty} \Delta_{n} \rank \big( \Sigma(t- \frac{h_n}{2};t+ \frac{3h_n}{2}) \big)}{h_{n}} + h_{n} \nu_{c, \psi}^{2} \bigg) \log(d),
	\end{equation}
	with probability at least $1-d^{-1} - \psi( \sqrt{ \log(d)})^{-1}$ for some absolute constant $\kappa$.
\end{theorem}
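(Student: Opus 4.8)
The plan is to decompose the error as $\hat{c}_{n}^{\lambda}(t)-c_{t}=I+II$, where
$I=h_{n}^{-1}\bigl(\widehat{\Sigma}_{n}^{\lambda}(t;t+h_{n})-\Sigma(t;t+h_{n})\bigr)$ is the estimation error of the penalized localized RV and $II=h_{n}^{-1}\Sigma(t;t+h_{n})-c_{t}$ is the discretization (bias) error. Then $\lVert\hat{c}_{n}^{\lambda}(t)-c_{t}\rVert_{2}^{2}\le 2\lVert I\rVert_{2}^{2}+2\lVert II\rVert_{2}^{2}$, and it suffices to bound $\lVert I\rVert_{2}^{2}$ by the first term on the right of \eqref{localInequality} on an event of probability at least $1-d^{-1}$, and $\lVert II\rVert_{2}^{2}$ by the second term on an event of probability at least $1-\psi(\sqrt{\log(d)})^{-1}$.

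For $I$ I would mimic Section~\ref{section:concentration} on the block $[t,t+h_{n}]$. The first step is a localized version of Theorem~\ref{applyingBernstein}: writing $\widehat{\Sigma}_{n}(t;t+h_{n})-\Sigma(t;t+h_{n})$ as a sum of martingale differences over the $\lfloor(t+h_{n})/\Delta_{n}\rfloor-\lfloor t/\Delta_{n}\rfloor$ increments in the block plus a drift remainder and applying Theorem~\ref{theorem:bernstein} exactly as in the proof of Theorem~\ref{applyingBernstein} --- the only change being that the predictable variance proxy $\sum_{k}C_{k}$ now scales like $h_{n}\nu_{c,2}\nu_{c,\infty}\Delta_{n}$ rather than $\nu_{c,2}\nu_{c,\infty}\Delta_{n}$ --- gives a tail bound of the form $6d\exp\!\bigl(-x(\gamma h_{n}\nu_{c,2}\nu_{c,\infty}\Delta_{n})^{-1}\min\{x,h_{n}\nu_{c,\infty}\}\bigr)$. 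With $\lambda$ as in \eqref{localLambda}, the constraint $h_{n}\ge 2\Delta_{n}\nu_{c,2}\nu_{c,\infty}^{-1}\log(d)$, the assumption $\nu_{\mu}\le\nu_{c,\infty}$ (which keeps the drift remainder subdominant, exactly as in Corollary~\ref{consequenceOfPenConcentration}), and $\gamma$ chosen large, this yields $2\lVert\widehat{\Sigma}_{n}(t;t+h_{n})-\Sigma(t;t+h_{n})\rVert_{\infty}\le\lambda$ with probability at least $1-d^{-1}$. On that event Proposition~\ref{boundOn}, applied to the block versions of $\widehat{\Sigma}_{n},\widehat{\Sigma}_{n}^{\lambda},\Sigma$, together with the choice $A=\Sigma(t;t+h_{n})$ in its infimum, gives $\lVert\widehat{\Sigma}_{n}^{\lambda}(t;t+h_{n})-\Sigma(t;t+h_{n})\rVert_{2}^{2}\le 3\lambda^{2}\rank\bigl(\Sigma(t;t+h_{n})\bigr)$. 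Since $[t,t+h_{n}]\subseteq[t-\tfrac{h_{n}}{2},t+\tfrac{3h_{n}}{2}]$ and $c_{s}\succeq 0$, we have $\Sigma(t;t+h_{n})\preceq\Sigma(t-\tfrac{h_{n}}{2};t+\tfrac{3h_{n}}{2})$, hence $\ker\Sigma(t-\tfrac{h_{n}}{2};t+\tfrac{3h_{n}}{2})\subseteq\ker\Sigma(t;t+h_{n})$ and therefore $\rank(\Sigma(t;t+h_{n}))\le\rank(\Sigma(t-\tfrac{h_{n}}{2};t+\tfrac{3h_{n}}{2}))$. Dividing by $h_{n}^{2}$ and inserting $\lambda^{2}=\gamma^{2}h_{n}\Delta_{n}\nu_{c,2}\nu_{c,\infty}\log(d)$ produces the claimed bound for $\lVert I\rVert_{2}^{2}$.

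For $II$ I would use only convexity of $\psi$ and the definition of the Orlicz norm. Since $II=h_{n}^{-1}\int_{t}^{t+h_{n}}(c_{s}-c_{t})\,\mathrm{d}s$, we have $\lVert II\rVert_{2}\le Y:=h_{n}^{-1}\int_{t}^{t+h_{n}}\lVert c_{s}-c_{t}\rVert_{2}\,\mathrm{d}s$. Jensen's inequality applied to the convex $\psi$ and the probability measure $h_{n}^{-1}\mathrm{d}s$ on $[t,t+h_{n}]$, followed by Tonelli, gives $\mathbb{E}[\psi(Y/c)]\le h_{n}^{-1}\int_{t}^{t+h_{n}}\mathbb{E}[\psi(\lVert c_{s}-c_{t}\rVert_{2}/c)]\,\mathrm{d}s$ for any $c>0$. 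Taking $c=\nu_{c,\psi}\sqrt{h_{n}}$ and using that $\psi$ is increasing together with $\lVert c_{s}-c_{t}\rVert_{\psi}\le\nu_{c,\psi}\sqrt{s-t}\le\nu_{c,\psi}\sqrt{h_{n}}$, each integrand term is at most $\mathbb{E}[\psi(\lVert c_{s}-c_{t}\rVert_{2}/(\nu_{c,\psi}\sqrt{s-t}))]\le 1$, so $\mathbb{E}[\psi(Y/c)]\le 1$, i.e. $\lVert Y\rVert_{\psi}\le\nu_{c,\psi}\sqrt{h_{n}}$. Markov's inequality through $\psi$ then gives $\mathbb{P}(Y>\nu_{c,\psi}\sqrt{h_{n}\log(d)})\le\psi(\sqrt{\log(d)})^{-1}$, hence $\lVert II\rVert_{2}^{2}\le\nu_{c,\psi}^{2}h_{n}\log(d)$ with probability at least $1-\psi(\sqrt{\log(d)})^{-1}$. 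A union bound over the two events and a bookkeeping of the numerical factors into $\kappa$ then finish the proof.

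I expect the main obstacle to be the localized concentration inequality for $\lVert\widehat{\Sigma}_{n}(t;t+h_{n})-\Sigma(t;t+h_{n})\rVert_{\infty}$: one must re-run the already delicate argument behind Theorem~\ref{applyingBernstein} on a sub-block while tracking the precise dependence of every constant on $h_{n}$, and, in addition, absorb the boundary mismatch created by the floor functions --- the estimator $\widehat{\Sigma}_{n}(t;t+h_{n})$ in fact targets $\Sigma(\lfloor t/\Delta_{n}\rfloor\Delta_{n};\lfloor(t+h_{n})/\Delta_{n}\rfloor\Delta_{n})$, which differs from $\Sigma(t;t+h_{n})$ by integrals of $c_{s}$ over two intervals of length at most $\Delta_{n}$, contributing an error of order $\nu_{c,2}\Delta_{n}/h_{n}$ to $\hat{c}_{n}^{\lambda}(t)$; the hypothesis $h_{n}\ge 2\Delta_{n}\nu_{c,2}\nu_{c,\infty}^{-1}\log(d)$ is precisely what makes this slack (and the drift remainder) dominated by the first term of \eqref{localInequality}. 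By contrast, term $II$ is routine once one notices the Jensen/Orlicz trick, and it is the source of the somewhat unusual failure probability $\psi(\sqrt{\log(d)})^{-1}$.
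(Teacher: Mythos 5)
Your proposal is correct in substance and its overall architecture (spectral-norm concentration for the block RV, the oracle inequality of Proposition \ref{boundOn} with the rank bound, monotonicity of rank under the positive-semidefinite ordering to pass to $\rank(\Sigma(t-\tfrac{h_n}{2};t+\tfrac{3h_n}{2}))$, and the Jensen/Orlicz/Markov treatment of the bias, which is essentially verbatim the paper's argument and indeed the source of the $\psi(\sqrt{\log(d)})^{-1}$ term) matches the paper's. The one place where you genuinely diverge is the step you yourself flag as the main obstacle: you propose to re-run the Bernstein machinery of Theorem \ref{applyingBernstein} directly on the sub-block, re-tracking how every constant scales with $h_n$ and absorbing by hand both the drift remainder and the floor-function boundary mismatch between $\Sigma(t;t+h_n)$ and $\Sigma(\lfloor t/\Delta_n\rfloor\Delta_n;\lfloor(t+h_n)/\Delta_n\rfloor\Delta_n)$. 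The paper sidesteps all of this with a time change: setting $\tau(s)=\varepsilon s+\lfloor t\Delta_n^{-1}\rfloor\Delta_n$ with $\varepsilon=h_n+\delta_n\Delta_n$, the process $\bar{Y}_s=Y_{\tau(s)}$ is again an It\^{o} semimartingale on $[0,1]$ whose RV at sampling frequency $\bar{\Delta}_n=\Delta_n/\varepsilon$ is exactly $\widehat{\Sigma}_n(t;t+h_n)$, with drift/volatility bounds rescaled by $\varepsilon$; Corollary \ref{consequenceOfPenConcentration} then applies verbatim, the drift condition $\nu_\mu\le\nu_{c,\infty}$ transfers automatically, and the only residual discretization effect is the elementary estimate $\lVert h_n^{-1}\bar{\Sigma}-\varepsilon^{-1}\bar{\Sigma}\rVert_2^2\le(\nu_{c,2}\Delta_n/h_n)^2\le 2\nu_{c,2}\nu_{c,\infty}\Delta_n/h_n$, controlled by the same hypothesis $h_n\ge 2\Delta_n\frac{\nu_{c,2}}{\nu_{c,\infty}}\log(d)$ you invoke. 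So your route works and requires no new ideas, but at the cost of repeating the delicate constant-tracking of Theorem \ref{applyingBernstein}; the paper's time-change reduction buys you that localization for free, at the price of a small bookkeeping step for $\varepsilon\in[h_n,2h_n]$ and the boundary scaling.
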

The bound on the estimation error of $\hat{c}_{n}^{ \lambda}(t)$ builds on Corollary \ref{consequenceOfPenConcentration}, but Theorem \ref{localVol} further enforces a smoothness condition on $(c_{t})_{t \in [0,1]}$. It entails a trade-off between the smoothness of spot variance and the concentration probability associated with \eqref{localInequality}. For example, if $(c_{t})_{t \in[0,1]}$ is $1/2$-H\"{o}lder continuous in $L^{2}( \mathbb{P})$, which corresponds to setting $\psi(s) = s^{2}$, then we end up with a concentration probability converging to one at rate $\log(d)^{-1}$, which is slower than for the PRV. On the other hand, if $(c_{t})_{t \in[0,1]}$ is $1/2$-H\"{o}lder continuous in the sub-Gaussian norm $\psi(s) = e^{s^{2}}-1$, the concentration probability converges to one at the rate $d^{-1}$, which is equivalent to the PRV.

Note that with $d$ fixed, \eqref{localInequality} reveals how to select the length of the estimation window $h_{n}$ optimally. The upper bound depends on $\Delta_{n}/h_{n}$ and $h_{n}$, and for these terms to converge to zero equally fast, we should take $h_{n} \asymp \sqrt{ \Delta_{n}}$. This is consistent with the literature on spot variance estimation; see, e.g., \citet*{jacod-protter:12a}.

The next result, which relies on Corollary \ref{rankConcentration}, shows that the rank and performance of $\hat{c}_{n}^{ \lambda}(t)$ can be controlled simultaneously.

\begin{theorem}\label{localVol:rank}
	Suppose that Assumption \ref{assumption:A} holds with $\nu_{ \mu} \leq \nu_{c, \infty} \leq \nu_{c,2}$ and fix $\delta \in (0,1/4)$. Furthermore, let $t \in [ \frac{h_n}{2},1 - \frac{3h_n}{2}]$ and assume that $\sup_{0 \leq u < s \leq 1} \lVert c_{s} - c_{u} \rVert_{ \psi} / \sqrt{s-u} \leq \nu_{c, \psi}$. Suppose further that $h_{n} \geq 2 \Delta_{n} \frac{ \nu_{c,2}}{ \nu_{c, \infty}} \log(d)$ and consider a regularization parameter $\lambda$ such that
	\begin{equation}\label{localLambda2}
	\lambda = \gamma \sqrt{h_{n} \Delta_{n} \nu_{c,2} \nu_{c, \infty} \log(d)},
	\end{equation}
	for a sufficiently large constant $\gamma$ depending only on $\delta$. Then, with probability at least $1-d^{-1}- \psi ( \sqrt{ \log(d)})^{-1}$, it holds that
	\begin{equation}\label{rankInequality}
	\lVert \hat{c}^{ \lambda}_{n}(t) - c_{t} \rVert_{2}^{2} \leq \kappa \gamma^{2} \biggl( \frac{ \nu_{c,2} \nu_{c, \infty} \Delta_{n} \rank \bigl( \Sigma(t- \frac{h_{n}}{2};t+ \frac{3h_{n}}{2}) \bigr)}{h_{n}} + h_{n} \nu_{c, \psi}^{2} \biggr) \log(d),
	\end{equation}
	and
	\begin{equation} \label{rankBounds_theorem}
	\rank (c_{t}; \overline{ \varepsilon}) \leq \rank( \hat{c}_{n}^{ \lambda}(t)) \leq \rank(c_{t}; \underline{ \varepsilon}),
	\end{equation}
	where
	\begin{align}
	\begin{split}
	\underline{ \varepsilon} &\equiv \delta \gamma \max \biggl \{ \sqrt{ \frac{ \nu_{c,2} \nu_{c, \infty} \Delta_{n}}{h_{n}}} - \nu_{c, \psi} \sqrt{h_{n}},0 \biggr \} \sqrt{ \log(d)}, \\[0.10cm]
	\overline{ \varepsilon} &\equiv \gamma \biggl( \sqrt{ \frac{ \nu_{c,2} \nu_{c, \infty} \Delta_{n}}{h_{n}}} + \nu_{c, \psi} \sqrt{h_{n}} \biggr) \sqrt{ \log(d)},
	\end{split}
	\end{align}
	and $\kappa$ is an absolute constant. If, in addition, $\underline{ \varepsilon} > 0$ and $\overline{ \varepsilon} \leq s_{ \rank(c_{t})} (c_{t})$, then both \eqref{rankInequality} and 	$\rank (\hat{c}_{n}^{ \lambda}(t)) = \rank(c_{t})$ hold with probability at least $1-d^{-1}- \psi ( \sqrt{ \log(d)})^{-1}$.	
\end{theorem}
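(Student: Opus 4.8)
The plan is to run in parallel the two arguments that led to Theorem \ref{localVol} and to Corollary \ref{rankConcentration}, keeping track of the single high-probability event on which both the error bound and the rank bound hold. First I would set $\widehat{\Sigma}_n(t) \equiv \widehat{\Sigma}_n(t-\tfrac{h_n}{2};t+\tfrac{3h_n}{2})$ and $\Sigma^\ast \equiv \Sigma(t-\tfrac{h_n}{2};t+\tfrac{3h_n}{2})$, so that $\widehat{c}_n^\lambda(t)$ is (up to the factor $h_n$) the PRV built from $\widehat{\Sigma}_n(t)$ with the \emph{centered} window $[t-\tfrac{h_n}{2},t+\tfrac{3h_n}{2}]$ rather than $[t,t+h_n]$ — note that $\widehat{\Sigma}_n^\lambda(t;t+h_n)$ and the soft-thresholding of $\widehat{\Sigma}_n(t)$ differ only by a window shift, and since $t\in[\tfrac{h_n}{2},1-\tfrac{3h_n}{2}]$ the centered window is admissible; this is exactly the device already used to prove Theorem \ref{localVol}, so I would import it wholesale. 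The key stochastic input is a bound on $\lVert \widehat{\Sigma}_n(t) - c_t \,(2h_n)\rVert_\infty$, which is split by the triangle inequality into $\lVert \widehat{\Sigma}_n(t) - \Sigma^\ast\rVert_\infty$, controlled by Theorem \ref{applyingBernstein} applied on the window of length $2h_n$ (contributing the $\sqrt{h_n\Delta_n\nu_{c,2}\nu_{c,\infty}\log d}$ scale and a $d^{-1}$ exceedance probability), and $\lVert \Sigma^\ast - 2h_n c_t\rVert_\infty = \lVert \int (c_s-c_t)\,ds\rVert_\infty$, controlled by the Orlicz-smoothness hypothesis $\lVert c_s-c_u\rVert_\psi/\sqrt{s-u}\le\nu_{c,\psi}$ via a maximal-inequality/Markov argument on the event $\{\psi(\sqrt{\log d})^{-1}\}$ (contributing the $h_n^{3/2}\nu_{c,\psi}$ scale).

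On the intersection of these two events — probability at least $1-d^{-1}-\psi(\sqrt{\log d})^{-1}$ — I would then invoke the two \emph{deterministic} inequalities already in the paper. Applying Proposition \ref{boundOn} (rescaled by $1/h_n$) with $A=c_t$ and with $\lambda$ as in \eqref{localLambda2} chosen so that $2\lVert\widehat{\Sigma}_n(t)-2h_n c_t\rVert_\infty \le \lambda$ gives \eqref{rankInequality}, exactly reproducing the computation behind Theorem \ref{localVol}. For the rank statement I would apply Theorem \ref{theorem:rank-relation} to the pair $(\widehat{\Sigma}_n(t), 2h_n c_t)$ with $\bar\lambda = 2\lVert\widehat{\Sigma}_n(t)-2h_n c_t\rVert_\infty$: the theorem yields $\rank(2h_n c_t;\lambda)\le \rank(\widehat{\Sigma}_n^\lambda(t))\le \rank(2h_nc_t;\tfrac12(\lambda-\bar\lambda))$, and after dividing the thresholds by $2h_n$ (which leaves ranks unchanged, since $\rank(aB;a\varepsilon)=\rank(B;\varepsilon)$ for $a>0$) this becomes $\rank(c_t;\tfrac{\lambda}{2h_n})\le \rank(\widehat c_n^\lambda(t))\le \rank(c_t;\tfrac{\lambda-\bar\lambda}{4h_n})$. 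Substituting the bound $\bar\lambda \le 2\gamma(\sqrt{\nu_{c,2}\nu_{c,\infty}\Delta_n h_n\log d} + \nu_{c,\psi}h_n^{3/2}\sqrt{\log d})/(\text{const})$ — i.e. $\bar\lambda/(2h_n)\lesssim \overline\varepsilon/\gamma$ — and the choice of $\gamma$ depending on $\delta$ so that $\lambda/(2h_n)\le\overline\varepsilon$ and $(\lambda-\bar\lambda)/(4h_n)\ge\underline\varepsilon = \delta\gamma(\sqrt{\nu_{c,2}\nu_{c,\infty}\Delta_n/h_n}-\nu_{c,\psi}\sqrt{h_n})_+\sqrt{\log d}$ gives \eqref{rankBounds_theorem}; the requirement $\delta<1/4$ (rather than $<1/2$ as in Corollary \ref{rankConcentration}) is forced precisely by the extra factor $2$ coming from the window of length $2h_n$ versus $h_n$ in the denominators. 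The final sentence is immediate: if $\underline\varepsilon>0$ then $\rank(c_t;\underline\varepsilon)\le\rank(c_t)$ is automatic, and if in addition $\overline\varepsilon\le s_{\rank(c_t)}(c_t)$ then $\rank(c_t;\overline\varepsilon)=\rank(c_t)$, so the sandwich collapses to equality on the same event.

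The main obstacle I anticipate is bookkeeping rather than conceptual: one must verify that a \emph{single} value of $\gamma$ (depending only on $\delta$, and absorbed into the absolute constant $\kappa$ for \eqref{rankInequality}) simultaneously (i) is large enough that $2\bar\lambda\le\lambda$ with the required margin for Proposition \ref{boundOn}, (ii) makes the upper threshold $\lambda/(2h_n)$ no larger than $\overline\varepsilon$, and (iii) makes the lower threshold $(\lambda-\bar\lambda)/(4h_n)$ at least $\underline\varepsilon$ — and that the scaling constants in the two tail events in Theorem \ref{applyingBernstein} and in the Orlicz–Markov step are compatible with the side condition $h_n\ge 2\Delta_n\tfrac{\nu_{c,2}}{\nu_{c,\infty}}\log d$, which is exactly what is needed for the Bernstein bound on the window $[t-\tfrac{h_n}{2},t+\tfrac{3h_n}{2}]$ to be in its sub-Gaussian (rather than sub-exponential) regime, so that the $\sqrt{\log d}$ rate is attained. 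A minor point to check is that the centered window $[t-\tfrac{h_n}{2},t+\tfrac{3h_n}{2}]$ stays inside $[0,1]$ under the stated range of $t$, which it does; everything else is a rescaling of arguments already spelled out for the global estimator.
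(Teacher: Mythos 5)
The central step of your proposal is not available. By \eqref{penVolatility}, $\hat{c}_n^{\lambda}(t)=h_n^{-1}\widehat{\Sigma}_n^{\lambda}(t;t+h_n)$, i.e.\ it is the soft-thresholding of the RV computed from the increments in $[t,t+h_n]$ only. The matrix $\widehat{\Sigma}_n(t-\tfrac{h_n}{2};t+\tfrac{3h_n}{2})$ is built from roughly twice as many increments, and its soft-thresholding is a genuinely different matrix: the two do not ``differ only by a window shift,'' and there is no identity linking $\rank(\hat{c}_n^{\lambda}(t))$ or $\lVert \hat{c}_n^{\lambda}(t)-c_t\rVert_2$ to the penalized estimator built on the enlarged window. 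This is also not the device used for Theorem \ref{localVol}: there the time change maps the grid-aligned version of $[t,t+h_n]$ (length $\varepsilon=h_n+\delta_n\Delta_n\in[h_n,2h_n]$) onto $[0,1]$, Corollary \ref{rankConcentration} (respectively Corollary \ref{consequenceOfPenConcentration}) is applied to that process with the \emph{random} target $\bar{\Sigma}$, the local QV over that window, and the enlarged interval $[t-\tfrac{h_n}{2},t+\tfrac{3h_n}{2}]$ enters only through the inclusion of the actual window in it, which gives $\rank(\bar{\Sigma})\leq\rank(\Sigma(t-\tfrac{h_n}{2};t+\tfrac{3h_n}{2}))$. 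As written, your argument therefore establishes properties of a different estimator, and the conclusions about $\hat{c}_n^{\lambda}(t)$ do not follow.

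A second gap persists even after correcting the window to $[t,t+h_n]$: you apply Proposition \ref{boundOn} and Theorem \ref{theorem:rank-relation} with the deterministic target ($h_nc_t$, in your version $2h_nc_t$), which requires $2\lVert\widehat{\Sigma}_n(t;t+h_n)-h_nc_t\rVert_\infty\leq\lambda$, respectively $\bar{\lambda}<\lambda$. On your event this error is of order $\sqrt{\nu_{c,2}\nu_{c,\infty}h_n\Delta_n\log(d)}+\nu_{c,\psi}h_n^{3/2}\sqrt{\log(d)}$, and the prescribed $\lambda$ in \eqref{localLambda2} dominates the bias term only when $\nu_{c,\psi}h_n\lesssim\sqrt{\nu_{c,2}\nu_{c,\infty}\Delta_n}$, i.e.\ essentially when $\underline{\varepsilon}>0$ --- a condition the theorem does not impose for \eqref{rankInequality}. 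The paper avoids this by keeping $\bar{\Sigma}$ as the target in the oracle and rank steps (so only the Bernstein-type bound is needed there) and transferring to $c_t$ afterwards: additively for the Frobenius bound, and via the $1$-Lipschitz continuity of singular values in $\lVert\,\cdot\,\rVert_2$ for the rank bounds, which converts $\rank(\bar{\Sigma};2\delta\lambda)$ and $\rank(\bar{\Sigma};\lambda)$ into $\rank(c_t;\underline{\varepsilon})$ and $\rank(c_t;\overline{\varepsilon})$. Relatedly, $\delta<1/4$ arises because Corollary \ref{rankConcentration} is invoked with parameter $2\delta\in(0,1/2)$ so as to absorb $\varepsilon\leq 2h_n$ (together with the harmless requirement $\delta\gamma\geq\sqrt{2}$), not because the estimator is computed over a window of length $2h_n$. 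Your final step (collapsing the sandwich when $\underline{\varepsilon}>0$ and $\overline{\varepsilon}\leq s_{\rank(c_t)}(c_t)$) is correct, but it rests on the earlier steps.
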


\noindent Consider the setting of Theorem~\ref{localVol:rank}. For the upper bound in \eqref{rankBounds_theorem} to be useful, we must have that $\underline{ \varepsilon} > 0$ or, equivalently,
\begin{equation} \label{windowRestriction}
\nu_{c, \psi}^{2} h_{n}^{2} < \nu_{c,2} \nu_{c, \infty} \Delta_{n}.
\end{equation}
Suppose that $\nu_{c,2}$, $\nu_{c,\infty}$, and $\nu_{c,\psi}$ do not depend on $n$. The inequality \eqref{windowRestriction} can then be achieved in large samples by choosing $h_{n} = o( \sqrt{ \Delta_{n}})$. However, as pointed out above one should take $h_{n} \asymp \sqrt{ \Delta_{n}}$ in order to achieve an optimal rate for the estimation error $\lVert \hat{c}_{n}^{ \lambda}(t) - c_{t} \rVert_{2}^{2}$. In that case, \eqref{windowRestriction} translates directly into an upper bound on $\nu_{c, \psi}$, which concerns the smoothness of $(c_t)_{t \in [0,1]}$.

When can the term $\rank( \Sigma (t- \frac{h_{n}}{2}; t+ \frac{3h_{n}}{2}))$ in the upper bound of \eqref{rankInequality} be replaced by $\rank(c_t)$? This question appears difficult to answer
in general, but it is not too difficult to show that the replacement is valid if the volatility process is locally constant.

\section{Simulation study} \label{section:simulation}

In this section, we do a Monte Carlo analysis to inspect the finite sample properties of the PRV, $\widehat{\Sigma}_n^\lambda$. The aim here is to demonstrate the ability of our estimator to detect the presence of near deficient rank in a large-dimensional setting. In view of Theorem \ref{localVol:rank}, this can be achieved by a proper selection of $\lambda$.
We simulate a $d = 30$-dimensional log-price process $Y_{t}$. The size of the vector corresponds to the number of assets in our empirical investigation. We assume $Y_{t}$ follows a slightly modified version of the $r = 3$-factor model proposed in \citet*{ait-sahalia-xiu:19b}:
\begin{equation} \label{equation:model}
\mathrm{d} Y_{t} = \beta \mathrm{d} F_{t} + \mathrm{d} Z_{t}, \quad t \in [0,1],
\end{equation}
where $F_{t} \in \mathbb{R}^{r}$ is a vector of systematic risk factors, which has the dynamic:
\begin{equation}
\mathrm{d} F_{t} = \alpha \mathrm{d} t + \sigma_{t} L \mathrm{d} W_{t},
\end{equation}
and $W_{t} \in \mathbb{R}^{r}$ is a standard Brownian motion. The drift term $\alpha \in \mathbb{R}^{r}$ is constant $\alpha = (0.05,0.03,0.02)^{ \top}$. The random volatility matrix $\sigma_{t} = \diag ( \sigma_{1,t}, \sigma_{2,t}, \sigma_{3,t}) \in \mathbb{R}^{r \times r}$, where $\diag ( \: \cdot\:)$ is a diagonal matrix with coordinates $\:\cdot\:$, is simulated as a \citet*{heston:93a} square-root process:
\begin{equation} \label{equation:factor-volatility}
\mathrm{d} \sigma_{j,t}^{2} = \kappa_{j} ( \theta_{j} - \sigma_{j,t}^{2}) \mathrm{d}t + \eta_{j} \sigma_{j,t} \Big( \rho_{j} \mathrm{d} W_{j,t} + \sqrt{1 - \rho_{j}^{2}} \mathrm{d} \widetilde{W}_{j,t} \Big),
\end{equation}
for $j = 1, \dots, r$, where $\widetilde{W}_{t} \in \mathbb{R}^{r}$ is an $r$-dimensional standard Brownian motion independent of $W_{t}$. As in \citet*{ait-sahalia-xiu:19b}, the parameters are $\kappa = (3, 4, 5)^{ \top}$, $\theta = (0.05, 0.04, 0.03)^{ \top}$, $\eta = (0.3, 0.4, 0.3)^{ \top}$, and $\rho = (-0.60, -0.40, -0.25)^{ \top}$.
The factor correlation is captured by the Cholesky component $L$:
\begin{equation}
\rho = L L^{ \top} = \begin{bmatrix}
1.00 & 0.05 & 0.10 \\
0.05 & 1.00 & 0.15 \\
0.10 & 0.15 & 1.00
\end{bmatrix}.
\end{equation}
The idiosyncratic component $Z_{t} \in \mathbb{R}^{d}$ is given by
\begin{equation}
\mathrm{d} Z_{t} = \sqrt{g_t} \mathrm{d} B_{t},
\end{equation}
where $g_{t} = \diag ( \gamma_{1,t}^2, \dots, \gamma_{d,t}^2)$ with
\begin{equation}
\mathrm{d} \gamma_{j,t}^{2} = \kappa_{Z} ( \theta_{Z} - \gamma_{j,t}^{2}) \mathrm{d}t + \eta_{Z} \gamma_{j,t} \mathrm{d} \widetilde{B}_{j,t},
\end{equation}
for $j = 1, \dots, d$, and $B_{t} \in \mathbb{R}^{d}$ and $\widetilde{B}_{t} \in \mathbb{R}^{d}$ are independent $d$-dimensional standard Brownian motions. Thus, $g_{t}$ is the instantaneous variance of the unsystematic component. We set $\kappa_{Z} = 4$, $\theta_{Z} = 0.25$, and $\eta_{Z} = 0.06$. Hence, the idiosyncratic error varies independent in the cross-section of assets, but the parameters controlling the dynamics are common.

\begin{figure}[t!]
\begin{center}
\caption{Relative importance of eigenvalues in simulated model. \label{figure:screeSim}}
\begin{tabular}{cc}
\small{Panel A: Across replica.} & \small{Panel B: On average.} \\
\includegraphics[height=8cm,width=0.48\textwidth]{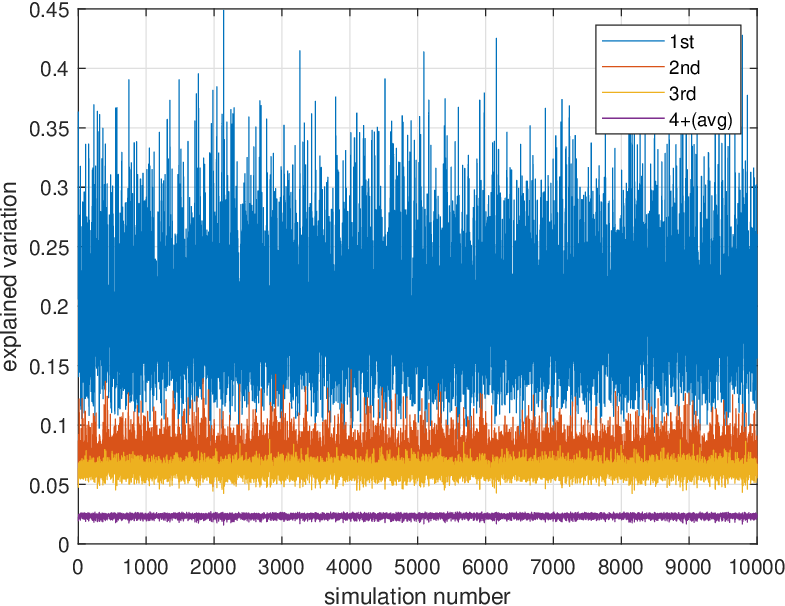} &
\includegraphics[height=8cm,width=0.48\textwidth]{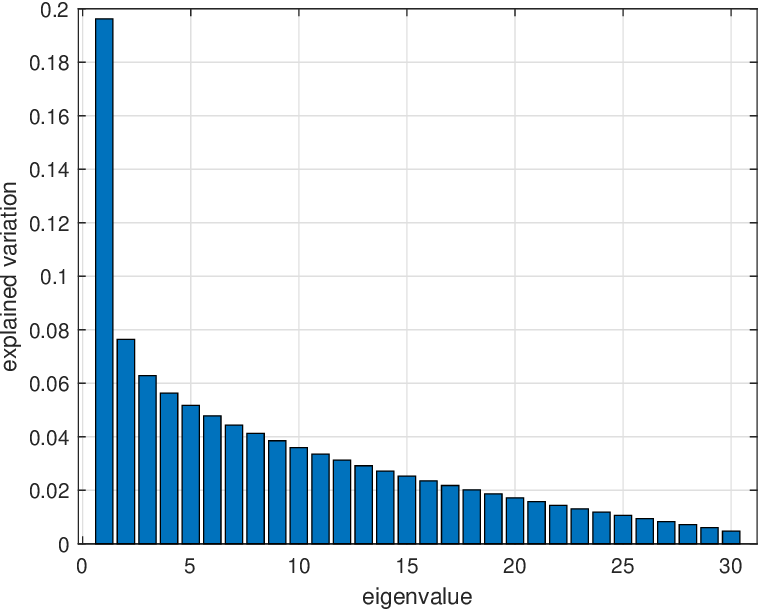}
\end{tabular}
\begin{scriptsize}
\parbox{\textwidth}{\emph{Note.} In Panel A, we plot for daily RV the relative size of the three largest eigenvalues, and also the average of the remaining 27 eigenvalues. In Panel B, we plot a histogram of the sample average of the relative size of each of the eigenvalues across simulations.}
\end{scriptsize}
\end{center}
\end{figure}

The above setup implies
\begin{equation} \label{covStructure}
c_{t} = \beta \sigma_{t} \rho \sigma_{t} \beta^{ \top} + g_{t},
\end{equation}
so the spot covariance matrix has full rank at every time point $t$. However, it has only $r = 3$ large eigenvalues associated with the systematic factors, while the remaining $d-r$ associated with the idiosyncratic variance are relatively small. Note that in contrast to \citet*{ait-sahalia-xiu:19b}, we allow for time-varying idiosyncratic variance.

We set $\Delta_{n} = 1/n$ with $n = 78$. This corresponds to 5-minute sampling frequency within a 6.5 hour window. Hence, $d$ is relatively large compared to $n$.
We construct 10,000 independent replications of this model. At the beginning of each simulation, we draw the associated factor loadings $\beta \in \mathbb{R}^{d \times r}$ at random such that the first column (interpreted as the loading on the market factor) are from a uniform distribution on $(0.25,1.75)$, i.e. $\beta_{i1} \sim U(0.25,1.75)$, $i = 1, \dots, d$. The range of the support is consistent with the realized beta values reported in Table \ref{table:penalized-rv} in Section \ref{section:empirical}. The remaining columns are generated as $\beta_{ij} \sim N(0, 0.5^{2})$, $i = 1, \dots, d$ and $j = 2$ and $3$.

In Panel A of Figure \ref{figure:screeSim}, we plot the relative size of the three largest eigenvalues, extracted from the corresponding RV, together with the average of the remaining 27 eigenvalues. In Panel B, we include a histogram of the relative size of the eigenvalues across simulations. We observe that it is generally challenging to distinguish important factors from idiosyncratic variation, since the relative size of the eigenvalues decays smoothly with the exception of the largest (and perhaps the second largest) eigenvalue.

In each simulation, we employ the subsampling procedure described in Section \ref{section:tuning-parameter} with $L = 6$ to choose $\lambda^{*}$.\footnote{We also employed $L = 13$ subsamples, but there were no discernible change in the results.}

\begin{figure}[t!]
\begin{center}
\caption{Properties of the PRV. \label{figure:simulation-rank}}
\begin{tabular}{cc}
\small{Panel A: Shrinkage parameter.} & \small{Panel B: Rank.} \\
\includegraphics[height=8cm,width=0.48\textwidth]{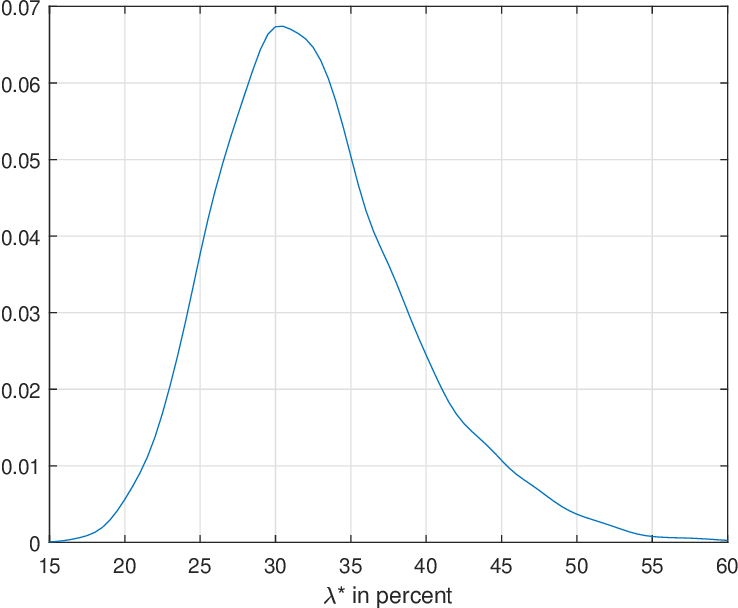} &
\includegraphics[height=8cm,width=0.48\textwidth]{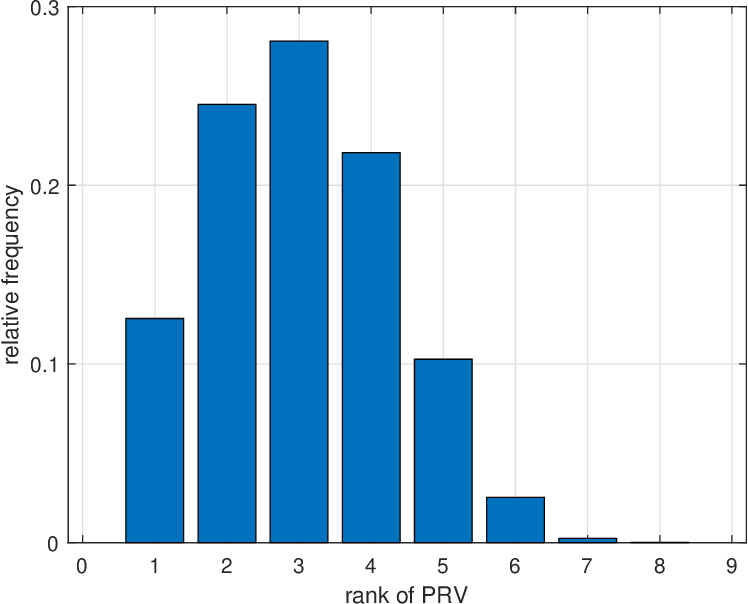}
\end{tabular}
\begin{scriptsize}
\parbox{\textwidth}{\emph{Note.} In Panel A, we report a kernel density estimate of the shrinkage parameter $\lambda$, expressed in percent of the maximal eigenvalue of the RV matrix. In Panel B, we show the relative frequency histogram of the associated rank of the PRV.}
\end{scriptsize}
\end{center}
\end{figure}

In Panel A of Figure \ref{figure:simulation-rank}, we report the distribution of the $\lambda^{ \ast}$ parameter across simulations, which is here expressed in percent of the maximal eigenvalue of RV. Overall, the penalization is relatively stable with a tendency to shrink around one-fourth to one-third of the largest eigenvalue most of the times.

In Panel B, we plot a histogram of the relative frequency of the rank of the PRV. In general, the PRV does a good job at identifying the number of driving factors, given the challenging environment. Although there are variations in the rank across simulations, caused by the various sources of randomness incorporated in our model, the picture is close to the truth. This means we tend to identify about one-to-three driving factors. The average rank estimate is 1.79, so as expected there is a slight tendency to overshrink leading to a small downward bias. Based on these findings, we can confidently apply the PRV in the empirical application.

\section{Empirical application} \label{section:empirical}

In this section, we apply the PRV to an empirical dataset. The sample period is from January 3, 2007 through May 29, 2020 (3,375 trading days in total) and includes both the financial crisis around 2007 -- 2009 and partly the recent events related to Covid-19.

\begin{table}[H]
\begin{footnotesize}\setlength{ \tabcolsep}{0.70cm}
\begin{center}
\caption{Descriptive statistics of high-frequency data.
\label{table:penalized-rv}}
\vspace*{-0.25cm}
\begin{tabular}{lrcccccccccc}
\hline \hline
&&&& \multicolumn{2}{c}{realized beta} \\
\cline{5-6}
code & \multicolumn{1}{c}{$N$} & $\sigma_{ \text{RV}}$ & $\sigma_{ \epsilon}$ & median & [$q_{0.05}, q_{0.95}$] \\
\hline 
AAPL & 171,438 & 0.227 & 0.165 & 0.990 & [0.539, 1.784] \\
AXP & 42,379 & 0.279 & 0.207 & 1.016 & [0.532, 1.730] \\
BA & 41,579 & 0.249 & 0.192 & 0.918 & [0.427, 1.489] \\
CAT & 41,943 & 0.256 & 0.187 & 1.154 & [0.570, 1.747] \\
CSCO & 111,876 & 0.223 & 0.163 & 0.997 & [0.498, 1.470] \\
CVX & 54,037 & 0.223 & 0.163 & 0.887 & [0.370, 1.407] \\
DIS & 56,563 & 0.213 & 0.157 & 0.863 & [0.411, 1.303] \\
DOW & 35,108 & 0.240 & 0.179 & 1.002 & [0.467, 1.567] \\
GS & 42,098 & 0.287 & 0.216 & 1.091 & [0.572, 1.812] \\
HD & 50,777 & 0.222 & 0.163 & 0.878 & [0.422, 1.343] \\
IBM & 35,955 & 0.182 & 0.131 & 0.769 & [0.402, 1.136] \\
INTC & 126,855 & 0.233 & 0.174 & 1.026 & [0.531, 1.586] \\
JNJ & 54,527 & 0.151 & 0.120 & 0.534 & [0.168, 0.982] \\
JPM & 127,497 & 0.296 & 0.216 & 1.135 & [0.680, 1.911] \\
KO & 55,932 & 0.156 & 0.126 & 0.496 & [0.116, 0.891] \\
MCD & 36,777 & 0.168 & 0.134 & 0.523 & [0.137, 0.902] \\
MMM & 24,253 & 0.183 & 0.131 & 0.825 & [0.426, 1.201] \\
MRK & 60,637 & 0.197 & 0.161 & 0.695 & [0.241, 1.166] \\
MSFT & 162,004 & 0.211 & 0.154 & 0.957 & [0.501, 1.543] \\
NKE & 32,390 & 0.216 & 0.171 & 0.806 & [0.360, 1.260] \\
PFE & 97,351 & 0.196 & 0.159 & 0.726 & [0.281, 1.184] \\
PG & 53,787 & 0.157 & 0.128 & 0.490 & [0.117, 0.918] \\
RTX & 31,372 & 0.203 & 0.149 & 0.861 & [0.445, 1.248] \\
TRV & 18,855 & 0.219 & 0.175 & 0.688 & [0.220, 1.190] \\
UNH & 37,540 & 0.251 & 0.212 & 0.771 & [0.247, 1.307] \\
V & 44,085 & 0.228 & 0.188 & 0.835 & [0.322, 1.336] \\
VZ & 69,710 & 0.185 & 0.146 & 0.560 & [0.139, 1.033] \\
WBA & 37,302 & 0.218 & 0.181 & 0.786 & [0.261, 1.308] \\
WMT & 59,091 & 0.165 & 0.135 & 0.525 & [0.154, 0.906] \\
XOM & 90,346 & 0.203 & 0.146 & 0.809 & [0.332, 1.261] \\
\hline \hline
\end{tabular}
\smallskip
\begin{scriptsize}
\parbox{0.98\textwidth}{\emph{Note.} 
``code'' is the ticker symbol, 
``$N$'' is the number of transactions, 
``$\sigma_{ \text{RV}}$'' is the annualized square-root realized variance, 
``$\sigma_{ \epsilon}$'' is the standard deviation of the idiosyncratic component. 
The latter is computed as $\sigma_{ \epsilon}^{2} = \sigma_{ \text{RV}}^{2} - \beta^{2} \sigma_{\text{SPY}}^{2}$, 
where $\beta$ is the realized beta between the asset and SPY (acting as market portfolio). 
}
\end{scriptsize}
\end{center}
\end{footnotesize}\end{table}

We look at high-frequency data from the 30 members of the Dow Jones Industrial Average (DJIA) index as of April 6, 2020.\footnote{On April 6, 2020, Raytheon Technologies (RTX) replaced United Technologies (UTX) in the DJIA index following a merger of United Technologies and Raytheon Company. Moreover, on April 2, 2019, Dow replaced DowDuPont following a spin-off from the parent company, which itself was a fusion between Dow Chemical Company and DuPont in 2017. We employ high-frequency data for the preceding member prior to each index update.} The ticker codes of the various firms are listed in Table \ref{table:penalized-rv}, along with selected descriptive statistics. As readily seen from Table \ref{table:penalized-rv}, most of these companies are very liquid. In order to compute a daily RV matrix estimate we collect the individual 5-minute log-return series spanning the 6.5 hours of trading on U.S. stock exchanges from 9:30am to 4:00pm EST.\footnote{We truncate 5-minute log-returns that exceed three local standard deviations, as gauged by the daily bipower variation estimator, in order to get shelter from potential jumps.} Hence, our analysis is based on a relatively large cross-section (i.e., $d = 30$) compared to the sampling frequency (i.e., $n = 78$). The realized beta in Table \ref{table:penalized-rv} is calculated with SPY acting as market portfolio. The latter is an exchange-traded fund that tracks the S\&P500 and its evolution is representative of the overall performance of the U.S. stock market. The dispersion of realized beta is broadly consistent with the simulated values generated in Section \ref{section:simulation}.

\begin{figure}[t!]
\begin{center}
\caption{Proportion of variance explained by each eigenvalue. \label{figure:scree}}
\begin{tabular}{cc}
\small{Panel A: Over time.} & \small{Panel B: On average.} \\
\includegraphics[height=8cm,width=0.48\textwidth]{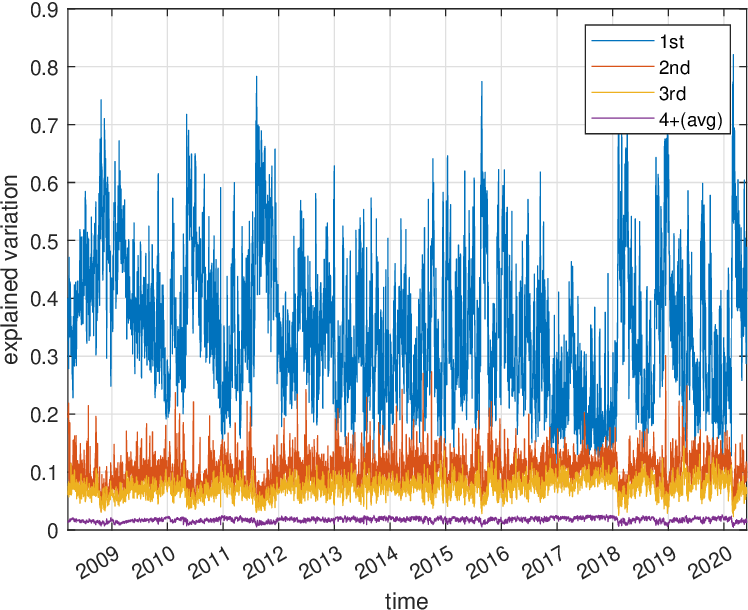} &
\includegraphics[height=8cm,width=0.48\textwidth]{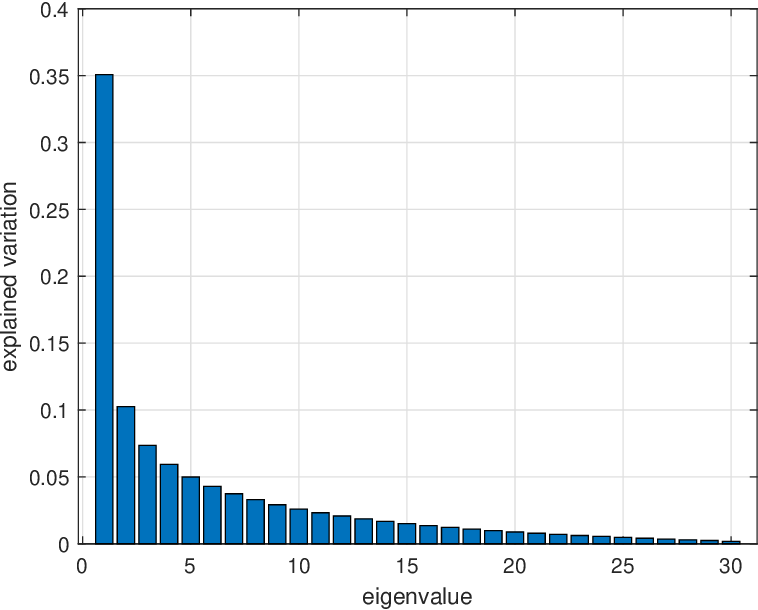}
\end{tabular}
\begin{scriptsize}
\parbox{\textwidth}{\emph{Note.} In Panel A, we plot for each day in the sample the relative size of the three largest eigenvalues of RV, as well as the average of the remaining 27. In Panel B, we plot a histogram of the time series average of the relative size of each eigenvalues.}
\end{scriptsize}
\end{center}
\end{figure}

In Figure \ref{figure:scree}, we depict the factor structure in the time series of RV. In Panel A, we compute on a daily basis the proportion of the total variance (i.e., trace) explained by each eigenvalue, whereas Panel B reports the sample average of this statistic. As consistent with \citet*{ait-sahalia-xiu:19b}, we observe a pronounced dynamic evolution in the contribution of each eigenvalue to RV with notably changes corresponding to times of severe distress in financial markets. The first factor captures the vast majority of aggregate return variation (about 35\% on average). It is followed by a few smaller---but still important---factors accountable for an incremental 25\% -- 30\% of the total variance, whereas the last 25 or so eigenvalues are relatively small.

\begin{figure}[t!]
\begin{center}
\caption{Rank of PRV. \label{figure:rank}}
\begin{tabular}{cc}
\includegraphics[height=8cm,width=0.48\textwidth]{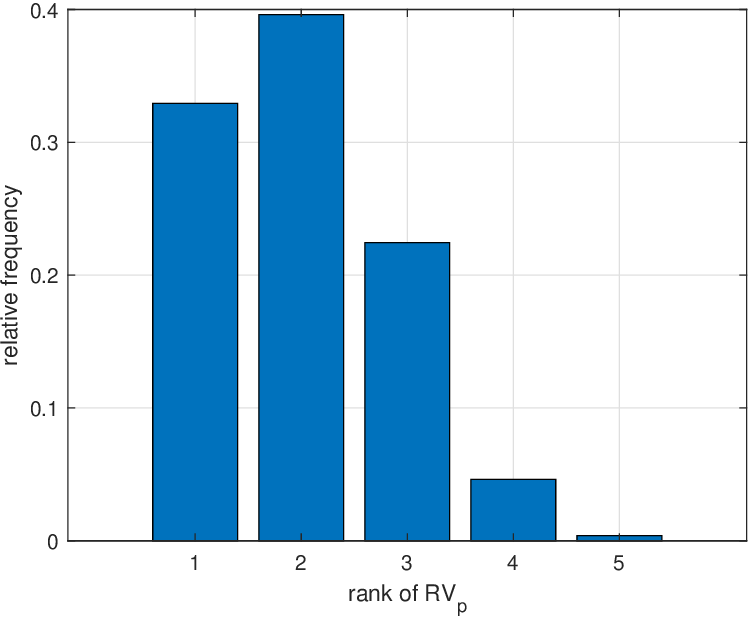} &
\includegraphics[height=8cm,width=0.48\textwidth]{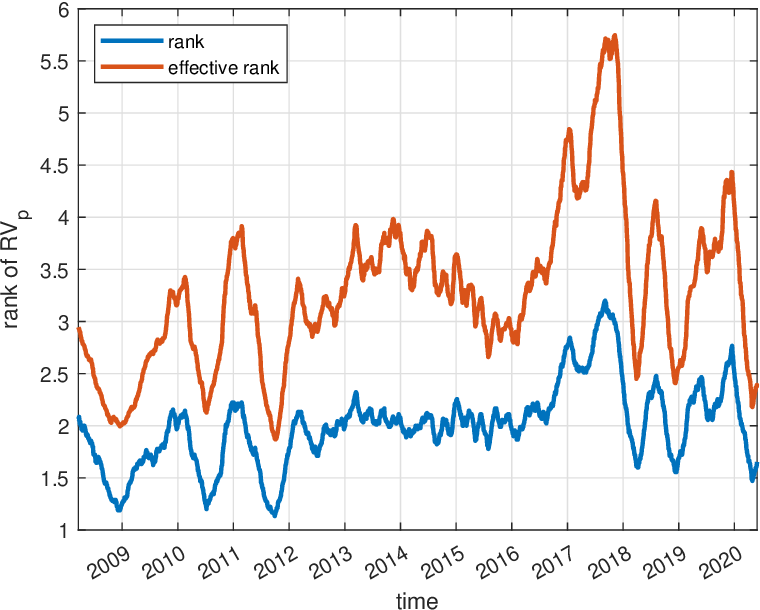}
\end{tabular}
\begin{scriptsize}
\parbox{\textwidth}{\emph{Note.} In Panel A, we report the relative frequency of the rank of the PRV. In Panel B, we compare the estimated rank to the effective rank, where both are smoothed over a three-month moving average window.}
\end{scriptsize}
\end{center}
\end{figure}

Next, we turn our attention to the PRV estimator. To select the shrinkage parameter $\lambda$, we follow the subsampling implementation from the simulation section. The relative frequency histogram of the rank of the PRV is reported in Panel A of Figure \ref{figure:rank}, whereas Panel B reports a three-month (90-day) moving average of the rank. The vast majority (around 95\%) of the daily rank estimates are between one and three, which is consistent with a standard Fama--French three-factor interpretation of the data. There are relatively few rank estimates at four, and it never exceeds five (with about a dozen of the latter). In Panel B, we observe the rank varies over time in an intuitive fashion, often dropping close to a single-factor representation during times of crisis, where correlations tend to rise. As a comparison, we compute the effective rank of the RV (cf. \eqref{effectiveRank}), which does not depend on a shrinkage parameter. There is a high association between the series, which is consistent with a soft-thresholding that eliminates smaller eigenvalues of the RV.

\section{Conclusion}

In this paper, we develop a novel and powerful penalized realized variance estimator, which is applicable to estimate the quadratic variation of high-dimensional continuous-time semimartingales under a low rank constraint. Our estimator relies on regularization and adapts the principle ideas of the LASSO from regression analysis to the field of high-frequency volatility estimation in a high-dimensional setting. We derive a non-asymptotic analysis of our estimator, including bounds on its estimation error and rank. The estimator is found to be minimax optimal up to a logaritmic factor. We design a completely data-driven procedure for selecting the shrinkage parameter based on a subsampling approach. In a simulation study, the estimator is found to possess good properties. In our empirical application, we confirm a low rank environment that is consistent with a three-factor structure in the cross-section of log-returns from the large-cap segment of the U.S. stock market.

\appendix

\section{Appendix of proofs} \label{appendix:proofs}

This appendix presents the proofs of the results from the main text.

\begin{proof}[Proof of Proposition \ref{proposition:soft-thresholding}]
	Since the map $A \mapsto L_{n} (A) \equiv \lVert \widehat{\Sigma}_{n} - A \rVert_{2}^{2} + \lambda \lVert A \rVert_1$ is strictly convex, the minimizer $A$ of $L_{n}$ is uniquely determined by the property that $0 \in \mathbb{R}^{d \times d}$ belongs to the set
	\begin{equation} \label{uniqueProperty}
	\partial L_{n}(A) = \Bigg\{2(A- \widehat{ \Sigma}_n) + \lambda \bigg( \sum_{k=1}^{d} u_{k}(A) v_{k}(A)^{ \top} + P_{S_{1}(A)^{ \perp}} W P_{S_{2}(A)^{ \perp}} \bigg) : \lVert W \rVert_{ \infty} \leq 1 \Bigg\}.
	\end{equation}
	Here, $\partial L_{n}(A)$ denotes the subdifferential of $L_{n}$ at $A$, and we employ \cite{watson:92a} to get the explicit expression of this set in \eqref{uniqueProperty}. Moreover, $\{u_{1}(A), \dots, u_{d}(A) \}$ and $\{v_{1}(A), \dots, v_{d}(A) \}$ are the vectors introduced in the notation paragraph associated with the SVD of $A$, whereas $S_{1}(A)$ and $S_{2}(A)$ are the corresponding linear spans, and $P_{S}$ is the projection matrix onto the subspace $S$. With $\widehat{ \Sigma}_{n}^{ \lambda}$ given as in \eqref{RVthreshold}, we find that $\partial L_{n}( \widehat{ \Sigma}_{n}^{ \lambda})$ coincides with the set of matrices $V$ of the form:
	\begin{equation} \label{uniqueProperty2}
	V = - 2 \sum_{k \colon s_{k} \leq \lambda/2}s_{k} u_{k}u_{k}^{ \top} + \lambda P_{U} W P_{U},
	\end{equation}
	where $W \in \mathbb{R}^{d \times d}$ is such that $\lVert W \rVert_{ \infty} \leq 1$, and $U$ is the linear span of $\{u_{k}\, :\, s_{k} \leq \lambda/2 \}$. With $W = \frac{2}{ \lambda} \sum_{k \colon s_{k} \leq \lambda/2}s_{k} u_{k} u_{k}^{ \top}$, it follows that $\lVert W \rVert_{ \infty} \leq 1$ by submultiplicativity of the norm and $P_{U} WP_{U} = W$. Hence, \eqref{uniqueProperty2} shows that $0 \in \partial L_{n} ( \widehat{ \Sigma}_{n}^{ \lambda})$, so $\widehat{ \Sigma}_{n}^{ \lambda}$ is the unique minimizer of $L_{n}$. \qed
\end{proof}

\begin{proof}[Proof of Proposition \ref{boundOn}]
	The structure of the proof is based similar to the one of Theorem~1 in \cite{koltchinskii-lounici-tsybakov:11a}, but is modified to fit our setting. Starting from the definition of $\widehat{ \Sigma}_{n}^{ \lambda}$, and since $L_{n}(A) = \lVert \widehat{ \Sigma}_{n} - A \rVert_{2}^{2} + \lambda \lVert A \rVert_1$, it follows that
	\begin{equation} \label{firstStep}
	L_{n}( \widehat{ \Sigma}_{n}^{ \lambda}) \leq L_{n}(A)
	\end{equation}
	for any $A \in \mathbb{R}^{d \times d}$. Note also that
	\begin{equation*}
	\lVert \widehat{ \Sigma}_{n} - A \rVert_{2}^{2} - \lVert A - \Sigma \rVert_{2}^{2} - \lVert \widehat{ \Sigma}_{n} - \widehat{ \Sigma}_{n}^{ \lambda} \rVert_{2}^{2} + \lVert \widehat{ \Sigma}_{n}^{ \lambda} - \Sigma \rVert_{2}^{2} = 2 \langle \widehat{ \Sigma}_{n}^{ \lambda} - A, \widehat{ \Sigma}_{n} - \Sigma \rangle.
	\end{equation*}
	By using the above identity and \eqref{firstStep},
	\begin{equation*}
	\lVert \widehat{ \Sigma}_{n}^{ \lambda} - \Sigma \rVert_{2}^{2} \leq \lVert A - \Sigma \rVert_{2}^{2} + 2 \langle \widehat{ \Sigma}_{n}^{ \lambda} - A, \widehat{ \Sigma}_{n} - \Sigma \rangle + \lambda \big( \lVert A \rVert_{1} - \lVert \widehat{ \Sigma}_{n}^{ \lambda} \rVert_{1} \big).
	\end{equation*}
	The first term in the minimum follows by observing that
	\begin{align*}
	2 \langle \widehat{ \Sigma}_{n}^{ \lambda} - A, \widehat{ \Sigma}_{n}- \Sigma \rangle + \lambda ( \lVert A \rVert_{1} - \lVert \widehat{ \Sigma}_{n}^{ \lambda} \rVert_{1}) &\leq 2 \lVert \widehat{ \Sigma}_{n}^{ \lambda} - A \rVert_{1} \lVert \widehat{ \Sigma}_{n}- \Sigma \rVert_{ \infty} + \lambda ( \lVert A \rVert_{1} - \lVert \widehat{ \Sigma}_{n}^{ \lambda} \rVert_{1}) \\[0.10cm] &\leq 2 \lambda \lVert A \rVert_{1},
	\end{align*}
	where we employ the trace duality $\langle A_{1}, A_{2} \rangle \leq \lVert A_{1} \rVert_{1} \lVert A_{2} \rVert_{ \infty}$, the triangle inequality, and the assumption that $2\lVert \widehat{ \Sigma}_{n}-\Sigma \rVert_\infty\leq \lambda$. To show the second part of the proposition, we observe that if $B$ is a subgradient of $L_{n}$ at $\widehat{ \Sigma}_{n}^{ \lambda}$, then by definition
	\begin{equation*}
	L_{n}(A) - L_{n}( \widehat{ \Sigma}_{n}^{ \lambda}) \geq \langle B,A - \widehat{ \Sigma}_{n}^{ \lambda} \rangle
	\end{equation*}
	or, equivalently,
	\begin{equation*}
	\lVert A \rVert_{2}^{2} + \lambda \lVert A \rVert_{1} - \lVert \widehat{ \Sigma}_{n}^{ \lambda} \rVert_{2}^{2} - \lambda \lVert \widehat{ \Sigma}_{n}^{ \lambda} \rVert_{1} \geq \langle B + 2 \widehat{ \Sigma}_{n}, A - \widehat{ \Sigma}_{n}^{ \lambda} \rangle
	\end{equation*}
	for all matrices $A$. This shows that $B + 2 \widehat{ \Sigma}_{n}$ is a subgradient of the function $A \mapsto \lVert A \rVert_{2}^{2} + \lambda \lVert A \rVert_{1}$ at $\widehat{ \Sigma}_{n}^{ \lambda}$ and, thus, $B = 2 \widehat{ \Sigma}_{n}^{ \lambda} - 2 \widehat{ \Sigma}_{n} + \lambda \widehat{V}$ for an appropriate $\widehat{V} \in \partial \lVert \widehat{ \Sigma}_{n}^{ \lambda} \rVert_{1}$. Moreover, because $\widehat{ \Sigma}_{n}^{ \lambda}$ is a minimizer of $L_{n}$, there must exist $B \in \partial L_{n}( \widehat{ \Sigma}_{n}^{ \lambda})$ such that $\langle B, \widehat{ \Sigma}_{n}^{ \lambda} -A \rangle \leq 0$ for all $A \in \mathbb{R}^{d \times d}$ (see, e.g., \cite[Section 2.4]{clarke:90a}). Combining these facts, we establish that
	\begin{equation}\label{keyIneq}
	2 \langle \widehat{ \Sigma}_{n}^{ \lambda}, \widehat{ \Sigma}_{n}^{ \lambda} - A \rangle + \lambda \langle \widehat{V}, \widehat{ \Sigma}_{n}^{ \lambda} - A \rangle \leq 2 \langle \widehat{ \Sigma}_{n}, \widehat{ \Sigma}_{n}^{ \lambda} - A \rangle
	\end{equation}
	for some $\widehat{V} \in \partial \lVert \widehat{ \Sigma}_{n}^{ \lambda} \rVert_{1}$ and any $A \in \mathbb{R}^{d \times d}$. Note the identity
	\begin{equation} \label{relation}
	2 \langle \widehat{ \Sigma}_{n}^{ \lambda} - \Sigma, \widehat{ \Sigma}_{n}^{ \lambda} - A \rangle = \lVert \widehat{ \Sigma}^{ \lambda}_{n} - \Sigma \rVert_{2}^{2} + \lVert \widehat{ \Sigma}^{ \lambda}_{n} - A \rVert_{2}^{2} - \lVert A - \Sigma \rVert_{2}^{2}.
	\end{equation}
	Now, fix $A \in \mathbb{R}^{d \times d}$ and consider a generic matrix $V \in \partial \lVert A \rVert_{1}$. Then, if we subtract $\langle 2 \Sigma + \lambda V, \widehat{ \Sigma}_{n}^{ \lambda} - A \rangle$ on both sides of \eqref{keyIneq}, exploit \eqref{relation}, and note that $\langle \widehat{V} - V, \widehat{ \Sigma}_{n}^{ \lambda} - A \rangle \geq 0$ by the monotonicity of subdifferentials (see \cite[Proposition 2.2.9]{clarke:90a}), we deduce that
	\begin{equation} \label{keyIneq2}
	\lVert \widehat{ \Sigma}_{n}^{ \lambda} - \Sigma \rVert_{2}^{2} + \lVert \widehat{ \Sigma}_{n}^{ \lambda} - A \rVert_{2}^{2}
	\leq \lVert A - \Sigma \rVert_{2}^{2} + \lambda \langle V,A - \widehat{ \Sigma}_{n}^{ \lambda} \rangle + 2 \langle \widehat{ \Sigma}_{n} - \Sigma, \widehat{ \Sigma}_{n}^{ \lambda} - A \rangle.
	\end{equation}
	We shall bound each of the last two terms on the right-hand side of \eqref{keyIneq2}, starting with $\langle V,A - \widehat{ \Sigma}_{n}^{ \lambda} \rangle$. According to \cite{watson:92a}, with $r = \rank(A)$ and $A = \sum_{k=1}^{r} s_{k} u_{k} v_{k}^{ \top}$ being the SVD of $A$, the subdifferential $\partial \lVert A\rVert_1$ has the characterization:
	\begin{equation*}
	\partial \lVert A \rVert_{1} = \bigg\{ \sum_{k=1}^{r} u_{k} v_{k}^{ \top} + P_{S_{1}^{ \perp}} W P_{S_{2}^{ \perp}} : \lVert W \rVert_{ \infty} \leq 1 \bigg\}.
	\end{equation*}
	Here, $S_{1}$ and $S_{2}$ denote the span of $\{u_{1}, \dots, u_{r} \}$ and $\{v_{1}, \dots, v_{r} \}$, respectively. By the polar decomposition, one finds that $W \in \mathbb{R}^{d \times d}$ with $\lVert W \rVert_{ \infty} = 1$ and $\langle W,P_{S_{1}^{ \perp}} \widehat{ \Sigma}_{n}^{ \lambda}P_{S_{2}^{ \perp}} \rangle = \lVert P_{S_{1}^{ \perp}} \widehat{ \Sigma}_{n}^{ \lambda} P_{S_{2}^{ \perp}} \rVert_{1}$. Fixing $V$ to be the subgradient associated with this choice of $W$,
	\begin{align} \label{term1}
	\begin{split}
	\langle V,A - \widehat{ \Sigma}_{n}^{ \lambda} \rangle &= \bigg\langle \sum_{k=1}^{r} u_{k} v_{k}^{ \top}, A - \widehat{ \Sigma}_{n}^{ \lambda} \bigg\rangle - \langle W, P_{S_{1}^{ \perp}} \widehat{ \Sigma}_{n}^{ \lambda} P_{S_{2}^{ \perp}} \rangle \\[0.10cm]
	&= \bigg\langle \sum_{k=1}^{r} u_{k} v_{k}^{ \top}, P_{S_{1}}(A - \widehat{ \Sigma}_{n}^{ \lambda}) P_{S_{2}} \bigg\rangle - \lVert P_{S_{1}^{ \perp}} \widehat{ \Sigma}_{n}^{ \lambda} P_{S_{2}^{ \perp}} \rVert_{1} \\[0.10cm]
	&\leq \lVert P_{S_{1}} ( \widehat{ \Sigma}_{n}^{ \lambda} - A) P_{S_{2}} \rVert_{1} - \lVert P_{S_{1}^{ \perp}} \widehat{ \Sigma}_{n}^{ \lambda} P_{S_{2}^{ \perp}} \rVert_{1} \\[0.10cm]
	&\leq \sqrt{r} \lVert \widehat{ \Sigma}_{n}^{ \lambda} - A \rVert_{2} - \lVert P_{S_{1}^{ \perp}} \widehat{ \Sigma}_{n}^{ \lambda} P_{S_{2}^{ \perp}} \rVert_{1}.
	\end{split}
	\end{align}
	The last term on the right-hand side of \eqref{keyIneq2} can be handled by setting $B = \widehat{ \Sigma}_{n} - \Sigma - P_{S_{1}^{ \perp}}( \widehat{ \Sigma}_{n} - \Sigma)P_{S_{2}^{ \perp}}$. The Cauchy--Schwarz inequality and trace duality then delivers the estimate:
	\begin{align} \label{intermediateStep}
	\begin{split}
	\langle \widehat{ \Sigma}_{n} - \Sigma, \widehat{ \Sigma}_{n}^{ \lambda} - A \rangle &\leq \lvert \langle B, \widehat{ \Sigma}_{n}^{ \lambda} - A \rangle \rvert + \lvert \langle \widehat{ \Sigma}_{n} - \Sigma, P_{S_{1}^{ \perp}} \widehat{ \Sigma}_{n}^{ \lambda} P_{S_{2}^{ \perp}} \rangle \rvert \\[0.10cm]
	&\leq \lVert B \rVert_{2} \lVert \widehat{ \Sigma}_{n}^{ \lambda} - A \rVert_{2} + \frac{ \lambda}{2} \lVert P_{S_{1}^{ \perp}} \widehat{ \Sigma}_{n}^{ \lambda} P_{S_{2}^{ \perp}} \rVert_{1}.
	\end{split}
	\end{align}
	For any given matrix $M \in \mathbb{R}^{d \times d}$:
	\begin{equation*}
	M - P_{S_{1}^{ \perp}} M P_{S_{2}^{ \perp}} = P_{S_{1}}M +P_{S_{1}^{ \perp}} M \big(I_{d} - P_{S_{2}^{ \perp}} \big) = P_{S_{1}}M + P_{S_{1}^{ \perp}} M P_{S_{2}}.
	\end{equation*}
	Thus,
	\begin{equation*}
	\lVert M - P_{S_{1}^{ \perp}} M P_{S_{2}^{ \perp}} \rVert_{2} \leq \sqrt{r} \big( \lVert P_{S_{1}} M \rVert_{ \infty} + \lVert P_{S_{1}^{ \perp}} M P_{S_{2}^{ \perp}} \rVert_{ \infty} \big) \leq 2 \sqrt{r} \lVert M \rVert_{ \infty}.
	\end{equation*}
	This shows that $\lVert B \rVert_{2} \leq \lambda \sqrt{r}$. Hence, by combining \eqref{keyIneq2} -- \eqref{intermediateStep} we get
	\begin{equation*}
	\lVert \widehat{ \Sigma}_{n}^{ \lambda} - \Sigma \rVert_{2}^{2} + \lVert \widehat{ \Sigma}_{n}^{ \lambda} - A \rVert_{2}^{2} \leq \lVert A - \Sigma \rVert_{2}^{2} + 3 \lambda \sqrt{r} \lVert \widehat{ \Sigma}_{n}^{ \lambda} - A \rVert_{2}.
	\end{equation*}
	By subtracting $\lVert \widehat{ \Sigma}_{n}^{ \lambda} - A \rVert_{2}^{2}$ on both sides and using the inequality $\alpha^{2}/4 \geq \alpha \beta - \beta^{2}$ with $\alpha = 3 \lambda \sqrt{r}$ and $\beta = \lVert \widehat{ \Sigma}_{n}^{ \lambda} - A \rVert_{2}$, we arrive at the second term in the minimum (with $A = \Sigma$):
	\begin{equation*}
	\lVert \widehat{ \Sigma}_{n}^{ \lambda} - \Sigma \rVert_{2}^{2} \leq \lVert A - \Sigma \rVert_{2}^{2} + 3 \lambda^{2}r.
	\end{equation*} \qed
\end{proof}

\begin{proof}[Proof of Theorem \ref{theorem:bernstein}]
	We set $S_{n} = \sum_{k=1}^{n} X_{k}$ and $\nu_{n} = \sum_{k=1}^{n} C_{k}$. The subadditivity of the maximal eigenvalue function $\lambda_{ \max} ( \cdot)$ implies that
	\begin{equation} \label{keyTerm}
	\lambda_{ \max} \bigg( \theta S_{n} - \frac{ \theta^{2}}{2(1- \theta)} \nu_{n} I_{d} \bigg) \geq \theta x - \frac{ \theta^{2}}{2(1- \theta)} \nu,
	\end{equation}
	whenever $\lambda_{ \max}(S_{n}) \geq x$, $\nu_{n} \leq \nu$ and $\theta \in (0,1)$. By exponentiating both sides of \eqref{keyTerm}, applying the spectral mapping theorem and that $\lambda_{ \max}(A) \leq \tr(A)$ for any positive semidefinite matrix $A$,
	\begin{align}\label{bernsteinProb}
	\begin{split}
	\mathbb{P} \big( \lambda_{ \max}(S_{n}) \geq x, \, \nu_{n} \leq \nu \big) &\leq \mathbb{P} \bigg(Y_{n} \geq \exp \bigg( \theta x - \frac{ \theta^{2}}{2(1- \theta)} \nu \bigg) \bigg) \\[0.10cm]
	&\leq \mathbb{E}[Y_{n}] \exp \bigg(- \theta x + \frac{ \theta^{2}}{2(1- \theta)} \nu \bigg),
	\end{split}
	\end{align}
	where $\displaystyle Y_{n} = \tr \bigg[ \exp \bigg( \theta S_{n} - \frac{ \theta^{2}}{2(1- \theta)} \nu_{n} I_{d} \bigg) \bigg]$. Suppose for the moment that $R=1$, so that $\displaystyle \lVert \mathbb{E}[ X_{k}^{p} \mid \mathcal{G}_{k-1}] \rVert_{ \infty} \leq \frac{p!}{2} C_{k}$ for all $k$ and integers $p \geq 2$. Since $\mathbb{E} \bigl[X_{k} \mid \mathcal{G}_{k-1} \bigr] = 0$, it then follows that
	\begin{align*}
	\bigl\lVert \mathbb{E} \bigl[ \exp( \theta X_{k}) \mid \mathcal{G}_{k-1} \bigr] \bigr\rVert_{ \infty} &\leq 1 + \sum_{p=2}^{ \infty} \frac{ \theta^{p} \lVert \mathbb{E}[ X_{k}^{p} \mid \mathcal{G}_{k-1}] \rVert_{ \infty}}{p!} \\[0.10cm]
	&\leq 1 + \frac{ \theta^{2}}{2(1- \theta)}C_{k} \leq \exp \bigg( \frac{ \theta^{2}}{2(1- \theta)}C_{k} \bigg).
	\end{align*}
	Thus, the matrix
	\begin{equation*}
	\exp \bigg( \frac{ \theta^{2}}{2(1- \theta)}C_{k} \bigg)I_{d}- \mathbb{E} \big[ \exp( \theta X_{k}) \mid \mathcal{G}_{k-1} \big]
	\end{equation*}
	is positive semidefinite, and hence it follows by \citet*[][Lemma 2.1]{tropp:11a} that $(Y_{k})_{k=1}^{n}$ is a positive supermartingale with $Y_{0} = d$. By combining this observation with \eqref{bernsteinProb} and choosing $\theta = x / (x+ \nu)$, we conclude that
	\begin{equation*}
	\mathbb{P} \big( \lambda_{ \max}(S_{n}) \geq x, \, \nu_{n} \leq \nu) \leq d \exp \bigg(- \frac{x^{2}}{2(x+ \nu)} \bigg).
	\end{equation*}
	The general result can now be deduced by taking $\widetilde{X}_{k} = X_{k}/R$. \qed
\end{proof}

A key ingredient to verify the conditions of Theorem \ref{theorem:bernstein} and prove Theorem \ref{applyingBernstein} below is a suitable version of the Burkholder--Davis--Gundy inequality. We shall employ the one given in \cite[Lemma 2.2]{seidler-sobukawa:03a}, which is restated here for convenience. Note that, although their result applies to martingales with values in a general Hilbert space, we specialize the formulation here to the finite-dimensional setting.
\begin{lemma}[\citet{seidler-sobukawa:03a}]\label{bdg} Let $m \in \mathbb{N}$ and $T \in (0, \infty)$. For any constant $p \in [2, \infty)$ there exists $\gamma_{p} \in (0, \infty)$ such that
	\begin{equation*}
	\mathbb{E} \bigg[ \sup_{t \in [0,T]} \Big\lVert \int_{0}^{t} \varphi_{s} \mathrm{d} W_{s} \Big\rVert_{2}^{p} \bigg] \leq \gamma_{p}^{p} \mathbb{E} \bigg[ \Big( \int_{0}^{T} \lVert \varphi_{t} \rVert_{2}^{2} \mathrm{d}t \Big)^{p/2} \bigg]
	\end{equation*}
	for all predictable processes $\varphi \colon \Omega \times [0,T] \to \mathbb{R}^{m \times d}$ with $\int_{0}^{T} \mathbb{E} \big[ \lVert \varphi_{t} \rVert_{2}^{p} \big] \mathrm{d} t < \infty$. Moreover, one can always take
	\begin{equation} \label{CpConstant}
	\gamma_{p} = \frac{4p}{p-1} \sqrt{p+ \frac{1}{2}}.
	\end{equation}
\end{lemma}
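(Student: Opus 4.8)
The plan is to establish the bound along the lines of \citet{seidler-sobukawa:03a}, combining It\^{o}'s formula with Doob's $L^{p}$ maximal inequality and a self-improving (bootstrap) argument. Write $M_{t} = \int_{0}^{t} \varphi_{s} \mathrm{d} W_{s}$, $M_{T}^{\ast} = \sup_{t \in [0,T]} \lVert M_{t} \rVert_{2}$, and $A_{T} = \int_{0}^{T} \lVert \varphi_{t} \rVert_{2}^{2} \, \mathrm{d}t$. The stated integrability hypothesis yields both $\mathbb{E}[A_{T}] < \infty$ (so that $M$ is a genuine continuous $L^{2}$-martingale) and $\mathbb{E}[A_{T}^{p/2}] < \infty$; to justify the manipulations below I would localize with the stopping times $\tau_{m} = \inf \{ t : \lVert M_{t} \rVert_{2} \vee A_{t} \geq m \}$, derive the inequality on $[0, T \wedge \tau_{m}]$ (where $M_{T \wedge \tau_{m}}^{\ast}$ is bounded, so in particular $\mathbb{E}[(M_{T \wedge \tau_{m}}^{\ast})^{p}] < \infty$), and then let $m \to \infty$ using monotone convergence on both sides, since $\tau_{m} \to \infty$ almost surely.

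The core step is It\^{o}'s formula applied to $x \mapsto \lVert x \rVert_{2}^{p}$, which is $C^{2}$ on $\mathbb{R}^{m}$ for $p \geq 2$ (its Hessian $p \lVert x \rVert_{2}^{p-2} I + p(p-2) \lVert x \rVert_{2}^{p-4} x x^{\top}$ extends continuously through the origin). Since $\sum_{i,j} ( \varphi_{t} \varphi_{t}^{\top})_{ij} (M_{t})_{i} (M_{t})_{j} = \lVert \varphi_{t}^{\top} M_{t} \rVert_{2}^{2} \leq \lVert M_{t} \rVert_{2}^{2} \lVert \varphi_{t} \rVert_{2}^{2}$ and $\tr ( \varphi_{t} \varphi_{t}^{\top}) = \lVert \varphi_{t} \rVert_{2}^{2}$, the drift of $\lVert M_{t} \rVert_{2}^{p}$ is dominated by $\tfrac{p(p-1)}{2} \lVert M_{t} \rVert_{2}^{p-2} \lVert \varphi_{t} \rVert_{2}^{2} \, \mathrm{d}t$, so taking expectations annihilates the local-martingale part and leaves
\begin{equation*}
\mathbb{E} \bigl[ \lVert M_{T} \rVert_{2}^{p} \bigr] \leq \frac{p(p-1)}{2} \, \mathbb{E} \biggl[ \int_{0}^{T} \lVert M_{s} \rVert_{2}^{p-2} \lVert \varphi_{s} \rVert_{2}^{2} \, \mathrm{d}s \biggr] \leq \frac{p(p-1)}{2} \, \mathbb{E} \bigl[ (M_{T}^{\ast})^{p-2} A_{T} \bigr].
\end{equation*}
Because $t \mapsto \lVert M_{t} \rVert_{2}$ is a nonnegative submartingale, Doob's inequality gives $\mathbb{E}[(M_{T}^{\ast})^{p}] \leq ( \tfrac{p}{p-1})^{p} \, \mathbb{E}[ \lVert M_{T} \rVert_{2}^{p}]$, and H\"{o}lder's inequality with exponents $\tfrac{p}{p-2}$ and $\tfrac{p}{2}$ bounds $\mathbb{E}[(M_{T}^{\ast})^{p-2} A_{T}]$ by $\mathbb{E}[(M_{T}^{\ast})^{p}]^{(p-2)/p} \, \mathbb{E}[A_{T}^{p/2}]^{2/p}$. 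Substituting and cancelling the finite factor $\mathbb{E}[(M_{T}^{\ast})^{p}]^{(p-2)/p}$ (the case where it vanishes being trivial) leaves $\mathbb{E}[(M_{T}^{\ast})^{p}]^{2/p}$ on the left, whence raising to the power $p/2$ produces an inequality of exactly the asserted shape; the borderline case $p = 2$ is treated directly using $\mathbb{E}[ \lVert M_{T} \rVert_{2}^{2}] = \mathbb{E}[A_{T}]$ and Doob's $L^{2}$ inequality.

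The only delicate matter is the bookkeeping of constants. The crude route just outlined produces $\gamma_{p} = ( \tfrac{p}{p-1})^{p/2} \sqrt{p(p-1)/2}$, which is admissible but grows linearly in $p$; recovering the sharper, uniformly valid closed form $\gamma_{p} = \tfrac{4p}{p-1} \sqrt{p + \tfrac{1}{2}}$, with its $O(\sqrt{p})$ growth, requires a more careful split of the It\^{o} drift together with a more economical use of the Doob and Young/H\"{o}lder steps, exactly as carried out in \citet{seidler-sobukawa:03a}. As it is precisely this polynomial dependence of $\gamma_{p}$ on $p$ that matters for the applications (notably in Theorem~\ref{applyingBernstein}), I would reproduce their argument for this last part, or else simply invoke \citet[Lemma~2.2]{seidler-sobukawa:03a}. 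The conceptual content is thus the It\^{o}--Doob--bootstrap scheme above, and the main obstacle is the constant-tracking needed to arrive at the stated $\gamma_{p}$ rather than any single analytic step.
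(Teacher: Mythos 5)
The paper offers no proof of this lemma at all: it is quoted directly from \citet[Lemma~2.2]{seidler-sobukawa:03a}, precisely because the $O(\sqrt{p})$ growth of the constant there improves on what the routine It\^{o}-formula argument delivers, so your closing fallback to that citation is exactly what the paper itself does. Your self-contained It\^{o}--Doob--H\"{o}lder sketch is correct as far as it goes: the localization, the $C^{2}$ smoothness of $x \mapsto \lVert x \rVert_{2}^{p}$ for $p \geq 2$, the drift bound $\tfrac{p(p-1)}{2} \lVert M_{t} \rVert_{2}^{p-2} \lVert \varphi_{t} \rVert_{2}^{2}$, and the Doob/H\"{o}lder/cancellation steps all check out and yield the inequality with $\gamma_{p} = (\tfrac{p}{p-1})^{p/2} \sqrt{p(p-1)/2} = O(p)$. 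As you rightly flag, however, that weaker constant would not suffice where the lemma is actually used: in the proof of Theorem~\ref{applyingBernstein} one needs $\gamma_{2m}^{2} = O(m)$ so that $\mathbb{E} \big[ \lVert A_{k} \rVert_{ \infty}^{m} \mid \mathcal{F}_{(k-1) \Delta_{n}} \big]^{1/m} \leq \alpha m \nu_{c,2} \Delta_{n}$ and Stirling's formula produces the $p!$ growth demanded by the Bernstein condition \eqref{momentControl}; with $\gamma_{2m} = O(m)$ one would instead get an $O(m^{2})$ bound and that argument breaks. So the stated form of $\gamma_{p}$ is the real substance of the lemma, and both you and the paper obtain it only by invoking \citet{seidler-sobukawa:03a}, which is an acceptable resolution here.
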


With the choice in \eqref{CpConstant}, $\gamma_{p}$ is of smaller asymptotic order (as $p \to \infty$) than the constant associated with the usual Burkholder--Davis--Gundy inequalities available by application of It\^{o}'s formula (cf. the proof of Proposition~2.1 in \cite{marinelli-rockner:16a}).

\begin{proof}[Proof of Theorem \ref{applyingBernstein}]
	We decompose the log-price into the drift and volatility component $Y_{t} = Y_{t}^{ \mu} + Y_{t}^{ \sigma}$, where $Y_{t}^{ \mu} = \int_{0}^{t} \mu_{s} \mathrm{d}s$ and $Y_{t}^{ \sigma} = \int_{0}^{t} \sigma_{s} \mathrm{d} W_{s}$. By the triangle inequality,
	\begin{align} \label{equation:decomposition}
	\begin{split}
	\mathbb{P} \big( \lVert \widehat{ \Sigma}_{n} - \Sigma \rVert_{ \infty} > x \big) &\leq \mathbb{P} \Bigg( \Big\lVert \sum_{k=1}^{ \lfloor \Delta_{n}^{-1} \rfloor}( \Delta_{k}^{n} Y^{ \mu})( \Delta_{k}^{n} Y^{ \mu})^{ \top} \Big\rVert_{ \infty} > \frac{x}{5} \Bigg) \\[0.10cm]
	&+ 2 \mathbb{P} \Bigg( \Big\lVert \sum_{k=1}^{ \lfloor \Delta_{n}^{-1} \rfloor}( \Delta_{k}^{n} Y^{ \mu})( \Delta_{k}^{n} Y^{ \sigma})^{ \top} \Big\rVert_{ \infty} > \frac{x}{5} \Bigg) \\[0.10cm]
	&+ \mathbb{P} \Bigg( \Big\lVert \sum_{k=1}^{ \lfloor \Delta_{n}^{-1} \rfloor}( \Delta_{k}^{n} Y^{ \sigma})( \Delta_{k}^{n} Y^{ \sigma})^{ \top} - \int_{0}^{ \lfloor \Delta_{n}^{-1} \rfloor \Delta_{n}} c_{t} \mathrm{d}t \Big\rVert_{ \infty} > \frac{x}{5} \Bigg) \\[0.10cm]
	&+  \mathbb{P} \Bigg( \Big\lVert \int_{ \lfloor \Delta_{n}^{-1} \rfloor \Delta_{n}}^{1} c_{t} \mathrm{d}t \Big\rVert_{ \infty} > \frac{x}{5} \Bigg) \\[0.10cm]
	&\equiv a_{1} + 2a_{2} + a_{3} + a_{4}.
	\end{split}
	\end{align}
	Jensen's inequality implies	that for the first term in \eqref{equation:decomposition},
	\begin{equation} \label{bound2}
	a_{1} \leq \mathbb{P} \Bigg( \sum_{k=1}^{ \lfloor \Delta_{n}^{-1} \rfloor} \lVert \Delta_{k}^{n} Y^{ \mu} \rVert_{2}^{2} > \frac{x}{5} \Bigg) \leq \mathds{1}_{ \{x < 5 \nu_{ \mu} \Delta_{n} \}}.
	\end{equation}
	The fourth term can be bounded as:
	\begin{equation}\label{bound4}
	a_{4} \leq \mathds{1}_{ \{x < 5 \nu_{c, \infty} \Delta_{n} \}}.
	\end{equation}
	To handle the third term in \eqref{equation:decomposition}, we note it has the form $a_{3} = \sum_{k=1}^{ \lfloor \Delta_{n}^{-1} \rfloor} (A_{k} - B_{k})$, where
	\begin{equation*}
	A_{k} = ( \Delta_{k}^{n} Y^{ \sigma})( \Delta_{k}^{n} Y^{ \sigma})^{ \top}, \quad B_{k} = \int_{(k-1) \Delta_{n}}^{k \Delta_{n}} c_{t} \mathrm{d}t,
	\end{equation*}
	and that $(A_{k} - B_{k})_{k=1}^{ \lfloor \Delta_{n}^{-1} \rfloor}$ is a martingale difference sequence with respect to the filtration $(\mathcal{F}_{k \Delta_{n}})_{k=0}^{ \lfloor \Delta_{n}^{-1} \rfloor}$. Thus, we can apply Theorem \ref{theorem:bernstein} if we can control
	\begin{equation*}
	M_{p,k} \equiv \big\lVert \mathbb{E}[(A_{k} - B_{k})^{p} \mid \mathcal{F}_{(k-1) \Delta_{n}}] \big\rVert_{ \infty},
	\end{equation*}
	for integer $p \geq 2$. First note that, since $\lVert B_{k} \rVert_{ \infty} \leq \nu_{c, \infty} \Delta_{n}$, the sub-multiplicativity of $\lVert \: \cdot \: \rVert_\infty$ and the binomial theorem imply that
	\begin{equation} \label{b1}
	M_{p,k} \leq \big\lVert \mathbb{E}[A_{k}^{p} \mid \mathcal{F}_{(k-1) \Delta_{n}}] \big\rVert_{ \infty} + \sum_{i=1}^{p} \binom{p}{i} \mathbb{E} \big[ \lVert A_{k} \rVert_{ \infty}^{p-i} \mid \mathcal{F}_{(k-1) \Delta_{n}} \big]( \nu_{c, \infty} \Delta_{n})^{i}.
	\end{equation}
	We start by analyzing the second term on the right-hand side of \eqref{b1}. For an arbitrary integer $m \geq 1$ and any $\mathcal{F}_{(k-1) \Delta_{n}}$-measurable set $A$, it follows from Lemma~\ref{bdg} that
	\begin{equation*}
	\mathbb{E} \big[ \lVert A_{k} \rVert_{ \infty}^{m} \mathds{1}_{A} \big] = \mathbb{E} \bigg[ \Big\lVert \int_{(k-1) \Delta_{n}}^{k \Delta_{n}} \sigma_{t} \mathds{1}_{A} \mathrm{d}W_{t} \Big\rVert_{2}^{2m} \bigg] \leq \gamma_{2m}^{2m} \mathbb{E} \bigg[ \Big( \int_{(k-1) \Delta_{n}}^{k \Delta_{n}} \tr(c_{t}) \mathrm{d}t \Big)^{m} \mathds{1}_{A} \bigg]
	\end{equation*}
	and, thus,
	\begin{equation} \label{boundA}
	\mathbb{E}[ \lVert A_{k} \rVert_\infty^m \mid \mathcal{F}_{(k-1)\Delta_n}]^{1/m} \leq \alpha m  \nu_{c,2}\Delta_n
	\end{equation}
	for a suitable absolute constant $\alpha \geq 1$. In particular, \eqref{boundA} shows that $\mathbb{E} \big[ \lVert A_{k} \rVert_{ \infty}^{p-i} \mid \mathcal{F}_{(k-1) \Delta_{n}}] \leq ( \alpha p \nu_{c,2} \Delta_{n})^{p-i}$ for $i = 1, \dots, p$. Together with the mean value theorem and Stirling's formula, we can therefore establish that
	\begin{align} \label{firstTerm}
	\begin{split}
	\sum_{i=1}^p \binom{p}{i} \mathbb{E} \big[ \lVert A_{k} \rVert_{ \infty}^{p-i} \mid \mathcal{F}_{(k-1) \Delta_{n}} \big]( \nu_{c, \infty} \Delta_{n})^{i} &\leq ( \alpha p \nu_{c,2} \Delta_{n})^{p} \sum_{i=1}^{p} \binom{p}{i} \bigg( \frac{ \nu_{c, \infty}}{ \alpha p \nu_{c,2}} \bigg)^{i} \\[0.10cm]
	&\leq \frac{p!}{2}(e \alpha \nu_{c,2} \Delta_{n})^{p} \bigg[ \Big(1+ \frac{ \nu_{c, \infty}}{ \alpha p \nu_{c,2}} \Big)^{p} - 1 \bigg] \\[0.10cm]
	&\leq \frac{p!}{2}(2e \alpha \nu_{c,2} \Delta_{n})^{p} \frac{ \nu_{c, \infty}}{ \nu_{c,2}}.
	\end{split}
	\end{align}
	Now, look at the first term on the right-hand side of \eqref{b1}. For a fixed $u\in \mathbb{R}^d$ with $\lVert u \rVert_2 = 1$, it follows from the Cauchy--Schwarz inequality that
	\begin{align} \label{b2}
	\begin{split}
	u^{ \top} \mathbb{E} \big[A_{k}^{p} \mid \mathcal{F}_{(k-1) \Delta_{n}} \big] u &= \mathbb{E} \big[ \lVert A_{k} \rVert_{ \infty}^{p-1} (u^{ \top}( \Delta_{k}^{n} Y))^{2} \mid \mathcal{F}_{(k-1) \Delta_{n}} \big] \\[0.10cm]
	&\leq \mathbb{E} \big[ \lVert A_{k} \rVert_{ \infty}^{2(p-1)} \mid \mathcal{F}_{(k-1) \Delta_{n}} \big]^{1/2} \mathbb{E} \big[(u^{ \top} \Delta_{k}^{n} Y)^{4} \mid \mathcal{F}_{(k-1) \Delta_{n}} \bigr]^{1/2}.
	\end{split}
	\end{align}
	The first term in \eqref{b2} is handled by \eqref{boundA} with $m=2(p-1)$:
	\begin{equation} \label{b21}
	\mathbb{E} \big[ \lVert A_{k} \rVert_{ \infty}^{2(p-1)} \mid \mathcal{F}_{(k-1) \Delta_{n}} \big]^{1/2} \leq (2 \alpha(p-1) \nu_{c,2} \Delta_{n})^{p-1} \leq \frac{p!}{2} \frac{(2e \alpha \nu_{c,2} \Delta_{n})^p}{ \nu_{c,2} \Delta_{n}}.
	\end{equation}
	The second term is treated as:
	\begin{align} \label{b22}
	\begin{split}
	\mathbb{E} \big[(u^{ \top} \Delta_{k}^{n}Y)^{4} \mid \mathcal{F}_{(k-1) \Delta_{n}} \big]^{1/2} &\leq \gamma_{4}^{2} \mathbb{E} \bigg[ \Big( \int_{(k-1) \Delta_{n}}^{k \Delta_{n}} u^{ \top} c_{t} u \mathrm{d} \Big)^{2} \mid \mathcal{F}_{(k-1) \Delta_{n}} \bigg]^{1/2} \\[0.10cm]
	&\leq \gamma_{4}^{2} \nu_{c, \infty} \Delta_{n},
	\end{split}
	\end{align}
	where Lemma~\ref{bdg} is used here for the conditional expectation. Consequently, by combining \eqref{b2} -- \eqref{b22}, we get the estimate
	\begin{equation}\label{specBound}
	u^{ \top} \mathbb{E} \big[A_{k}^{p} \mid \mathcal{F}_{(k-1) \Delta_{n}}] u \leq \gamma_{4}^{2} \frac{p!}{2} (2e \alpha \nu_{c,2} \Delta_{n})^{p} \frac{ \nu_{c, \infty}}{ \nu_{c,2}}.
	\end{equation}
	Since this analysis holds for an arbitrary unit vector $u$, the right-hand side of \eqref{specBound} is an upper bound for $\big\lVert \mathbb{E}[A_{k}^{p} \mid \mathcal{F}_{(k-1) \Delta_{n}}] \big\rVert_{ \infty}$. This fact, together with \eqref{b1} and \eqref{firstTerm}, shows that
	\begin{equation}\label{momentBound}
	M_{p,k} \leq  \frac{p!}{2}( \tilde{ \alpha} \nu_{c,2} \Delta_{n})^{p} \frac{ \nu_{c, \infty}}{ \nu_{c,2}}
	\end{equation}
	for a suitable absolute constant $\tilde{\alpha} \geq 2$. Now, it follows that Theorem \ref{theorem:bernstein} is applicable with $R = \tilde{ \alpha} \nu_{c,2} \Delta_{n}$ and $C_{k} = C_{1} = \nu_{c, \infty}/ \nu_{c,2}$. Since
	\begin{equation*}
	\nu R =  \tilde{ \alpha} \nu_{c, \infty} \lfloor \Delta_{n}^{-1} \rfloor \Delta_{n} \geq \tilde{ \alpha} \nu_{c, \infty} (1- \Delta_{n}) \geq \nu_{c, \infty}
	\end{equation*}
	and $4 \nu R^{2} \leq 4 \tilde{ \alpha}^{2} \nu_{c,2} \nu_{c, \infty} \Delta_{n}$, the inequality \eqref{equation:subexponential} implies that
	\begin{equation}\label{noDrift}
	\mathbb{P} \Bigg( \Big\lVert \sum_{k=1}^{ \lfloor \Delta_{n}^{-1} \rfloor}(A_{k}-B_{k}) \Big\rVert_{ \infty} > y \Bigg) \leq \tau (y) \equiv 2d \exp \bigg(- \frac{y}{4 \tilde{ \alpha}^{2} \nu_{c,2} \nu_{c, \infty} \Delta_{n}} \min \{y , \nu_{c, \infty} \} \bigg).
	\end{equation}
	In particular, \eqref{noDrift} shows that
	\begin{equation}\label{bound1}
	a_{3} \leq \tau \Big( \frac{x}{5} \Big).
	\end{equation}
	Finally, for the second term in \eqref{equation:decomposition} we invoke the Cauchy--Schwarz inequality and consider an $x$ for which $\frac{x^{2}}{25 \nu_{ \mu} \Delta_{n}} - \nu_{c,2} \geq \frac{ \nu_{c,2}}{5 \nu_{c, \infty}}x$ to deduce the following initial bound:
	\begin{align} \label{secondTerm}
	\begin{split}
	a_{2} &\leq \mathbb{P} \bigg( \Big( \sum_{k=1}^{ \lfloor \Delta_{n}^{-1} \rfloor} \lVert \Delta_{k}^{n} Y^{ \mu} \rVert_{2}^{2} \Big) \Big( \sum_{k=1}^{ \lfloor \Delta_{n}^{-1} \rfloor} \lVert \Delta_{k}^{n} Y^{ \sigma} \rVert_{2}^{2} \Big) > \Big( \frac{x}{5} \Big)^{2} \bigg) \\[0.10cm]
	&\leq \mathbb{P} \bigg( \sum_{k=1}^{ \lfloor \Delta_{n}^{-1} \rfloor} \lVert \Delta_{k}^{n} Y^{ \sigma} \rVert_{2}^{2} > \frac{x^{2}}{25 \nu_{ \mu} \Delta_{n}} \bigg) \\[0.10cm]
	&\leq \mathbb{P} \bigg( \sum_{k=1}^{ \lfloor \Delta_{n}^{-1} \rfloor} \Big( \lVert \Delta_{k}^{n} Y^{ \sigma} \rVert_{2}^{2} - \int_{(k-1) \Delta_{n}}^{k \Delta_{n}} \tr(c_{t}) \mathrm{d}t \Big) > \frac{x^{2}}{25 \nu_{ \mu} \Delta_{n}} - \nu_{c,2} \bigg) \\[0.10cm]
	&\leq \mathbb{P} \bigg( \sum_{k=1}^{ \lfloor \Delta_{n}^{-1} \rfloor} \Big( \lVert \Delta_{k}^{n} Y^{ \sigma} \rVert_{2}^{2} - \int_{(k-1) \Delta_{n}}^{k \Delta_{n}} \tr(c_{t}) \mathrm{d}t \Big) > \frac{ \nu_{c,2}}{5 \nu_{c, \infty}}x \bigg).
	\end{split}
	\end{align}
	Next, observe that
	\begin{equation*}
	X_{k} = \lVert \Delta_{k}^{n} Y^{ \sigma} \rVert_{2}^{2} - \int_{(k-1) \Delta_{n}}^{k \Delta_{n}} \tr(c_{t}) \mathrm{d}t, \qquad k = 1, \dots, \lfloor \Delta_{n}^{-1} \rfloor,
	\end{equation*}
	is a martingale difference sequence. By \eqref{boundA}, we see that
	\begin{equation*}
	\mathbb{E} \big[ \vert X_{k} \vert^{p} \mid \mathcal{F}_{(k-1) \Delta_{n}} \big] \leq \frac{p!}{2}( \alpha \nu_{c,2} \Delta_{n})^{p},
	\end{equation*}
	for a suitable absolute constant $\alpha \geq 2$, and thus \eqref{equation:subexponential} ensures that
	\begin{equation*}
	\mathbb{P} \bigg( \Bigl\vert \sum_{k=1}^{n} X_{k} \Big\vert > y \bigg) \leq \tau \Big( \frac{ \nu_{c, \infty}}{ \nu_{c,2}}y \Big),
	\end{equation*}
	where we assume $\tilde{ \alpha}$ in $\tau$ is larger than $\alpha$. This estimate together with \eqref{secondTerm} means that
	\begin{equation}\label{bound3}
	a_{2} \leq \tau \Big( \frac{x}{5} \Big).
	\end{equation}
	By combining \eqref{bound2}, \eqref{bound4}, \eqref{bound1}, and \eqref{bound3} we conclude that
	\begin{equation} \label{finalInequality}
	\mathbb{P} \big( \lVert \widehat{ \Sigma}_{n} - \Sigma \rVert_{ \infty} > x \big) \leq 3 \tau \Big( \frac{x}{5} \Big) \leq 6d \exp \Big\{- \frac{x}{100 \tilde{ \alpha}^{2} \nu_{c,2} \nu_{c, \infty} \Delta_{n}} \min \{x , \nu_{c, \infty} \} \Big\}
	\end{equation}
	if $x \geq 5 \Delta_{n} \max\{ \nu_{ \mu}, \nu_{c, \infty} \}$ and $\frac{x^{2}}{25 \nu_{ \mu} \Delta_{n}} - \nu_{c,2} \geq \frac{ \nu_{c,2}}{5 \nu_{c, \infty}}x$. In particular, \eqref{finalInequality} holds whenever $x \geq \gamma \max \big\{ \frac{ \nu_{c,2} \nu_{ \mu} \Delta_{n}}{ \nu_{c, \infty}}, \sqrt{ \nu_{c,2} \nu_{ \mu} \Delta_{n}}, \nu_{c, \infty} \Delta_{n} \big\}$, where $\gamma$ is a sufficiently large absolute constant. \qed
\end{proof}

Note that, in the above calculations, one has to be careful in order to achieve optimal rates. Indeed, instead of the estimate \eqref{momentBound}, it might have been more natural to rely on the following (seemingly harmless) sequence of inequalities using \eqref{boundA} and Stirling's formula:
\begin{equation}\label{momentBound2}
M_{p,k} \leq  2^p\Bigl(\mathbb{E}\bigl[\lVert A_k\rVert_\infty^p\mid \mathcal{F}_{(k-1)\Delta_n}\bigr] + \mathbb{E}\bigl[\lVert B_k \rVert_\infty^p\mid \mathcal{F}_{(k-1)\Delta_n} \bigr]\Bigr)\leq p! (2e\alpha \nu_{c,2}\Delta_n)^p.
\end{equation}
However, with notation as in Theorem \ref{theorem:bernstein}, this would result in $\nu R \asymp \nu_{c,2}$, while \eqref{momentBound} satisfies $\nu R \asymp \nu_{c,\infty}$. Since $\nu R$ determines the rate and $\nu_{c,\infty}$ can be of strictly smaller order than $\nu_{c,2}$ (see the discussion in the end of Section \ref{section:concentration}), this implies that a concentration inequality based on \eqref{momentBound2} cannot be optimal.

\begin{proof}[Proof of Theorem \ref{theorem:concentration-prv}]
	Consider a fixed $\tau \in (0,\infty)$. Theorem \ref{applyingBernstein} implies the existence of an absolute constant $\alpha$ such that
	\begin{equation*}
	\mathbb{P}( 2\lVert \widehat{\Sigma}_n - \Sigma \rVert_\infty > \lambda) \leq 6d \exp \Bigl(-\frac{\lambda}{\alpha \nu_{c,2} \nu_{c,\infty} \Delta_n} \min\{\lambda,  \nu_{c,\infty}\} \Bigr)
	\end{equation*}
	as long as $\lambda \geq \alpha \max\bigl\{\frac{\nu_{c,2}\nu_{\mu} \Delta_n}{\nu_{c,\infty}}, \sqrt{\nu_{c,2}\nu_\mu \Delta_n}, \nu_{c,\infty}\Delta_n \bigr\}$. Thus, under this restriction on $\lambda$ it follows that $2\lVert \widehat{\Sigma}_n -\Sigma \rVert_\infty \leq \lambda$ with probability at least $1-e^{-\tau}$ if
	\begin{equation*}
	6d \exp \Bigl(-\frac{\lambda}{\alpha \nu_{c,2} \nu_{c,\infty} \Delta_n} \min\{\lambda, \nu_{c,\infty}\} \Bigr) \leq \exp(-\tau)
	\end{equation*}
	or, in particular, if
	\begin{equation}\label{ToShowRank}
	\lambda \geq  \gamma \max\Bigl\{\sqrt{ \nu_{c,2} \nu_{c,\infty} \Delta_n (\log(d) + \tau)}, \nu_{c,2}\Delta_n (\log(d)+ \tau),  \frac{\nu_{c,2}\nu_\mu \Delta_n}{\nu_{c,\infty}}\Bigr\}
	\end{equation}
	for a suitably chosen absolute constant $\gamma$. In view of Proposition \ref{boundOn}, the proof is complete. \qed
\end{proof}

\begin{proof}[Proof of Theorem \ref{lowerBound}]
	Under $\mathbb{P}_A$, we have that $Z_k = \Delta_k^n Y/\sqrt{\Delta_n}$, $k=1,\dots, \lfloor \Delta_n^{-1}\rfloor$, are i.i.d. Gaussian random vectors with covariance matrix $A$. Consequently, since $\lfloor \Delta_n^{-1}\rfloor \geq r^2$, \cite[Theorem 2]{lounici:14a} implies that
	\begin{equation}\label{lowerBounds1}
	\inf_{\widetilde{\Sigma}}\sup_{A\in \mathcal{C}_r} \mathbb{P}_A\Bigl(\lVert \widetilde{\Sigma}-A \rVert_2^2> \underline{\gamma} \frac{\lVert A \rVert_\infty^2 r_e(A)\rank (A)}{\lfloor \Delta_n^{-1}\rfloor}\Bigr) \geq \beta
	\end{equation}
	for suitable absolute constants $\beta \in (0,1)$ and $\underline{\gamma} \in (0,\infty)$, 
	and this proves the result. \qed
\end{proof}

\begin{proof}[Proof of Theorem \ref{theorem:rank-relation}]
	Set $\hat{r}= \rank (\widehat{\Sigma}^\lambda_n)$. To show the upper bound in \eqref{rank1} it suffices to argue that $\Sigma$ has at least $\hat{r}$ singular values which are larger than $(\lambda-\bar{\lambda})/2$ or, equivalently, $s_{\hat{r}}(\Sigma)\geq (\lambda - \bar{\lambda})/2$. Observe first that the representation \eqref{RVthreshold} implies $s_{\hat{r}}(\widehat{\Sigma}_n)>\lambda/2$ (recall that $s_k(A)$ refers to the $k$th largest singular value of $A$). By using this inequality together with the fact that $A\mapsto s_k(A)$ is Lipschitz continuous with constant $1$ with respect to the spectral norm, we establish that
	\begin{equation*}
	s_{\hat{r}} (\Sigma) \geq s_{\hat{r}} (\widehat{\Sigma}_n)- \vert s_{\hat{r}} (\widehat{\Sigma}_n)-s_{\hat{r}} (\Sigma)\vert \geq \frac{\lambda-\bar{\lambda}}{2}.
	\end{equation*}
	In order to prove the lower bound of \eqref{rank1}, we need to show that $s_{r_\lambda}(\widehat{\Sigma}^\lambda_n) >0$ with $r_\lambda= \rank (\Sigma;\lambda)$. However, this follows immediately from the following sequence of inequalities:
	\begin{equation*}
	s_{r_\lambda} (\widehat{\Sigma}^\lambda_n) \geq s_{r_\lambda}(\widehat{\Sigma}_n)-\frac{\lambda}{2} \geq s_{r_\lambda} (\Sigma)-\lVert \widehat{\Sigma}_n-\Sigma \rVert_\infty - \frac{\lambda}{2} \geq \frac{\lambda-\bar{\lambda}}{2}>0.
	\end{equation*}
	The last part of the result is a direct consequence of the inequality \eqref{rank1}, since $\rank(\Sigma;x) = \rank (\Sigma)$ for $x \in (0,s]$. This finishes the proof. \qed
\end{proof}

\begin{proof}[Proof of Corollary \ref{rankConcentration}]
	For an arbitrary number $\bar{\lambda}\in (0,\infty)$, Proposition \ref{boundOn} and Theorem \ref{theorem:rank-relation} imply that both
	\begin{equation}\label{performance:rank}
	\lVert\widehat{\Sigma}^\lambda_n -\Sigma \rVert_2^2 \leq 3 \lambda^2 \rank (\Sigma)
	\end{equation}
	and
	\begin{equation}\label{performance:rank2}
	\rank (\Sigma;\lambda)\leq\rank (\widehat{\Sigma}^\lambda_n) \leq \rank (\Sigma;\tfrac{1}{2}(\lambda - \bar{\lambda}))
	\end{equation}
	on the set $A(\bar{\lambda})\equiv \{2\lVert \widehat{\Sigma}_n-\Sigma \rVert_\infty \leq \bar{\lambda}\}$ whenever $\lambda > \bar{\lambda}$. By the proof of Theorem \ref{theorem:concentration-prv}, in particular \eqref{ToShowRank} together with the fact that $\nu_\mu \leq \nu_{c,\infty}$, it follows that for any $\tau \in (0,\infty)$, the set $A(\bar{\lambda})$ has probability at least $1-\exp(-\tau)$ if
	\begin{equation}\label{rankstep}
	\bar{\lambda} =  \bar{\gamma} \max\Bigl\{\sqrt{ \nu_{c,2} \nu_{c,\infty} \Delta_n (\log(d) + \tau)}, \nu_{c,2}\Delta_n (\log(d)+ \tau)\Bigr\}
	\end{equation}
	for a sufficiently large absolute constant $\bar{\gamma}$. By considering $\tau=\log(d)$ and using that $\Delta_n^{-1}\geq 2 \frac{\nu_{c,2}}{\nu_{c,\infty}} \log(d)$, the maximum in \eqref{rankstep} equals its first term. With this choice of $\tau$, the set $A(\bar{\lambda})$ has probability at least $1-d^{-1}$, and $(\lambda-\bar{\lambda})/2\geq \delta \lambda$ when $\lambda$ is given by \eqref{rankRegularization} as long as we choose $\gamma \geq \bar{\gamma}\sqrt{2}/(1-2\delta)$. Consequently, by plugging the specific values of $\lambda$ and $\bar{\lambda}$ into \eqref{performance:rank} and \eqref{performance:rank2}, we have established the first part of the result. The last part of the result follows immediately from Theorem \ref{theorem:rank-relation}. \qed
\end{proof}

\begin{proof}[Proof of Theorem \ref{localVol}]
	First, let $\delta_n\in \{0,1\}$ be defined such that
	\begin{equation}\label{deltaN}
	\lfloor (t+h_n)\Delta_n^{-1} \rfloor - \lfloor t \Delta_n^{-1}\rfloor = \lfloor (h_n+\delta_n\Delta_n)\Delta_n^{-1} \rfloor,
	\end{equation}
	and set $\tau (s) = \varepsilon s + \lfloor t \Delta_n^{-1}\rfloor \Delta_n$ with $\varepsilon = h_n + \delta_n \Delta_n$. Clearly, $\bar{Y}_s = Y_{\tau (s)}$, $s\in [0,1]$, is a diffusion with drift $\bar{\mu}_s = \varepsilon \mu_{\tau (s)}$, volatility $\bar{\sigma}_s = \sqrt{\varepsilon} \sigma_{\tau (s)}$, and driving (standard) Brownian motion $\bar{W}_s = (W_{\tau (s)}-W_{\tau (0)})/\sqrt{\varepsilon}$. Moreover, its RV at time $1$ based on a sampling frequency of $\bar{\Delta}_n = \Delta_n/\varepsilon$ coincides with $\widehat{\Sigma}_n(t;t+h_n)$:
	\begin{equation*}
	\widehat{\Sigma}_n(t;t+h_n) = \sum_{k=1}^{\lfloor \bar{\Delta}^{-1}_n \rfloor} (\bar{\Delta}^n_k \bar{Y})(\bar{\Delta}^n_k \bar{Y})^\top,\qquad \bar{\Delta}^n_k \bar{Y} = \bar{Y}_{k\bar{\Delta}_n}-\bar{Y}_{(k-1)\bar{\Delta}_n}.
	\end{equation*}
	Hence, the first step is to apply Corollary \ref{consequenceOfPenConcentration} to the process $(\bar{Y}_s)_{s\in [0,1]}$ with sampling frequency $\bar{\Delta}_n$. To this end, note that
	\begin{equation*}
	\sup_{s\leq 1}\lVert \bar{\mu}_s \rVert^2_2\leq \varepsilon \nu_\mu,\quad \sup_{s\leq 1}\tr (\bar{\sigma}_s\bar{\sigma}_s^\top)\leq \varepsilon \nu_{c,2},\quad \text{and}\quad \sup_{s\leq 1}\lVert \bar{\sigma}_s\bar{\sigma}_s^\top\rVert_\infty \leq \varepsilon \nu_{c,\infty}.
	\end{equation*}
	Since we also have $\varepsilon \in [h_n,2h_n]$ and $h_n/\Delta \geq 2\frac{\nu_{c,2}}{\nu_{c,\infty}}\log(d)$, it follows that the assumptions of Corollary \ref{consequenceOfPenConcentration} are satisfied and we deduce that
	\begin{equation}\label{local1}
	\bigl\lVert \widehat{\Sigma}^\lambda_n (t;t+h_n)-\bar{\Sigma} \bigr\rVert_2^2 \leq 3 \gamma^2 \nu_{c,2}\nu_{c,\infty}
	h_n\Delta_n\rank (\bar{\Sigma}) \log(d)
	\end{equation}
	with probability at least $1-d^{-1}$ when $\lambda$ meets \eqref{localLambda} and $\gamma$ is a sufficiently large absolute constant. Here $\bar{\Sigma}$ denotes the QV of $(\bar{Y}_s)_{s\in [0,1]}$ at time $1$. By dividing both sides of the inequality \eqref{local1} by $h_n^2$,
	\begin{equation}\label{localVol:term1}
	\Bigl\lVert \hat{c}_n^\lambda (t) -\frac{\bar{\Sigma}}{h_n} \Bigr\rVert_2^2 \leq 3 \gamma^2\frac{\nu_{c,2}\nu_{c,\infty} \Delta_n \rank (\bar{\Sigma}) \log(d)}{h_n}.
	\end{equation}
	The error $\lVert h_n^{-1}\bar{\Sigma}- \varepsilon^{-1}\bar{\Sigma}\rVert_2$ can be bounded in the following way using that $h_n/\Delta_n\geq \frac{\nu_{c,2}}{2\nu_{c,\infty}}$:
	\begin{equation}\label{localVol:term3}
	\Bigl\lVert \frac{\bar{\Sigma}}{h_n}- \frac{\bar{\Sigma}}{\varepsilon}\Bigr\rVert_2^2 \leq \Bigl(\frac{\nu_{c,2}\Delta_n}{h_n}\Bigr)^2 \leq \frac{2\nu_{c,2}\nu_{c,\infty}\Delta_n}{h_n}.
	\end{equation}
	Now, for any given $\beta \in (1,\infty)$, we want to argue that $\lVert \varepsilon^{-1}\bar{\Sigma}-c_t \rVert_2$ is small with probability $1-\beta^{-1}$. To do so, note initially that $t\in A_n\equiv [\lfloor t \Delta_n^{-1}\rfloor \Delta_n, \lfloor t \Delta_n^{-1}\rfloor \Delta_n+ \varepsilon ]$ and that, for any $s,u\in A_n$, we have
	$\nu_{c,\psi} \geq  \lVert c_s-c_u \rVert_\psi/\sqrt{2h_n}$. Using these two facts, and by relying on Jensen's and Markov's inequality, we do the following computations for an arbitrary $x>0$:
	\begin{align*}
	\mathbb{P}\Bigl(\Bigl\lVert\frac{\bar{\Sigma}}{\varepsilon}- c_t\Bigr\rVert_2>x\Bigr)  &\leq \mathbb{P}\Bigl(\frac{1}{\varepsilon}\int_{A_n} \lVert c_s-c_t\rVert_2 \mathrm{d} s >x\Bigr)\\
	&\leq \mathbb{P}\Bigl(\frac{1}{\varepsilon}\int_{A_n} \psi\Bigl(\frac{\lVert c_s-c_t\rVert_2}{\lVert c_s-c_t \rVert_\psi}\Bigr) \mathrm{d} s >\psi \Bigl(\frac{x}{\nu_{c,\psi}\sqrt{2h_n}}\Bigr)\Bigr).\\
	&\leq \psi \Bigl(\frac{x}{\nu_{c,\psi}\sqrt{2h_n}}\Bigr)^{-1}.
	\end{align*}
	It follows that, with probability at least $1-\beta^{-1}$,
	\begin{equation}\label{localVol:term2}
	\Bigl\lVert \frac{\bar{\Sigma}}{\varepsilon}-c_t \Bigr\rVert_2^2 \leq 2h_n\nu_{c,\psi}^2\psi^{-1}(\beta)^2.
	\end{equation}
	By choosing $\beta = \psi (\sqrt{\log(d)})$ and combining \eqref{localVol:term1} -- \eqref{localVol:term2} we conclude that, with probability at least $1-d^{-1}-\psi (\sqrt{\log(d)})^{-1}$,
	\begin{equation*}
	\lVert \hat{c}^\lambda_n(t) - c_t \rVert_2^2 \leq \kappa \gamma^2 \Bigl(\frac{\nu_{c,2}\nu_{c,\infty}\Delta_n\rank (\bar{\Sigma}) }{h_n} + h_n \nu_{c,\psi}^2\Bigr) \log(d)
	\end{equation*}
	for a suitably chosen absolute constant $\kappa$. Since $A_n \subseteq [t-\Delta_n,t+h_n+\Delta_n]$ and $h_n \geq 2 \Delta_n$, it follows that $\rank (\bar{\Sigma})\leq \rank (\Sigma (t-\tfrac{h_n}{2};t+\tfrac{3h_n}{2}))$, and the proof is complete. \qed
\end{proof}

\begin{proof}[Proof of Theorem \ref{localVol:rank}]
	Similarly to the proof of Theorem \ref{localVol}, we consider the time-changed process $\bar{Y}_s = Y_{\tau (s)}$, $s\in [0,1]$, and apply Corollary \ref{rankConcentration} to deduce that, with probability at least $1-d^{-1}$,
	\begin{equation*}
	\Bigl\lVert \hat{c}_n^\lambda (t) -\frac{\bar{\Sigma}}{h_n} \Bigr\rVert_2^2 \leq 3\gamma^2\frac{\nu_{c,2}\nu_{c,\infty} \Delta_n \rank (\bar{\Sigma}) \log(d)}{h_n}
	\end{equation*}
	and
	\begin{equation}\label{rankProof}
	\rank (\bar{\Sigma};\lambda)\leq \rank (\hat{c}_n^\lambda (t)) \leq \rank (\bar{\Sigma};2\delta \lambda).
	\end{equation}
	Here we recall that $\tau (s) = \varepsilon s + \lfloor t \Delta_n^{-1}\rfloor \Delta_n$ and $\varepsilon = h_n + \delta_n \Delta_n$, where $\delta_n\in \{0,1\}$ is chosen such that \eqref{deltaN} holds. By following the exact same arguments as in the proof of Theorem \ref{localVol} we obtain the estimate
	\begin{equation*}
	\lVert \hat{c}^\lambda_n(t) - c_t \rVert_2^2 \leq \kappa\gamma^2 \Bigl(\frac{\nu_{c,2}\nu_{c,\infty} \Delta_n\rank (\bar{\Sigma}) }{h_n} + h_n \nu_{c,\psi}^2\Bigr) \log(d),
	\end{equation*}
	which applies with probability at least $1-d^{-1}-\psi (\sqrt{\log(d)})^{-1}$ for a suitably chosen absolute constant $\kappa$. The inequality \eqref{rankInequality} follows immediately from the fact that $\rank (\bar{\Sigma})\leq \rank (\Sigma(t-\tfrac{h_n}{2};t+\tfrac{3h_n}{2}))$. To establish the bounds \eqref{rankBounds_theorem} on $\rank (\hat{c}^\lambda_n(t))$ note initially that, from the proof of Theorem \ref{localVol} (particularly, \eqref{localVol:term2}), we may in fact assume that
	\begin{equation*}
	\Bigl\lVert \frac{\bar{\Sigma}}{\varepsilon}-c_t \Bigr\rVert_2^2 \leq 2 h_n \nu_{c,\psi}^2 \log(d)
	\end{equation*}
	on the event that we are considering. Consequently, since singular values are Lipschitz continuous with constant $1$ with respect to $\lVert\: \cdot \: \rVert_2$,
	\begin{equation*}
	\Bigl\vert s_k\Bigl(\frac{\bar{\Sigma}}{\varepsilon} \Bigr)- s_k(c_t) \Bigr\vert \leq \nu_{c,\psi}\sqrt{2h_n\log(d)}
	\end{equation*}
	for $k=1,\dots, d$. By using this observation, the fact that $\varepsilon \in [h_n,2h_n]$, and the explicit expression for $\lambda$ the following two implications can be deduced:
	\begin{align*}
	s_k(\bar{\Sigma}) &\geq 2\delta \lambda\quad &\Longrightarrow\quad s_k (c_t) &\geq \delta \gamma\Bigl(\sqrt{\frac{ \nu_{c,2}\nu_{c,\infty}\Delta_n}{h_n}}-\nu_{c,\psi}\sqrt{h_n}\Bigr)\sqrt{\log(d)},\\
	s_k( \bar{\Sigma}) &< \lambda \quad &\Longrightarrow \quad s_k (c_t) &<\gamma \Bigl(\sqrt{\frac{ \nu_{c,2}\nu_{c,\infty}\Delta_n}{h_n}}+\nu_{c,\psi}\sqrt{h_n}\Bigr)\sqrt{\log(d)}.
	\end{align*}
	We have also imposed the innocent assumption that $\gamma$ is chosen such that $\delta\gamma \geq \sqrt{2}$.
	From these two implications we conclude that $\rank (\bar{\Sigma};2\delta\lambda)\leq \rank (c_t;\underline{\varepsilon})$ and $\rank (\bar{\Sigma};\lambda)\geq \rank (c_t;\overline{\varepsilon})$, which (in view of \eqref{rankProof}) establishes \eqref{rankBounds_theorem}. The last statement in the result is a direct consequence of \eqref{rankBounds_theorem}, and thus the proof is complete. \qed
\end{proof}

\clearpage


\renewcommand{\baselinestretch}{1.0}
\small
\bibliographystyle{rfs}
\bibliography{userref}

@ARTICLE{ait-sahalia-xiu:17a,
 AUTHOR = {Y. A\"{i}t-Sahalia and D. Xiu},
 YEAR = {2017},
 TITLE = {{Using principal component analysis to estimate a high dimensional factor model with high-frequency data}},
 JOURNAL = {Journal of Econometrics},
 VOLUME = {201},
 NUMBER = {2},
 PAGES = {384--399}
}

@ARTICLE{ait-sahalia-xiu:19b,
 AUTHOR = {Y. A\"{i}t-Sahalia and D. Xiu},
 YEAR = {2019},
 TITLE = {{Principal component analysis of high-frequency data}},
 JOURNAL = {Journal of the American Statistical Association},
 VOLUME = {114},
 NUMBER = {525},
 PAGES = {287--303}
}

@ARTICLE{andersen-bollerslev:98a,
 AUTHOR = {T. G. Andersen and T. Bollerslev},
 YEAR = {1998},
 TITLE = {{Answering the skeptics: Yes, standard volatility models do provide accurate forecasts}},
 JOURNAL = {International Economic Review},
 VOLUME = {39},
 NUMBER = {4},
 PAGES = {885--905}
}

@ARTICLE{andersen-bollerslev-diebold-labys:03a,
 AUTHOR = {T. G. Andersen and T. Bollerslev and F. X. Diebold and P. Labys},
 YEAR = {2003},
 TITLE = {{Modeling and forecasting realized volatility}},
 JOURNAL = {Econometrica},
 VOLUME = {71},
 NUMBER = {2},
 PAGES = {579--625}
}

@ARTICLE{argyriou-evgeniou-pontil:08a,
 AUTHOR = {A. Argyriou and T. Evgeniou and M. Pontil},
 YEAR = {2008},
 TITLE = {{Convex multi-task feature learning}},
 JOURNAL = {Machine Learning},
 VOLUME = {73},
 NUMBER = {3},
 PAGES = {243--272}
}

@ARTICLE{bach:08a,
 AUTHOR = {F. R. Bach},
 YEAR = {2008},
 TITLE = {{Consistency of trace norm minimization}},
 JOURNAL = {Journal of Machine Learning Research},
 VOLUME = {9},
 NUMBER = {35},
 PAGES = {1019--1048}
}

@INCOLLECTION{barndorff-nielsen-graversen-jacod-podolskij-shephard:06a,
 AUTHOR = {O. E. Barndorff-Nielsen and S. E. Graversen and J. Jacod and M. Podolskij and N. Shephard},
 YEAR = {2006},
 TITLE = {{A central limit theorem for realized power and bipower variations of continuous semimartingales}},
 BOOKTITLE = {From Stochastic Calculus to Mathematical Finance: The Shiryaev Festschrift},
 EDITOR = {Y. Kabanov and R. Lipster and J. Stoyanov},
 PAGES = {33--68},
 PUBLISHER = {Springer},
 ADDRESS = {Berlin}
}

@ARTICLE{barndorff-nielsen-shephard:02a,
 AUTHOR = {O. E. Barndorff-Nielsen and N. Shephard},
 YEAR = {2002},
 TITLE = {{Econometric analysis of realized volatility and its use in estimating stochastic volatility models}},
 JOURNAL = {Journal of the Royal Statistical Society: Series B},
 VOLUME = {64},
 NUMBER = {2},
 PAGES = {253--280}
}

@ARTICLE{barndorff-nielsen-shephard:04a,
 AUTHOR = {O. E. Barndorff-Nielsen and N. Shephard},
 YEAR = {2004},
 TITLE = {{Econometric analysis of realized covariation: High frequency based covariance, regression, and correlation in financial economics}},
 JOURNAL = {Econometrica},
 VOLUME = {72},
 NUMBER = {3},
 PAGES = {885--925}
}

@ARTICLE{cai-hu-li-zheng:20a,
 AUTHOR = {T. T. Cai and J. Hu and Y. Li and X. Zheng},
 YEAR = {2020},
 TITLE = {{High-dimensional minimum variance portfolio estimation based on high-frequency data}},
 JOURNAL = {Journal of Econometrics},
 VOLUME = {214},
 NUMBER = {2},
 PAGES = {482--494}
}

@ARTICLE{candes-recht:09a,
 AUTHOR = {E. J. Cand\`{e}s and B. Recht},
 YEAR = {2009},
 TITLE = {{Exact matrix completion via convex optimization}},
 JOURNAL = {Foundations of Computational Mathematics},
 VOLUME = {9},
 NUMBER = {6},
 PAGES = {717--772}
}

@ARTICLE{christensen-podolskij-thamrongrat-veliyev:17a,
 AUTHOR = {K. Christensen and M. Podolskij and N. Thamrongrat and B. Veliyev},
 YEAR = {2017},
 TITLE = {{Inference from high-frequency data: A subsampling approach}},
 JOURNAL = {Journal of Econometrics},
 VOLUME = {197},
 NUMBER = {2},
 PAGES = {245--272}
}

@BOOK{clarke:90a,
 AUTHOR = {F. H. Clarke},
 YEAR = {1990},
 TITLE = {{Optimization and Nonsmooth Analysis}},
 EDITION = {1st},
 PUBLISHER = {Society for Industrial and Applied Mathematics},
 ADDRESS = {Philadelphia}
}

@ARTICLE{delbaen-schachermayer:94a,
 AUTHOR = {F. Delbaen and W. Schachermayer},
 YEAR = {1994},
 TITLE = {{A general version of the fundamental theorem of asset pricing}},
 JOURNAL = {Mathematische Annalen},
 VOLUME = {300},
 NUMBER = {1},
 PAGES = {463--520}
}

@ARTICLE{diop-jacod-todorov:13a,
 AUTHOR = {A. Diop and J. Jacod and V. Todorov},
 YEAR = {2013},
 TITLE = {{Central limit theorems for approximate quadratic variations of pure jump It\^{o} semimartingales}},
 JOURNAL = {Stochastic Processes and their Applications},
 VOLUME = {123},
 NUMBER = {3},
 PAGES = {839--886}
}

@ARTICLE{fissler-podolskij:17a,
 AUTHOR = {T. Fissler and M. Podolskij},
 YEAR = {2017},
 TITLE = {{Testing the maximal rank of the volatility process for continuous diffusions observed with noise}},
 JOURNAL = {Bernoulli},
 VOLUME = {23},
 NUMBER = {4B},
 PAGES = {3021--3066}
}

@ARTICLE{hautsch-kyj-oomen:12a,
 AUTHOR = {N. Hautsch and L. M. Kyj and R. C. A. Oomen},
 YEAR = {2012},
 TITLE = {{A blocking and regularization approach to high dimensional realized covariance estimation}},
 JOURNAL = {Journal of Applied Econometrics},
 VOLUME = {27},
 NUMBER = {4},
 PAGES = {625--645}
}

@MISC{heiny-podolskij:20a,
 AUTHOR = {J. Heiny and M. Podolskij},
 YEAR = {2020},
 TITLE = {{On estimation of quadratic variation for multivariate pure jump semimartingales}},
 NOTE = {preprint arXiv:2009.02786}
}

@ARTICLE{heston:93a,
 AUTHOR = {S. L. Heston},
 YEAR = {1993},
 TITLE = {{A closed-form solution for options with stochastic volatility with applications to bond and currency options}},
 JOURNAL = {Review of Financial Studies},
 VOLUME = {6},
 NUMBER = {2},
 PAGES = {327--343}
}

@ARTICLE{hoerl-kennard:70a,
 AUTHOR = {A. E. Hoerl and R. W. Kennard},
 YEAR = {1970},
 TITLE = {{Ridge regression: Biased estimation for nonorthogonal problems}},
 JOURNAL = {Technometrics},
 VOLUME = {12},
 NUMBER = {1},
 PAGES = {55--67}
}

@TECHREPORT{jacod:94a,
 AUTHOR = {J. Jacod},
 TITLE = {{Limit of random measures associated with the increments of a Brownian semimartingale}},
 INSTITUTION = {Universit\'{e} Pierre et Marie Curie, Paris},
 YEAR = {1994},
 TYPE = {{Preprint number 120, Laboratoire de Probabiliti\'{e}s}}
}

@ARTICLE{jacod:08a,
 AUTHOR = {J. Jacod},
 YEAR = {2008},
 TITLE = {Asymptotic properties of realized power variations and related functionals of semimartingales},
 JOURNAL = {Stochastic Processes and their Applications},
 VOLUME = {118},
 NUMBER = {4},
 PAGES = {517--559}
}

@ARTICLE{jacod-lejay-talay:08a,
 AUTHOR = {J. Jacod and A. Lejay and D. Talay},
 YEAR = {2008},
 TITLE = {{Estimation of the Brownian dimension of a continuous It\^{o} process}},
 JOURNAL = {Bernoulli},
 VOLUME = {14},
 NUMBER = {2},
 PAGES = {469--498}
}

@ARTICLE{jacod-podolskij:13a,
 AUTHOR = {J. Jacod and M. Podolskij},
 YEAR = {2013},
 TITLE = {{A test for the rank of the volatility process: The random perturbation approach}},
 JOURNAL = {Annals of Statistics},
 VOLUME = {41},
 NUMBER = {5},
 PAGES = {2391--2427}
}

@ARTICLE{jacod-podolskij:18a,
 AUTHOR = {J. Jacod and M. Podolskij},
 YEAR = {2018},
 TITLE = {{On the minimal number of driving L\'{e}vy motions in a multivariate price model}},
 JOURNAL = {Journal of Applied Probability},
 VOLUME = {55},
 NUMBER = {3},
 PAGES = {823--833}
}

@BOOK{jacod-protter:12a,
 AUTHOR = {J. Jacod and P. E. Protter},
 YEAR = {2012},
 TITLE = {{Discretization of Processes}},
 EDITION = {2nd},
 PUBLISHER = {Springer},
 ADDRESS = {Berlin}
}

@ARTICLE{kalnina:11a,
 AUTHOR = {I. Kalnina},
 YEAR = {2011},
 TITLE = {{Subsampling high frequency data}},
 JOURNAL = {Journal of Econometrics},
 VOLUME = {161},
 NUMBER = {2},
 PAGES = {262--283}
}

@ARTICLE{koltchinskii-lounici-tsybakov:11a,
 AUTHOR = {V. Koltchinskii and K. Lounici and A. B. Tsybakov},
 YEAR = {2011},
 TITLE = {{Nuclear-norm penalization and optimal rates for noisy low-rank matrix completion}},
 JOURNAL = {Annals of Statistics},
 VOLUME = {39},
 NUMBER = {5},
 PAGES = {2302--2329}
}

@ARTICLE{kong:17a,
 AUTHOR = {X.-B. Kong},
 YEAR = {2017},
 TITLE = {{On the number of common factors underlying large panel high-frequency data}},
 JOURNAL = {Biometrika},
 VOLUME = {104},
 NUMBER = {2},
 PAGES = {397--410}
}

@ARTICLE{kong:20a,
 AUTHOR = {X.-B. Kong},
 YEAR = {2020},
 TITLE = {{A random-perturbation-based rank estimator of the number of factors}},
 JOURNAL = {Biometrika},
 VOLUME = {107},
 NUMBER = {2},
 PAGES = {505--511}
}

@ARTICLE{lounici:14a,
 AUTHOR = {K. Lounici},
 YEAR = {2014},
 TITLE = {{High-dimensional covariance matrix estimation with missing observations}},
 JOURNAL = {Bernoulli},
 VOLUME = {20},
 NUMBER = {3},
 PAGES = {1029--1058}
}

@ARTICLE{lunde-shephard-sheppard:16a,
 AUTHOR = {A. Lunde and N. Shephard and K. Sheppard},
 YEAR = {2016},
 TITLE = {{Econometric analysis of vast covariance matrices using composite realized kernels and their application to portfolio choice}},
 JOURNAL = {Journal of Business and Economic Statistics},
 VOLUME = {34},
 NUMBER = {4},
 PAGES = {504--518}
}

@ARTICLE{mancini:09a,
 AUTHOR = {C. Mancini},
 YEAR = {2009},
 TITLE = {{Non-parametric threshold estimation for models with stochastic diffusion coefficient and jumps}},
 JOURNAL = {Scandinavian Journal of Statistics},
 VOLUME = {36},
 NUMBER = {2},
 PAGES = {270--296}
}

@ARTICLE{marinelli-rockner:16a,
 AUTHOR = {C. Marinelli and M. R\"{o}ckner},
 YEAR = {2016},
 TITLE = {{On the maximal inequalities of Burkholder, Davis and Gundy}},
 JOURNAL = {Expositiones Mathematicae},
 VOLUME = {34},
 NUMBER = {1},
 PAGES = {1--26}
}

@ARTICLE{minsker:17a,
 AUTHOR = {S. Minsker},
 YEAR = {2017},
 TITLE = {{On some extensions of Bernstein's inequality for self-adjoint operators}},
 JOURNAL = {Statistics and Probability Letters},
 VOLUME = {127},
 NUMBER = {1},
 PAGES = {111--119}
}

@ARTICLE{negahban-wainwright:11a,
 AUTHOR = {S. Negahban and M. J. Wainwright},
 YEAR = {2011},
 TITLE = {{Estimation of (near) low-rank matrices with noise and high-dimensional scaling}},
 JOURNAL = {Annals of Statistics},
 VOLUME = {39},
 NUMBER = {2},
 PAGES = {1069--1097}
}

@ARTICLE{pelger:19a,
 AUTHOR = {M. Pelger},
 YEAR = {2019},
 TITLE = {{Large-dimensional factor modeling based on high-frequency observations}},
 JOURNAL = {Journal of Econometrics},
 VOLUME = {208},
 NUMBER = {1},
 PAGES = {23--42}
}

@BOOK{politis-romano-wolf:99a,
 AUTHOR = {D. N. Politis and J. P. Romano and M. Wolf},
 YEAR = {1999},
 TITLE = {{Subsampling}},
 EDITION = {1st},
 PUBLISHER = {Springer},
 ADDRESS = {Berlin}
}

@ARTICLE{recht-fazel-parrilo:10a,
 AUTHOR = {B. Recht and M. Fazel and P. A. Parrilo},
 YEAR = {2010},
 TITLE = {{Guaranteed minimum-rank solutions of linear matrix equations via nuclear norm minimization}},
 JOURNAL = {SIAM Review},
 VOLUME = {52},
 NUMBER = {3},
 PAGES = {471--501}
}

@ARTICLE{reiss-todorov-tauchen:15a,
 AUTHOR = {M. Reiss and V. Todorov and G. Tauchen},
 YEAR = {2015},
 TITLE = {{Nonparametric test for a constant beta between It\^{o} semi-martingales based on high-frequency data}},
 JOURNAL = {Stochastic Processes and their Applications},
 VOLUME = {125},
 NUMBER = {8},
 PAGES = {2955--2988}
}

@ARTICLE{ross:76a,
 AUTHOR = {S. A. Ross},
 YEAR = {1976},
 TITLE = {{The arbitrage theory of capital asset pricing}},
 JOURNAL = {Journal of Economic Theory},
 VOLUME = {13},
 NUMBER = {3},
 PAGES = {341--360}
}

@ARTICLE{seidler-sobukawa:03a,
 AUTHOR = {J. Seidler and T. Sobukawa},
 YEAR = {2003},
 TITLE = {{Exponential integrability of stochastic convolutions}},
 JOURNAL = {Journal of the London Mathematical Society},
 VOLUME = {67},
 NUMBER = {1},
 PAGES = {245--258}
}

@ARTICLE{tibshirani:96a,
 AUTHOR = {R. Tibshirani},
 YEAR = {1996},
 TITLE = {{Regression shrinkage and selection via the lasso}},
 JOURNAL = {Journal of the Royal Statistical Society: Series B},
 VOLUME = {58},
 NUMBER = {1},
 PAGES = {267--288}
}

@ARTICLE{tropp:11a,
 AUTHOR = {J. Tropp},
 YEAR = {2011},
 TITLE = {{Freedman's inequality for matrix martingales}},
 JOURNAL = {Electronic Communications in Probability},
 VOLUME = {16},
 NUMBER = {1},
 PAGES = {262--270}
}

@ARTICLE{tropp:12a,
 AUTHOR = {J. A. Tropp},
 YEAR = {2012},
 TITLE = {{User-friendly tail bounds for sums of random matrices}},
 JOURNAL = {Foundations of Computational Mathematics},
 VOLUME = {12},
 NUMBER = {4},
 PAGES = {389--434}
}

@ARTICLE{tropp:15a,
 AUTHOR = {J. A. Tropp},
 YEAR = {2015},
 TITLE = {{An introduction to matrix concentration inequalities}},
 JOURNAL = {Foundations and Trends{\textregistered} in Machine Learning},
 VOLUME = {8},
 NUMBER = {1--2},
 PAGES = {1--230}
}

@INCOLLECTION{vershynin:10a,
 AUTHOR = {R. Vershynin},
 YEAR = {2010},
 TITLE = {{Introduction to the non-asymptotic analysis of random matrices}},
 BOOKTITLE = {Compressed sensing: Theory and Applications},
 EDITOR = {Y. C. Eldar and G. Kutyniok},
 PAGES = {210--268},
 PUBLISHER = {Cambridge University Press},
 ADDRESS = {Cambridge}
}

@ARTICLE{wang-zou:10a,
 AUTHOR = {Y. Wang and J. Zou},
 YEAR = {2010},
 TITLE = {{Vast volatility matrix estimation for high-frequency financial data}},
 JOURNAL = {Annals of Statistics},
 VOLUME = {38},
 NUMBER = {2},
 PAGES = {943--978}
}

@ARTICLE{watson:92a,
 AUTHOR = {G. A. Watson},
 YEAR = {1992},
 TITLE = {{Characterization of the subdifferential of some matrix norms}},
 JOURNAL = {Linear Algebra and its Applications},
 VOLUME = {170},
 NUMBER = {1},
 PAGES = {33--45}
}

@ARTICLE{zheng-li:11a,
 AUTHOR = {X. Zheng and Y. Li},
 YEAR = {2011},
 TITLE = {{On the estimation of integrated covariance matrices of high dimensional diffusion processes}},
 JOURNAL = {Annals of Statistics},
 VOLUME = {39},
 NUMBER = {6},
 PAGES = {3121--3151}
}

\end{document}